\newcommand{\Dln}{{\ensuremath{\textsf{DL}_\textsf{nr}}}\xspace}
\newcommand{\qedex}{\hfill$\dashv$}
\begin{document}

\title{Living Without Beth and Craig: \\ Definitions and Interpolants in Description and Modal Logics \\ with Nominals and Role Inclusions}


\author{Alessandro Artale$^1$ \and
  Jean Christoph Jung$^2$ \and Andrea
  Mazzullo$^1$ \and Ana
Ozaki$^3$ \and Frank Wolter$^4$}

\institute{$^1$ Free University of Bozen-Bolzano\qquad $^2$ TU Dortmund University \\ $^3$ University of Bergen \qquad $^4$ University of
Liverpool} 


%

\maketitle

\begin{abstract}
The Craig interpolation property (CIP) states that an interpolant for
an implication exists iff it is valid. The projective Beth
definability property (PBDP) states that an explicit definition exists
iff a formula stating implicit definability is valid.  Thus, the  CIP
and PBDP reduce potentially hard existence problems to entailment in
the underlying logic.  
Description (and modal) logics with nominals and/or role inclusions do
not enjoy the CIP nor the PBDP,  
but interpolants and explicit definitions have many
applications, in particular in concept learning, ontology
engineering, and ontology-based data management. 
In this article we show that, even without Beth and Craig, the existence of interpolants
and explicit definitions is decidable in description logics with
nominals and/or role inclusions such as $\mathcal{ALCO}$,
$\mathcal{ALCH}$ and $\mathcal{ALCHOI}$ and corresponding hybrid modal
logics. However, living without Beth and Craig makes these problems harder than entailment: the
existence problems become \TwoExpTime-complete in the presence of an
ontology or the universal
modality, and \coNExpTime-complete otherwise. We also analyze explicit
definition existence if all symbols (except the one that is defined)
are admitted in the definition. In this case the complexity depends on
whether one considers individual or concept names. Finally, we
consider the problem of computing interpolants and explicit
definitions if they exist and turn the complexity upper bound proof
into an algorithm computing them, at least for description logics with role inclusions.
\end{abstract}

{\small{
\keywords{Description logic, Modal logic, Craig interpolants, Beth definability, Explicit definitions, Computational complexity}
}}



\section{Introduction}

The \emph{Craig Interpolation Property} (CIP) for a logic $\Lmc$ states that an implication $\varphi \Rightarrow \psi$ is valid in $\Lmc$
iff there exists a formula $\chi$ in $\Lmc$ using only
the common symbols of $\varphi$ and $\psi$ such that
$\varphi\Rightarrow \chi$ and $\chi \Rightarrow \psi$ are both valid
in $\Lmc$. The intermediate formula $\chi$ is then called
an $\Lmc$-interpolant for $\varphi \Rightarrow \psi$~\cite{craig_1957}.
The CIP is generally regarded as one of the most
important and useful properties in formal logic~\cite{van2008many}, with numerous
applications ranging from formal verification~\cite{DBLP:conf/cav/McMillan03} and software specification~\cite{diaconescu1993logical} to theory combinations~\cite{DBLP:conf/cade/CimattiGS09,DBLP:conf/cade/GoelKT09,CalEtAl20,DBLP:journals/jar/CalvaneseGGMR22} and query reformulation and rewriting in databases~\cite{DBLP:series/synthesis/2011Toman,DBLP:series/synthesis/2016Benedikt}.
A particularly important consequence of the CIP is the \emph{projective Beth definability property} (PBDP), which states that 
a relation is implicitly definable
using a signature $\Sigma$ of symbols
iff it is
explicitly definable using $\Sigma$.
If $\Sigma$ is the set of all symbols distinct from that relation, then we speak of the (non-projective) \emph{Beth definability property} (BDP)~\cite{Bet56}.

In this paper, we investigate interpolants and explicit definitions in
description logics (DLs), and we also highlight consequences in modal
logic.  In DLs, one distinguishes essentially two forms
of interpolation, both of which are relevant and have their
applications.  Given an entailment $\Omc\models C\sqsubseteq D$, that
is, $C$ is subsumed by $D$ w.r.t. some background knowledge in the form
of a DL ontology $\Omc$, one might either be interested in an
interpolant between the concepts $C$ and $D$ or in an interpolant
between $\Omc$ and the concept inclusion (CI) $C\sqsubseteq D$. In the
first case, the interpolant is a concept, whereas in the second case,
the interpolant is an ontology.  We refer with \emph{CI-interpolation}
to the latter form and call the interpolant a \emph{CI-interpolant}.
The CIP for CI-interpolation has been shown to be the most important 
logical property that ensures the robust behaviour of ontology
modules and decompositions~\cite{DBLP:series/lncs/KonevLWW09,DBLP:conf/kr/KonevLPW10}.  

In this article, we mostly focus on interpolation (in the former sense
of an interpolating concept) and only derive some corollaries for
CI-interpolation.  Hence, unless stated otherwise, here and in what
follows we speak about interpolating concepts and the corresponding
CIP. For explicit definability, one asks for definitions of concepts,
possibly with respect to an ontology; these explicit definitions are
strongly related to interpolants and as stated above BDP and PBDP
follow from the CIP. In DLs, the BDP and PBDP have been used in ontology engineering to
extract explicit definitions of concepts and obtain equivalent acyclic
terminologies from ontologies~\cite{TenEtAl06,TenEtAl13}, they have
been investigated in ontology-based data management to equivalently
rewrite ontology-mediated
queries~\cite{DBLP:conf/ijcai/SeylanFB09,TomanWed20,FraEtAl13,FraKer19,TomWed21}, and
they have been proposed to support the construction of alignments
between ontologies~\cite{DBLP:conf/kr/Jimenez-RuizPST16}.  Interpolants have been used to study
$\textsc{P}$/$\textsc{NP}$ dichotomies in ontology-based query
answering~\cite{DBLP:journals/lmcs/LutzSW19}.

The CIP, PBDP, and BDP are so powerful   because potentially very hard 
existence questions are reduced to straightforward entailment
questions: an interpolant \emph{exists} iff an implication is valid
and an explicit definition \emph{exists} iff a straightforward formula
stating implicit definability is valid. The existence problems are
thus not harder than validity. Many basic DLs such as
$\mathcal{ALC}$, $\mathcal{ALCI}$, and $\mathcal{ALCIQ}$ enjoy the CIP
and PBDP~\cite{TenEtAl13}, and consequently the existence of an interpolant or an
explicit definition can be decided in \ExpTime simply because
entailment checking in these DLs is in \ExpTime (and without ontology
even in \PSpace). 
%
%
Unfortunately, the CIP and the PBDP fail to hold for some important
DLs. The most basic examples are the extension $\mathcal{ALCO}$ of
$\mathcal{ALC}$ with nominals (concepts of the form $\{a\}$ with $a$
an individual name), the extension $\mathcal{ALCH}$ of $\mathcal{ALC}$
with role inclusions (inclusions $r\sqsubseteq s$ between binary
relations/role names $r$ and $s$), and all standard DLs containing
either $\mathcal{ALCO}$ or
$\mathcal{ALCH}$~\cite{DBLP:series/lncs/KonevLWW09,TenEtAl13}.
It follows that for
these DLs the existence of interpolants and explicit definitions
cannot be reduced (directly) to entailment checking.

The aim of this article is to explore the consequences of the failure
of the CIP and PBDP for interpolant and explicit definition existence.
To this end, we investigate the complexity of deciding the existence
of interpolants and explicit definitions for the set $\Dln$ of DLs
containing $\mathcal{ALCO}$, $\mathcal{ALCH}$, and their extensions by
inverse roles and/or the universal role. 
We discuss next two
more applications of interpolants and explicit definitions for 
$\mathcal{ALCO}$ and its extensions. 

\paragraph{Data Separability and Concept Learning.} We show
that interpolants are essentially the same as concepts separating
positive and negative data examples in DL knowledge bases (KBs).
Recall that a DL KB is a pair $(\mathcal{O},\mathcal{D})$ with $\Omc$
a DL ontology and $\Dmc$ a set of data items of the form $A(a)$ and
$r(a,b)$ with $a,b$ individuals, $A$ a concept name, and $r$ a role
name. Let $\mathcal{O}$ be an ontology and $P$ and $N$ sets of
positively and negatively labelled pairs $(\Dmc,a)$ with $\mathcal{D}$
a set of data items and $a$ an individual in $\Dmc$. Then the aim of
supervised concept learning is to determine a concept $C$ in a
signature $\Sigma$ of relevant symbols such that $C$ separates $P$ and
$N$ in the sense that $(\Omc,\Dmc)\models C(a)$ for all positive
examples $(\Dmc,a)\in P$ and $(\mathcal{O}, \mathcal{D})\models \neg
C(a)$ for all negative examples $(\Dmc,a)\in N$.\footnote{This
  condition is called strong separation
  in~\cite{DBLP:conf/ijcai/FunkJLPW19,kr2021,DBLP:journals/ai/JungLPW22}. 
  A weaker version,
  called weak separation, only demands that $(\mathcal{O},
  \mathcal{D})\not\models C(a)$ for all negative examples $(\Dmc,a)\in
  N$. Concept learning systems have been developed for both the weak
  and the strong notion.} Concept learning has received significant
  interest over the past 15 years, where the focus has been on
  developing and analyzing refinement based algorithms for finding
  separating
  concepts~\cite{DBLP:conf/ilp/LehmannH09,DBLP:journals/ml/LehmannH10,Lisi15,DBLP:conf/aaai/SarkerH19,DBLP:conf/ilp/FanizzidE08,DBLP:conf/dlog/Lisi12,DBLP:journals/fgcs/RizzoFd20}.
  Prominent concept learning systems include the {\sc DL
  Learner}~\cite{DBLP:conf/www/BuhmannLWB18,DBLP:journals/ws/BuhmannLW16},
  {\sc DL-Foil}~\cite{DBLP:conf/ekaw/Fanizzi0dE18} and its extension
  {\sc DL-Focl} \cite{DBLP:conf/ekaw/0001FdE18},
  SPaCEL~\cite{DBLP:journals/jmlr/TranDGM17}, {\sc
  YinYang}~\cite{DBLP:journals/apin/IannonePF07}. The existence
  problem for separating concepts has been investigated
  in~\cite{DBLP:conf/ijcai/FunkJLPW19,kr2020,kr2021,DBLP:journals/ai/JungLPW22}. For DLs
  extending $\mathcal{ALCO}$, we establish a one-to-one correspondence
  between interpolants and separating concepts, modulo a rather
  straightforward polynomial time translation. Hence the existence of
  separating concepts reduces to the existence of interpolants and
  finding small such concepts or concepts of a certain syntactic
  shape, as is often useful in supervised learning, also reduces to
  the same task for interpolants. We emphasize that the presence of
  nominals in the DL is critical as they are required to encode the
  individuals used in $\mathcal{D}$ into concepts. 

\paragraph{Referring Expressions.} The computation of explicit
definitions of concept names has been explored in detail since at
least~\cite{TenEtAl06}, see also~\cite{ArtEtAl21}. Only recently, the
focus on defining concept names has been extended to defining
individual names, also called   referring expression generation in
computational linguistics and data
management~\cite{DBLP:journals/coling/KrahmerD12,DBLP:conf/inlg/ArecesKS08,DBLP:conf/kr/BorgidaTW16}.
In fact, it has been convincingly argued that very
often in applications the individual names used in ontologies or data
sets are insufficient ``to allow humans to figure out what real-world
objects they refer to''~\cite{BorEtAl17}. A natural way to address this problem is to
check for such an individual name $a$ whether there exists a concept
$C$ over a set of relevant symbols $\Sigma$ that provides an explicit
definition of $\{a\}$ and present such a concept $C$ to the human
user. Observe that one has to work with DLs extending $\mathcal{ALCO}$
to formulate this problem as an explicit definition existence problem.

\medskip To conclude, data separation, concept learning, and referring
expresssion generation are challenging research problems which
directly benefit from a better understanding of interpolant and
explicit definition existence in extensions of
$\mathcal{ALCO}$.   
%
%
%
%
We now discuss the main results of this article, formulated in an
informal way. Precise formulations are given later. Recall that $\Dln$
is the set of DLs $\mathcal{ALCO}$, $\mathcal{ALCH}$, and their
extensions with inverse roles and the universal role, and that we
assume the presence of a background DL ontology. Our first main
result is as follows.
\begin{theorem}\label{thm:A} 
  Let $\Lmc\in \Dln$. Then \Lmc-interpolant
  existence and $\Lmc$-definition existence are
  \TwoExpTime-complete.
\end{theorem}
Theorem~\ref{thm:A} confirms the suspicion that interpolant and
definition existence are much harder problems than entailment if one
has to live without Beth and Craig. On the positive side, these
problems are still decidable. Interestingly, for DLs in $\Dln$ with
nominals, the \TwoExpTime lower bound for definition existence already
holds if one asks for an explicit definition of an individual over the
signature containing all symbols distinct from that individual. In
contrast, the same problem for concept names is shown to be
\ExpTime-complete and thus not harder than entailment. Hence, in
contrast to concept name definitions, referring expression existence
does not become less complex in the non-projective case when all
symbols are allowed in definitions. 

We next consider the same problems if the background ontology is
empty, or, in the case of DLs in $\Dln$ without nominals, if the
ontology contains only role inclusions. Observe
that if the DL admits the universal role or both nominals and inverse
roles, then the ontology can be encoded as a concept using spy points~\cite{DBLP:conf/csl/ArecesBM99}, so nothing
changes compared to the case with ontologies covered in
Theorem~\ref{thm:A}. For the remaining cases we show the following.
\begin{theorem}\label{thm:B}
(1) If $\Lmc\in \{\mathcal{ALCO},\mathcal{ALCHO}\}$, then for the empty ontology and ontologies containing role inclusions only, \Lmc-interpolant existence and $\Lmc$-definition existence are
	both co\NExpTime-complete;
	
(2) If $\Lmc\in \{\mathcal{ALCH},\mathcal{ALCHI}\}$, then for ontologies containing role inclusions only \Lmc-interpolant existence and $\Lmc$-definition existence are
	    both co\NExpTime-complete.
\end{theorem}	   
It follows that without ontology and ontologies containing role inclusions only interpolant existence and explicit definition existence are still harder than entailment which is \PSpace-complete.

The proofs of Theorems~\ref{thm:A} and~\ref{thm:B} can be adapted to
also obtain results about CI-interpolation and interpolation in modal
logic.
Regarding the former
we show that for the DLs $\Lmc$ which extend $\mathcal{ALCO}$ with the
universal role or with the universal role and inverse roles the
problem of deciding the existence of a CI-interpolant for $\Omc
\models C \sqsubseteq D$ is \TwoExpTime-complete. It follows that
again failure of the CIP leads to an exponentially harder interpolant
existence problem than entailment. We conjecture that the same can be
proved for all DLs in $\Dln$, but leave a proof for future work. 

In modal logic, the CIP and PBDP have been investigated for many
years. In fact, the CIP and PBDP of DLs such as $\mathcal{ALC}$ and
$\mathcal{ALCI}$ follows rather directly from earlier results on the
CIP and PBDP in modal logic~\cite{GabMaks,DBLP:conf/amast/Marx98,DBLP:journals/jphil/Rautenberg83}.
Also the fact that nominals lead to failure of the CIP and PBDP, and how
this could be repaired by adding logical connectives, was first
analyzed in depth in the literature on hybrid modal logic, in
particular~\cite{DBLP:journals/jsyml/ArecesBM01,DBLP:journals/jsyml/Cate05}.
In our investigation of interpolant existence in modal logic, we first
consider basic modal logic with nominals and show that as a direct
consequence of Theorem~\ref{thm:B} the problem of deciding interpolant
existence is \coNExpTime-complete for the standard local consequence
relation. We also show using Theorem~\ref{thm:A} that if one adds the
universal modality, then interpolant existence becomes
\TwoExpTime-complete. In modal logic, nominals are often considered in
tandem with the $@$-operator, where $@_{a}\varphi$ states that formula
$\varphi$ holds at the world denoted by nominal $a$. The resulting
language is more expressive than modal logic with nominals and less
expressive than modal logic with nominals and the universal modality.
We show that for the modal logic with both nominals and the
$@$-operator interpolant existence is still \coNExpTime-complete. Our
complexity results also hold for the modal language with a single
modal operator (and the universal modality, if present).

While the focus in this article is on the decision problem, we also make
initial observations regarding the problem of actually computing interpolants or explicit
definitions if they exist. More specifically, for DLs in $\Dln$ that
do not admit nominals, we present a modification of the decision
procedure from the proof of Theorem~\ref{thm:A} that returns in double
exponential time the
DAG representation of an interpolant (if it exists). This corresponds
to interpolants of worst case triple exponential size which we
conjecture to be optimal. 

\paragraph{Overview of the Paper.} In the following
Section~\ref{sec:rw}, we discuss further related work. In
Sections~\ref{sec:pre} and~\ref{sec:background}, we introduce the
preliminaries on description logics and Craig interpolation and Beth
definability, respectively. In Section~\ref{sec:notions}, we provide
model-theoretic characterizations of the definition and interpolation
existence problems and formulate our main results in detail. The
subsequent four sections are devoted to the proofs of these main
results. In more detail, Section~\ref{sec:upperbound} provides the
upper bound proof for the case with ontologies and
Section~\ref{sec:lowerbound} provides the matching lower bounds.
Sections~\ref{sec:complexity} and~\ref{sec:lower-without-ontology}
cover the ontology-free case and the case of ontologies containing
only role inclusions.
In Section~\ref{sec:computation}, we investigate 
the problem of actually computing interpolants and explicit definitions in case
they exist, and in Section~\ref{sec:modal}
we draw the connections of our results on DLs to modal logic. Finally,
we conclude and point out directions for future work in
Section~\ref{sec:conclusion}.

  An appendix available as supplementary material provides a few proofs that were left out of the paper. Here we prove, in particular, our main result about the computational complexity of non-projective definition existence of concept names.

%
%
%

\section{Related Work}
\label{sec:rw}

This paper is an extended version
of~\cite{DBLP:conf/dlog/ArtaleJMOW20,DBLP:conf/aaai/ArtaleJMOW21,ourDL22}. We include detailed proofs and additionally discuss the link to concept learning,
interpolants between ontologies and concept inclusions, and
applications to modal logic.

Related work on Craig interpolation and the Beth definability property has been discussed already in the introduction. We therefore focus on work on deciding interpolant and explicit definition existence. These decision problems have only very recently been investigated. A notable exception is linear temporal logic, LTL, 
for which the CIP fails and for which decidability of interpolant existence has been shown both over finite linear orderings~\cite{henkell1,henkell2} and over the natural numbers~\cite{DBLP:journals/corr/PlaceZ14}. Note that these results are formulated as separability results for formal languages of finite and, respectively, infinite words: given two regular languages $R_{1}$ and $R_{2}$, does there exist a first-order definable language $L$ separating $R_{1}$ and $R_{2}$ in the sense that $R_{1} \subseteq L$ and $L \cap R_{2}=\emptyset$.
Neither LTL nor Craig interpolation are mentioned in~\cite{DBLP:journals/corr/PlaceZ14,henkell1,henkell2}. Using the fact that regular languages are projectively LTL definable and that LTL and first-order logic are equivalent over the natural numbers, it is, however, easy to see that interpolant existence is the same problem as separability of regular languages in first-order logic, modulo the representation of the inputs.
We note that this result is just one instance of an ongoing exploration of separation between languages in automata theory. The problem of deciding separation is interesting in this context because obtaining an algorithm for separation yields a far deeper understanding of the class under consideration than just membership~\cite{DBLP:journals/lmcs/Place18,DBLP:journals/tocl/PlaceZ20}. We conjecture that deciding interpolant existence could well play a similar role for understanding fragments of first-order logic. 

Indeed, interpolant existence has recently also been studied for the
guarded fragment (GF), the two-variable fragment (FO$^{2}$) of
FO~\cite{DBLP:conf/lics/JungW21}, for Horn description logics
extending $\mathcal{EL}$~\cite{DBLP:journals/corr/abs-2202-07186}, and for first-order modal logics~\cite{DBLP:journals/corr/abs-2303-04598}.
While GF is a good generalization of modal and description logic in
many respects, it neither enjoys the
CIP~\cite{DBLP:conf/lpar/HooglandMO99} nor the
PBDP~\cite{DBLP:conf/mfcs/BaranyBC13}. Failure of the CIP for FO$^{2}$
was shown using algebraic~\cite{comer1969,Pigozzi71} and
model-theoretic techniques~\cite{DBLP:journals/ndjfl/MarxA98}.
Using techniques that are similar to those introduced in this article
it is shown in \cite{DBLP:conf/lics/JungW21} that, in GF, explicit
definability and interpolant existence are both 3\ExpTime-complete in
general, and \TwoExpTime-complete if the arity of relation symbols is
bounded by a constant $c\geq 3$.  In FO$^{2}$, explicit definability
and interpolant existence are in coN\TwoExpTime and \TwoExpTime-hard~\cite{DBLP:conf/lics/JungW21}. 
Failure of the CIP and PBDP for first-order modal logics with constant domain is shown in~\cite{DBLP:journals/jsyml/Fine79,DBLP:journals/ndjfl/MarxA98}. Both properties also fail for their otherwise well-behaved one-variable and monodic fragments~\cite{gabbay2003many}. In~\cite{DBLP:journals/corr/abs-2303-04598}, the complexity of interpolant existence is investigated for first-order S5 with one variable (and some monodic fragments) and for first-order K with one variable.
For S5, explicit definability
and interpolant existence turn out to be in coN\TwoExpTime and \TwoExpTime-hard while for K only a non-elementary upper bound is shown.
These results confirm that for
many logics not enjoying the CIP and PBDP, interpolant and explicit
definition existence are harder than entailment.

It turns out that this is not always the case. It is shown in~\cite{DBLP:journals/corr/abs-2202-07186} that extensions of the description logic $\mathcal{EL}$ with any combination of the universal role, nominals, or inverse roles do not enjoy the CIP nor PBDP, but that interpolant existence and explicit definition existence still have the same complexity as entailment (in \PTime for those that do not admit inverse roles and \ExpTime-complete for those that admit inverse roles).
The proofs are rather different from those given in this paper, as they make use of the universal/canonical model that only exists for Horn logics.
 
We note that for logics that do not enjoy the CIP nor PBDP it is also of interest to look for ``small'' extensions that enjoy the CIP and PBDP and are decidable. For example, the guarded negation fragment of FO is a decidable extension of GF that enjoys the CIP and the PBDP~\cite{DBLP:journals/jsyml/BaranyBC18,DBLP:journals/tocl/BenediktCB16,DBLP:conf/lics/BenediktCB15,DBLP:journals/lmcs/BenediktBB19}. Also the two-variable fragment of GF is 
a decidable extension of $\mathcal{ALCH}$ enjoying both properties~\cite{DBLP:conf/lpar/HooglandMO99,DBLP:journals/sLogica/HooglandM02}. In both cases the complexity of entailment does not increase for the extension (\TwoExpTime-complete for the guarded negation fragment and \ExpTime-complete for the two-variable fragment of FO). 
On the other hand, under mild conditions there is no decidable extension of $\mathcal{ALCO}$ with the universal role nor of modal logic with nominals and the $@$-operator enjoying the CIP~\cite{DBLP:journals/jsyml/Cate05}.

While the problem of deciding interpolant and explicit definition existence for logics that 
neither 
enjoy the CIP nor the PBDP has only been considered rather recently, the problem of computing and deciding the existence of uniform interpolants for logics that do not enjoy the uniform interpolation property (UIP) has been investigated before.
Recall that uniform interpolants generalize Craig interpolants in the sense that
a uniform interpolant is an interpolant for a fixed $\varphi$ and all $\psi$ which are entailed by $\varphi$ and share with $\varphi$ a fixed set of symbols.  First-order logic enjoys the CIP but not the UIP. Propositional intuitionistic logic, local modal logic, and the modal mu-calculus are examples of expressive logics that enjoy the UIP~\cite{DBLP:journals/jsyml/Pitts92,visser1996uniform,DBLP:conf/aiml/DAgostinoH96}, see \cite{kowalski2019uniform,iemhoff2019uniform} for more recent results. In description logic, uniform interpolants of ontologies (extending what we call CI-interpolants in this article) are of particular importance but do not always exist for any standard description logic, including $\mathcal{ALC}$. The complexity of deciding their existence has been investigated in ~\cite{DBLP:conf/kr/LutzSW12,DBLP:conf/ijcai/LutzW11}, their size has been considered in~\cite{DBLP:journals/ai/NikitinaR14,DBLP:conf/aaai/KoopmannS15}, and various approaches to computing them have been developed and implemented~\cite{DBLP:conf/ijcai/KonevWW09,DBLP:conf/cade/KoopmannS14,DBLP:conf/aaai/KoopmannS15,DBLP:conf/ijcai/ZhaoS16}.

\section{Preliminaries}
\label{sec:pre}

We introduce the syntax and semantics of the relevant description
logics, see also~\cite{DL-Textbook}. 
Let \NC, \NR, and \NI be mutually disjoint and countably infinite sets
of \emph{concept}, \emph{role}, and \emph{individual names}. A \emph{role} is a role name $s$, or an \emph{inverse role}
$s^-$, with $s$ a role name and
$(s^-)^- = s$.
We use $u$ to denote the \emph{universal role}.
A \emph{nominal} takes
the form $\{a\}$, with $a$ an individual name. An
\emph{$\mathcal{ALCOI}^{u}$-concept} is defined according to the
syntax rule
\[
C, D ::= \top \mid A \mid \{a\} \mid \neg C \mid C \sqcap D \mid \exists r.C 
\]
where $A$ ranges over concept names, $a$ ranges over individual names, and
$r$ over roles
and the universal role.
We use $C \sqcup D$ as abbreviation for
$\neg (\neg C \sqcap \neg D)$, $C \rightarrow D$ for $\neg C \sqcup D$, $C \leftrightarrow D$ for $(C \rightarrow D) \sqcap (D \rightarrow C)$, and $\forall r.C$ for
$\neg \exists r.\neg C$.  We use several fragments of
$\mathcal{ALCOI}^{u}$, including $\mathcal{ALCOI}$, obtained by
dropping the universal role, $\mathcal{ALCO}^{u}$, obtained by
dropping inverse roles, $\mathcal{ALCO}$, obtained from $\mathcal{ALCO}^{u}$ by
dropping the universal role, and $\mathcal{ALC}$, obtained from $\mathcal{ALCO}$ by dropping nominals. 
If $\Lmc$ is any of the DLs above, then an \emph{\Lmc-concept inclusion ($\Lmc$-CI)}
takes the form $C \sqsubseteq D$ with $C$ and $D$ \Lmc-concepts. An \emph{$\Lmc$-ontology} is a finite set of \Lmc-CIs.
We also consider DLs with \emph{role inclusions (RIs)}, expressions
of the form $r\sqsubseteq s$, where $r$ and $s$ are roles. As usual,
the addition of RIs is indicated by adding the letter $\Hmc$ to the name of the DL, where inverse roles occur in RIs only if the DL admits inverse roles. Thus, for example, \emph{$\mathcal{ALCH}$-ontologies} are finite sets of $\mathcal{ALC}$-CIs and RIs not using inverse roles and \emph{$\mathcal{ALCHOI}^{u}$-ontologies} are finite sets of $\mathcal{ALCOI}^{u}$-CIs and RIs.
In what follows we use $\Dln$ to denote the set of DLs $\mathcal{ALCO}$, $\mathcal{ALCOI}$, $\mathcal{ALCH}$, $\mathcal{ALCHI}$, $\mathcal{ALCHO}$, $\mathcal{ALCHOI}$, and their extensions with the universal role. 
To simplify notation we do not drop the letter $\mathcal{H}$ when speaking about the concepts and CIs of a DL with RIs. Thus, for example, we sometimes use the expressions $\mathcal{ALCHO}$-concept and $\mathcal{ALCHO}$-CI to denote $\mathcal{ALCO}$-concepts and CIs, respectively. An \emph{RI-ontology} is an ontology containing RIs only.
 
The semantics is defined in terms of \emph{interpretations}
$\Imc=(\Delta^\Imc,\cdot^\Imc)$,
where $\Delta^{\mathcal{I}}$ is a non-empty set, called \emph{domain} of $\mathcal{I}$,
and $\cdot^{\mathcal{I}}$ is a function mapping
every $A \in \NC$ to a subset of $\Delta^{\Imc}$,
every $s\in\NR$ to a subset of $\Delta^{\Imc} \times \Delta^{\Imc}$,
the universal role $u$ to $\Delta^{\Imc} \times \Delta^{\Imc}$,
and every $a \in\NI$
to an element in $\Delta^{\Imc}$.
Given a role name $s \in \NR$, we set
$(s^{-})^{\mathcal{I}} = \{ (d, e) \in \Delta^{\Imc} \times \Delta^{\Imc} \mid (e, d) \in s^{\mathcal{I}} \}$.
Moreover, the \emph{extension $C^{\mathcal{I}}$ of an $\mathcal{L}$-concept $C$ in $\mathcal{I}$} is defined as follows, where $r$ ranges over roles and the universal role:
	\begin{align*}
		\top^{\mathcal{I}} & = \Delta^{\mathcal{I}}, \\
		\{ a \}^{\mathcal{I}} & = \{ a^{\mathcal{I}} \}, \\
		\neg C^{\mathcal{I}} & = \Delta^{\Imc} \setminus C^{\mathcal{I}}, \\ 
		(C \sqcap D)^{\mathcal{I}} & = C^{\mathcal{I}} \cap D^{\mathcal{I}}, \\
		(\exists r.C)^{\mathcal{I}} & = \{d \in \Delta^{\Imc} \mid \text{there exists} \ e \in C^{\mathcal{I}}: (d,e) \in r^{\mathcal{I}}\}.
	\end{align*}
An interpretation \Imc
\emph{satisfies} an $\Lmc$-CI $C \sqsubseteq D$ if $C^\Imc \subseteq D^\Imc$ and an RI $r\sqsubseteq s$ if $r^{\Imc}\subseteq s^{\Imc}$. We say that \Imc is a \emph{model} of
an ontology $\Omc$ if it satisfies all inclusions 
in it. We say that an inclusion $\alpha$ follows from an ontology $\Omc$, in symbols $\Omc\models \alpha$, if every
model of $\Omc$ satisfies $\alpha$. We write $\Omc\models C \equiv D$ if $\Omc \models C \sqsubseteq D$ and $\Omc\models D \sqsubseteq C$. We drop $\Omc$ if it is empty and write $\models C \sqsubseteq D$ for $\emptyset\models C \sqsubseteq D$.
A concept $C$ is \emph{satisfiable} w.r.t.\ an
ontology \Omc if there is a model \Imc of \Omc with
$C^\Imc \neq \emptyset$. We use a few well known complexity bounds for reasoning in DLs from $\Dln$.
The \emph{$\Lmc$-subsumption problem} is the problem to decide for any $\Lmc$-ontology $\Omc$ and $\Lmc$-CI $C \sqsubseteq D$ whether $\Omc\models C \sqsubseteq D$.
The \emph{ontology-free $\Lmc$-subsumption problem} and the
\emph{RI-ontology $\Lmc$-subsumption problem} are the sub-problems of
the $\Lmc$-subsumption problem in which the ontology is empty or an RI-ontology, respectively. For any $\Lmc\in \Dln$, the
$\Lmc$-subsumption problem is \ExpTime-complete~\cite{handbookDL,DBLP:conf/csl/ArecesBM99}. If $\Lmc$ admits the universal role or both inverse roles and nominals, then ontologies can be encoded in concepts and so 
ontology-free $\Lmc$-subsumption and RI-ontology $\Lmc$-subsumption are also \ExpTime-complete.
In the remaining cases, that is for $\mathcal{ALCO}$, $\mathcal{ALCH}$, $\mathcal{ALCHO}$, and $\mathcal{ALCHI}$, $\Lmc$-subsumption becomes \PSpace-complete~\cite{handbookDL,DBLP:conf/csl/ArecesBM99}.


A \emph{signature} $\Sigma$ is a set of concept, role, and individual
names, uniformly referred to as \emph{symbols}.  Following standard
practice we do not regard the universal role as a symbol but as a
logical connective. Thus, the universal role is not contained in any
signature.  We use $\text{sig}(X)$ to denote the set of symbols used
in any syntactic object $X$ such as a concept or an ontology.  An
\emph{$\Lmc(\Sigma)$-concept} is an $\Lmc$-concept $C$ with
$\text{sig}(C)\subseteq \Sigma$. A $\Sigma$-role $r$ is a
role with $\text{sig}(r)\subseteq \Sigma$. The \emph{size} of a
(finite) syntactic object $X$, denoted $||X||$, is the number of
symbols needed to represent it as a word. 

We next recall model-theoretic characterizations when
elements in interpretations are indistinguishable by concepts
formulated in one of the DLs \Lmc introduced above. A \emph{pointed
	interpretation} is a pair $\Imc,d$ with \Imc an interpretation
and $d\in \Delta^{\Imc}$. For pointed interpretations $\Imc,d$
and $\Jmc,e$ and a signature $\Sigma$, we write $\Imc,d \equiv_{\Lmc,\Sigma}\Jmc,e$ and say that
$\Imc,d$ and $\Jmc,e$ are \emph{$\Lmc(\Sigma)$-equivalent} if $d\in C^{\Imc}$
iff $e\in C^{\Jmc}$, for all $\Lmc(\Sigma)$-concepts $C$.
\begin{figure}[tb]
	\begin{center}
		\small
		\leavevmode
		\begin{tabular}{|@{\,}l@{\,}|l|}
			\hline 
			[AtomC] & for all $(d,e)\in S$:
			$d \in A^{\Imc}$ iff $e\in A^{\Jmc}$ \\ \hline
			[AtomI] & for all $(d,e)\in S$:
			$d = a^{\Imc}$ iff $e=a^{\Jmc}$ \\ \hline
			[Forth] & if $(d,e)\in S$ and $(d,d')\in r^{\Imc}$, 
			then \\
			& there is a $e'$ with $(e,e')\in r^{\Jmc}$ and
			$(d',e')\in S$.\\ \hline
			[Back] & if $(d,e)\in S$ and $(e,e')\in r^{\Jmc}$, 
			then \\
			& there is a $d'$ with $(d,d')\in r^{\Imc}$ and
			$(d',e')\in S$. \\ \hline
		\end{tabular} 
		\caption{Conditions on $S \subseteq \Delta^{\Imc}\times \Delta^{\Jmc}$.}
		\label{tab:conditions}
	\end{center}
	
	\vspace*{-5mm}
\end{figure}

As for the model-theoretic characterizations, we start with $\mathcal{ALC}$. Let $\Sigma$ be a signature. A
relation $S \subseteq \Delta^{\Imc}\times \Delta^{\Jmc}$ is an
\emph{$\mathcal{ALC}(\Sigma)$-bisimulation} if 
conditions [AtomC], [Forth], and [Back] from
Figure~\ref{tab:conditions} hold, where $A$ and $r$ range over all concept
and role names in $\Sigma$, respectively. We write $\Imc,d
\sim_{\mathcal{ALC},\Sigma}\Jmc,e$ and call $\Imc,d$ and $\Jmc,e$
\emph{$\mathcal{ALC}(\Sigma)$-bisimilar} if there exists an
$\mathcal{ALC}(\Sigma)$-bisimulation $S$ such that
$(d,e)\in S$. For $\mathcal{ALCO}$, we define $\sim_{\mathcal{ALCO},\Sigma}$ 
analogously, but now demand that, in
Figure~\ref{tab:conditions}, also condition [AtomI] holds for all individual names $a\in \Sigma$. For languages $\Lmc$ with inverse roles, we demand that, in Figure~\ref{tab:conditions}, $r$ additionally ranges over inverse roles. For languages $\Lmc$ with the universal role we extend
the respective conditions by demanding that the domain $\text{dom}(S)$ and range $\text{ran}(S)$ of $S$ contain $\Delta^{\Imc}$ and $\Delta^{\Jmc}$, respectively. If a DL $\Lmc$ has RIs, then we use $\Imc,d \sim_{\mathcal{L},\Sigma}\Jmc,e$ to state that $\Imc,d \sim_{\mathcal{L}',\Sigma}\Jmc,e$ for the fragment $\Lmc'$ of $\Lmc$ without RIs.

The next lemma summarizes the model-theoretic characterizations for
all relevant DLs \cite{TBoxpaper,goranko20075}. For the definition of $\omega$-saturated structures, we refer the reader to~\cite{modeltheory}. 
%
\begin{lemma}\label{lem:equivalence}
	Let $\mathcal{I},d$ and $\mathcal{J},e$ be pointed interpretations
	and $\omega$-saturated. Let $\mathcal{L}\in \Dln$ and $\Sigma$ a signature. 
	Then
	\[
	\Imc,d \equiv_{\mathcal{L},\Sigma} \Jmc,e \quad \text{iff} \quad
		\Imc,d \sim_{\mathcal{L},\Sigma}\Jmc,e.
	      \]
For the ``if''-direction, the $\omega$-saturatednesses condition can be dropped.	
\end{lemma}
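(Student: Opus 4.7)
The plan is to prove the two directions separately. The ``if'' direction goes by a routine structural induction on concepts with no saturation needed, while the ``only if'' direction proceeds by taking $S$ to be the $\Lmc(\Sigma)$-equivalence relation itself and invoking $\omega$-saturation of $\Imc$ and $\Jmc$ to realize finitely satisfiable types of successors, thereby verifying the back-and-forth conditions.

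For the ``if'' direction, I fix an $\Lmc(\Sigma)$-bisimulation $S$ with $(d,e) \in S$ and prove by induction on an $\Lmc(\Sigma)$-concept $C$ that for every $(d',e') \in S$, $d' \in C^{\Imc}$ iff $e' \in C^{\Jmc}$. The cases $C = \top$, $C = A$ with $A\in\Sigma$, and $C = \{a\}$ with $a\in\Sigma$ are handled by [AtomC] and [AtomI]; the Boolean cases follow immediately from the induction hypothesis; the case $C = \exists r.D$ uses [Forth] and [Back], where for languages with $^{-}$ the enumeration of $r$ automatically includes inverse roles; the case $C = \exists u.D$ additionally uses $\text{dom}(S) = \Delta^{\Imc}$ and $\text{ran}(S) = \Delta^{\Jmc}$. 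For DLs with RIs, the statement is reduced to the RI-free fragment by the definition preceding the lemma, and nothing further is needed, since the ambient models already satisfy the RIs and the concept syntax contains no operator referring to derived role information.

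For the ``only if'' direction, I set $S = \{(d',e') \mid \Imc,d' \equiv_{\Lmc,\Sigma} \Jmc,e'\}$, which contains $(d,e)$ by assumption, and verify that $S$ is an $\Lmc(\Sigma)$-bisimulation. Conditions [AtomC] and [AtomI] hold because $A$ for $A \in \Sigma$ and $\{a\}$ for $a \in \Sigma$ are themselves $\Lmc(\Sigma)$-concepts. For [Forth], given $(d',e') \in S$ and $(d',d'') \in r^{\Imc}$, let $\Gamma$ be the $\Lmc(\Sigma)$-type of $d''$ in $\Imc$. For every finite $\Gamma_0 \subseteq \Gamma$, the conjunction $C_{\Gamma_0}$ of the concepts in $\Gamma_0$ satisfies $d' \in (\exists r.C_{\Gamma_0})^{\Imc}$, so by equivalence $e' \in (\exists r.C_{\Gamma_0})^{\Jmc}$; hence the type asserting ``$r$-successor of $e'$ realizing $\Gamma$'' is finitely satisfiable at $e'$. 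By $\omega$-saturation of $\Jmc$, some $r$-successor $e''$ of $e'$ realizes all of $\Gamma$, and by closure of $\Lmc$ under negation this forces $\Imc,d'' \equiv_{\Lmc,\Sigma} \Jmc,e''$, i.e., $(d'',e'') \in S$. [Back] is symmetric, using $\omega$-saturation of $\Imc$.

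The main obstacle, and the only point where more than the standard Hennessy--Milner template is required, is the totality of $S$ in the presence of the universal role: for each $d' \in \Delta^{\Imc}$ one must exhibit some $e' \in \Delta^{\Jmc}$ with $(d',e') \in S$. The trick is to globalize the previous argument using $u$: for every finite $\Gamma_0$ of the $\Lmc(\Sigma)$-type of $d'$, the concept $\exists u.C_{\Gamma_0}$ holds at $d$ in $\Imc$, hence at $e$ in $\Jmc$, witnessing some domain element of $\Jmc$ satisfying $C_{\Gamma_0}$. Thus the full type of $d'$ is finitely satisfiable in $\Jmc$, and $\omega$-saturation of $\Jmc$ yields a realizing $e'$, giving $(d',e') \in S$. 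Symmetric reasoning establishes $\text{ran}(S) = \Delta^{\Jmc}$, completing the verification.
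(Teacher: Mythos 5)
Your proof is correct. The paper itself does not prove this lemma but defers to the cited literature (\cite{TBoxpaper,goranko20075}), and your argument is exactly the standard Hennessy--Milner-style proof those references use: routine induction for the ``if'' direction, and for ``only if'' taking $S$ to be $\Lmc(\Sigma)$-equivalence and realizing finitely satisfiable successor types via $\omega$-saturation, with the one genuinely non-template step --- establishing totality of $S$ for the universal-role languages via $\exists u.C_{\Gamma_0}$ --- handled correctly.
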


\section{Craig Interpolation and Beth Definability}
\label{sec:background}

We introduce interpolants and the Craig interpolation property (CIP)
as well as implicit and explicit definitions and the (projective) Beth
definability property ((P)BDP). Recall from the introduction that
there are two forms of interpolants, one pertaining to concepts and
the other pertaining to concept inclusions. We start the discussion
here with the former one, and discuss CI-interpolants later. 
For concept interpolants, we establish a
close link between interpolants and separators of positive and
negative data examples, show that logics in $\Dln$ do not enjoy the
CIP nor PBDP, and determine which DLs in $\Dln$ enjoy the BDP.
 
Let $\Omc$ be an $\Lmc$-ontology, $C_{1},C_{2}$ be $\Lmc$-concepts, and let $\Sigma$ be a signature.
Then, an $\Lmc$-concept $D$ is an \emph{$\Lmc(\Sigma)$-interpolant for $C_{1}\sqsubseteq C_{2}$ under $\Omc$}, if $\text{sig}(D)\subseteq \Sigma$, $\Omc\models C_{1}\sqsubseteq D$, and $\Omc\models D \sqsubseteq C_{2}$. If $\Omc$ is empty, then we drop it and call an $\Lmc(\Sigma)$-interpolant for $C_{1}\sqsubseteq C_{2}$ under $\Omc$ simply an
\emph{$\Lmc(\Sigma)$-interpolant for $C_{1}\sqsubseteq C_{2}$}. Observe that $\Sigma$ is arbitrary in this definition, so it does not follow from $\Omc\models C_{1} \sqsubseteq C_{2}$ that an $\Lmc(\Sigma)$-interpolant for $C_{1}\sqsubseteq C_{2}$ under $\Omc$ exists. 
If $\Omc$ is empty, then we obtain the standard definition of the Craig interpolation property by demanding that for $\Sigma= \text{sig}(C_{1})\cap \text{sig}(C_{2})$ from $\models C_{1}\sqsubseteq C_{2}$ it follows that there exists an $\Lmc(\Sigma)$-interpolant for $C_{1}\sqsubseteq C_{2}$. The obvious generalization of this definition to non-empty ontologies, however, does not work. Consider, for instance, $\Omc=\{A_{1} \sqsubseteq A_{2}, A_{2} \sqsubseteq A_{3}\}$ and $A_{1}\sqsubseteq A_{3}$. Then for $\Sigma=\text{sig}(A_{1})\cap \text{sig}(A_{3})=\emptyset$, we have $\Omc\models A_{1}\sqsubseteq A_{3}$, but there does not exist any $\Lmc(\Sigma)$-interpolant for $A_{1}\sqsubseteq A_{3}$ under $\Omc$.  
In fact, to generalize the Craig interpolation property to non-empty
ontologies, one has to split the ontology $\Omc$ into two. Hence, we
adopt here the following definition of the Craig interpolation property in DLs from~\cite{TenEtAl13}. We set $\text{sig}(\Omc,C)= \text{sig}(\Omc)\cup \text{sig}(C)$, for any ontology $\Omc$ and concept $C$.

\begin{definition}
A DL $\Lmc$ \emph{enjoys the Craig interpolation property} (\emph{CIP}) if for any
$\Lmc$-ontologies $\Omc_{1},\Omc_{2}$ and $\Lmc$-concepts $C_{1},C_{2}$ such that $\Omc_{1}\cup \Omc_{2}\models C_{1}\sqsubseteq C_{2}$ there exists an $\Lmc(\Sigma)$-interpolant for 
$C_{1}\sqsubseteq C_{2}$ under $\Omc_{1}\cup \Omc_{2}$, where $\Sigma=\text{sig}(\Omc_{1},C_{1})\cap \text{sig}(\Omc_{2},C_{2})$.
If $\Omc_{1},\Omc_{2}$ range over $\Lmc$-ontologies containing RIs only or $\Omc_{1} = \Omc_{2} = \emptyset$, then we say that $\Lmc$ \emph{enjoys the CIP for RI-ontologies} and \emph{the CIP for the empty ontology}, respectively.

\end{definition}
Note that the CIP for the empty ontology coincides with the standard definition of the CIP mentioned before. 
It is shown in~\cite{TenEtAl13} that the DLs $\mathcal{ALC}$ and $\mathcal{ALCI}$ and their extensions with qualified number restrictions and the universal role all enjoy the CIP.
In contrast, no DL in $\Dln$ enjoys the CIP. This is implicitly proved in~\cite{TenEtAl13} and is shown in Theorem~\ref{thm:cipandbdp} below. The following illustrating example is folklore and shows that this holds even for the empty ontology for logics admitting nominals.
\begin{example}\label{ex:1}
	Consider $C_{1}=\{a\} \sqcap \exists r.\{a\}$ and $C_{2}=\{b\}
	\rightarrow \exists r.\{b\}$. Then $\models C_{1} \sqsubseteq
	C_{2}$ but there does not exist any
	$\mathcal{ALCHOI}^{u}(\{r\})$-interpolant for $C_{1} \sqsubseteq
	C_{2}$. Intuitively, no such interpolant $D$ exists as it
	would have to be true in exactly the elements $x$ with $r(x,x)$
	and no such $\mathcal{ALCHOI}^{u}(\{r\})$-concept exists. A formal proof is
	given in Example~\ref{ex:3} below. \qedex
\end{example}
%

As discussed in the introduction, the close link between separation of data examples and interpolants is one of our main motivations for studying interpolants for DLs with nominals. We next formalize this link. A \emph{database} $\Dmc$ is a finite set of assertions of the form $A(a)$ and $r(a,b)$ with $a,b$ individuals, $A$ a concept name, and $r$ a role name. By $\text{ind}(\Dmc)$ we denote the set of individual names in $\Dmc$. A \emph{knowledge base (KB)} is a pair $\Kmc= (\Omc,\Dmc)$ consisting of an ontology $\Omc$ and database $\Dmc$. An interpretation $\Imc$ is a \emph{model of $\Kmc$} if it is a model of $\Omc$, $a^{\Imc}\in A^{\Imc}$ for all $A(a)\in \Dmc$, and $(a^{\Imc},b^{\Imc})\in r^{\Imc}$ for all $r(a,b)\in \Dmc$. An assertion $C(a)$ with $C$ a concept and $a$ an individual \emph{follows from $\Kmc$}, in symbols $\Kmc\models C(a)$, if every model $\Imc$ of
$\Kmc$ satisfies $a^{\Imc}\in C^{\Imc}$. A \emph{labelled data set} consists of two sets, $P$ and $N$, of positive and negative examples each containing pairs $(\Dmc,a)$ with $a\in \text{ind}(\Dmc)$. 
Let $\Omc$ be an ontology. An \emph{$\Lmc(\Sigma)$-separator} for $\Omc,P,N$ is an $\Lmc(\Sigma)$-concept $C$ such that $(\Omc,\Dmc)\models C(a)$ for all $(\Dmc,a)\in P$ and $(\Omc,\Dmc)\models \neg C(a)$ for all $(\Dmc,a)\in N$.  
The following result establishes a one-to-one correspondence between interpolants and separators, modulo straightforward polynomial time reductions. Note that we do not require the frequent assumption that the database is uniform across the examples in the sense that $\Dmc=\Dmc'$ for all $(\Dmc,a),(\Dmc',a')\in P\cup N$~\cite{DBLP:journals/ai/JungLPW22}.
\begin{theorem}\label{thm:intsep} Let $\Lmc\in \Dln$ admit nominals. Then one can construct for any ontology $\Omc$, labelled data sets $P,N$, and signature $\Sigma$ with $\Sigma\cap \{a\mid (\Dmc,a)\in P\cup N\}=\emptyset$ in polynomial time $\Lmc$-ontologies $\Omc_{1}$, $\Omc_{2}$ and $\Lmc$-concepts $C_{1},C_{2}$ such that $\Sigma=\text{sig}(\Omc_{1},C_{1})\cap \text{sig}(\Omc_{2},C_{2})$ and the following conditions are equivalent for all $\Lmc$-concepts $C$:
	\begin{enumerate}
		\item $C$ is an $\Lmc(\Sigma)$-separator for $\Omc,P,N$;
		\item $C$ is an $\Lmc(\Sigma)$-interpolant for $C_{1}\sqsubseteq C_{2}$ under $\Omc_{1}\cup \Omc_{2}$.
	\end{enumerate}
	Conversely, asssume that $\Lmc$-ontologies $\Omc_{1}$, $\Omc_{2}$, and $\Lmc$-concepts  $C_{1}, C_{2}$ are given. Then one can construct in polynomial time an ontology $\Omc$
	and labelled data sets $P,N$ such that Conditions (1) and (2) are equivalent for $\Sigma=\text{sig}(\Omc_{1},C_{1})\cap \text{sig}(\Omc_{2},C_{2})$.
\end{theorem}
\begin{proof}
	Assume $\Omc, P, N$, and $\Sigma$ are given. Let $P=\{(\Dmc_{1},a_{1}),$ $\ldots,$ $(\Dmc_{n},a_{n})\}$ and $N=\{(\Dmc_{n+1},a_{n+1}),$ $\ldots,$ $(\Dmc_{n+m},a_{n+m})\}$. 
	If $\Lmc\in \Dln$ admits nominals and the universal role, then a pair $(\Dmc,a)$ can be represented using the $\Lmc$-concept $C_{\Dmc,a}=\{a\} \sqcap \exists u.C_{\Dmc}$, where $C_{\Dmc}$ is the conjunction of all $\{b\} \sqcap A$ with $A(b)\in \Dmc$ and $\{b\}\sqcap \exists r.\{c\}$ with $r(b,c)\in \Dmc$.
	Pick for any symbol $X$ not in $\Sigma$ a fresh copy $X'$.
	Let $\Omc_{1}=\Omc$ and obtain $\Omc_{2}$ from $\Omc$ by replacing all symbols not in $\Sigma$ by their copies. Let $C_{1}= C_{\Dmc_{1},a_{1}} \sqcup \cdots \sqcup C_{\Dmc_{n},a_{n}}$ and obtain $C_{2}$ from $\neg(C_{\Dmc_{n+1},a_{n+1}} \sqcup \cdots \sqcup C_{\Dmc_{n+m},a_{n+m}})$ by replacing all symbols not in $\Sigma$ by their copies.
	If $\Lmc$ admits the universal role, then $\Omc_{1},\Omc_{2}$ and $C_{1},C_{2}$ are as required. Otherwise replace the universal role in any $C_{\Dmc_{i},a_{i}}$ by fresh role names not in $\Sigma$. The resulting $C_{1},C_{2}$ are still as required.
	
    Conversely, assume that $\Omc_{1}$, $\Omc_{2}$ and $C_{1},C_{2}$ are given. Introduce fresh individual names $a,b$ and fresh concept names $A, B$ and let $P=\{(\{A(a)\},a)\}$, $N=\{(\{B(b)\},b)\}$ and  
	$
	\Omc = \Omc_{1} \cup \Omc_{2} \cup \{A \sqsubseteq C_{1}, B \sqsubseteq \neg C_{2}\}$.
	Then $\Omc,P,N$ is as required. 
\end{proof}
%
%
%

We next introduce the relevant definability notions. Let $\Omc$ be an
ontology and $C,C_{0}$ be concepts. Let $\Sigma$
be a signature. An $\Lmc(\Sigma)$-concept $D$ is an \emph{explicit
$\Lmc(\Sigma)$-definition of $C_{0}$ under $\Omc$ and $C$} if
$\Omc \models C \sqsubseteq (C_{0}\leftrightarrow D)$.
We call $C_{0}$ \emph{explicitly
$\Lmc(\Sigma)$-definable
under $\Omc$ and $C$} if there is an explicit $\Lmc(\Sigma)$-definition of $C_{0}$
under $\Omc$ and $C$. If $C=\top$ or $\Omc$ is empty, then we drop $\Omc$ and $C$, respectively.
For instance, an $\Lmc(\Sigma)$-concept $D$ is an explicit
$\Lmc(\Sigma)$-definition of $C_{0}$ under $\Omc$ if $\Omc \models C_{0} \equiv D$. 
The following example illustrates the link between explicit definitions of nominals and referring expressions discussed in the introduction and also indicates that often one can single out an individual from a set of individuals using an explicit definition without being able to provide an `absolute' explicit definition of that individual.
%

\begin{example}\label{exm:exbb}
Let $\mathsf{L}$ be an abbreviation for the $\mathcal{ALCO}$-concept
\[
\{ \mathcal{ALC} \} \sqcup  \{ \mathcal{ALCO} \} \sqcup \{ \mathcal{ALCO}^{u} \} \sqcup \{   \textnormal{ML} \} \sqcup \{ \textnormal{ML}_{n} \} \sqcup \{\textnormal{ML}_{n}^{u} \}, 
\]
where ML, ML$_{n}$, and ML$_{n}^{u}$ are modal logics introduced below in Section~\ref{sec:modal}. Let
$\mathcal{O}$ be the ontology consisting of the following CIs:
\begin{align*}
%
%
%
%
%
%
%
\{ \mathcal{ALC} \} \sqcup  \{ \mathcal{ALCO} \} \sqcup \{ \mathcal{ALCO}^{u} \} \sqsubseteq & \  \exists {\mathsf{hasOperator}}.{\mathsf{DLOperator}} \ \sqcap \\
& \ \lnot \exists {\mathsf{hasOperator}}.{\mathsf{MLOperator}}, \\
\{   \textnormal{ML} \} \sqcup \{ \textnormal{ML}_{n} \} \sqcup \{\textnormal{ML}_{n}^{u} \}  \sqsubseteq & \ \exists {\mathsf{hasOperator}}.{\mathsf{MLOperator}} \ \sqcap \\
& \ \lnot \exists {\mathsf{hasOperator}}.{\mathsf{DLOperator}}, \\
\{ \mathcal{ALC}\}\sqcup\{\textnormal{ML} \}  \sqsubseteq & \ {\mathsf{EnjoysCIP}}, \\
{\mathsf{EnjoysCIP}} \sqsubseteq & \ {\mathsf{EnjoysPBDP}}, \\
{\mathsf{EnjoysPBDP}} \sqsubseteq & \ {\mathsf{EnjoysBDP}}, \\
\{ \mathcal{ALCO}^{u} \}\sqcup\{ \textnormal{ML}_{n}^{u} \}  \sqsubseteq & \ {\mathsf{EnjoysBDP}} \sqcap \lnot {\mathsf{EnjoysPBDP}}, \\
\{ \mathcal{ALCO}\}\sqcup\{ \textnormal{ML}_{n} \}  \sqsubseteq & \ \lnot {\mathsf{EnjoysBDP}}.
\end{align*}
Then $
	\Omc \models \mathsf{L}  \sqsubseteq (\{ \mathcal{ALC} \}
	\leftrightarrow {\mathsf{EnjoysCIP}} \sqcap \exists
	{\mathsf{hasOperator}}.{\mathsf{DLOperator}})$. Hence, $\{ \mathcal{ALC} \}$ is
	explicitly 
	$\mathcal{ALCO}(\Sigma_{0})$-definable
	under $\Omc$ and $\mathsf{L}$, with $\Sigma_{0} = \{ {\mathsf{EnjoysCIP}}, {\mathsf{DLOperator}}, {\mathsf{hasOperator}} \}$.
However, $\{ \mathcal{ALC} \}$ is not
explicitly
$\mathcal{ALCO}(\Sigma)$-definable
under $\Omc$, for any signature $\Sigma$ with $\mathcal{ALC} \not \in \Sigma$, since
there does not exist an $\mathcal{ALCO}(\Sigma)$-concept $C$ such that $\Omc \models \{ \mathcal{ALC} \} \equiv C$.
\qedex
\end{example}

We next define when a concept is implicitly definable. For a signature
$\Sigma$, the \emph{$\Sigma$-reduct} $\Imc_{|\Sigma}$ of an
interpretation $\Imc$ coincides with $\Imc$ except that no
non-$\Sigma$ symbol is interpreted in $\Imc_{|\Sigma}$. A concept $C_{0}$
is called \emph{implicitly
 $\Sigma$-definable
under $\Omc$ and $C$} if
the $\Sigma$-reduct of any pointed model $\Imc,d$ with $\Imc$ a model of $\Omc$ and $d\in C^{\Imc}$ determines whether $d\in C_{0}^{\Imc}$. More formally, $C_{0}$
is implicitly
  $\Sigma$-definable
under $\Omc$ and $C$ if the following holds for all models $\Imc$ and $\Jmc$ of $\Omc$ and $d\in \Delta^{\Imc}=\Delta^{\Jmc}$: if $\Imc_{|\Sigma} = \Jmc_{|\Sigma}$ and $d\in C^{\Imc}$, then
$d\in C_{0}^{\Imc}$ iff $d\in C_{0}^{\Jmc}$.
If $C=\top$, then we drop $C$ and
	say that $C_{0}$
	is implicitly
	$\Sigma$-definable
	under $\Omc$. To illustrate, observe that in Example~\ref{exm:exbb}, $\{\ALC\}$ is not implicitly 
  $\Sigma$-definable, for any $\Sigma$
	such that $\mathcal{ALC} \not \in \Sigma$ under $\Omc$.
	Implicit definability can
be reformulated as a standard reasoning problem as follows: a concept
$C_{0}$ is
implicitly
 $\Sigma$-definable
under $\Omc$ and $C$ iff 
\begin{eqnarray}\label{eq:impl}
\Omc \cup \Omc_{\Sigma} \models C \sqcap C_{0} \sqsubseteq C_{\Sigma} \rightarrow {C_{0}}_{\Sigma}
\end{eqnarray}
where $\Omc_{\Sigma}$, $C_{\Sigma}$, and ${C_{0}}_{\Sigma}$ are obtained from $\Omc$, $C$ and, respectively, $C_{0}$, by
replacing every non-$\Sigma$ symbol uniformly by a fresh symbol. If a
concept is
explicitly
 $\Lmc(\Sigma)$-definable
under $\Omc$ and $C$, then
it is implicitly
 $\Sigma$-definable
under $\Omc$ and $C$, for any
language $\Lmc$. A logic enjoys the projective Beth definability
property if the converse implication holds as well.
\begin{definition}\label{def:pbdp}
  A DL $\Lmc$ \emph{enjoys the projective Beth definability 
  property} (\emph{PBDP}) if for
  any $\Lmc$-ontology $\Omc$, $\Lmc$-concepts $C$ and $C_{0}$, and signature
  $\Sigma\subseteq \text{sig}(\Omc,C)$ the following holds: if $C_{0}$ is
  implicitly
 $\Sigma$-definable
  under $\Omc$ and $C$, then $C_{0}$ is
  explicitly $\Lmc(\Sigma)$-definable under $\Omc$ and $C$. 
If $\Omc$ ranges over $\Lmc$-ontologies containing RIs only or $\Omc = \emptyset$, then we say that $\Lmc$ \emph{enjoys the PBDP for RI-ontologies} and \emph{the PBDP for the empty ontology}, respectively.
\end{definition}
The DLs $\mathcal{ALC}$, $\mathcal{ALCI}$, and their extensions with qualified number restrictions and the universal role all enjoy the PBDP~\cite{TenEtAl13}. The following example shows that, in contrast, $\mathcal{ALCH}$ does not.
\begin{example}\label{ex:2}
	Consider $\Omc=\{r \sqsubseteq r_{1}, r \sqsubseteq r_{2}\}$ and let 
	\[
	C = \big((\neg \exists r.\top \sqcap \exists r_{1}.A) \rightarrow \forall r_{2}.\neg A\big)
	\sqcap \big((\neg \exists r.\top \sqcap \exists r_{1}.\neg A) \rightarrow \forall r_{2}.A\big).
      \]
	Let $\Sigma=\{r_{1},r_{2}\}$ and $C_{0}=\exists r.\top$. Then the concept $D=\exists r_{1} \cap r_{2}.\top$ is
	an explicit definition of $C_{0}$ under $\Omc$ and $C$ in the extension of $\mathcal{ALCH}$ with role intersection (the semantics of $r_{1}\cap r_{2}$ is defined in the obvious way). Hence $C_{0}$ is implicitly
	 $\Sigma$-definable 
	under $\Omc$ and $C$. There does not exist an explicit $\mathcal{ALCH}(\Sigma)$-definition of $C_{0}$ under $\Omc$ and $C$, however. Intuitively, the reason is that role intersection cannot be expressed in $\mathcal{ALCH}$ (see Example~\ref{ex:4} below for a proof). 
	
	Note that an example without "background concept'' $C$ can be obtained by taking the ontology
	\begin{align*}
	\Omc' = & \{\ r\sqsubseteq r_{1},\ \ \ r\sqsubseteq r_{2}, \\
& 	\ \ \neg\exists r.\top \sqcap \exists r_{1}.A \sqsubseteq \forall r_{2}.\neg A, \ \ \ \neg \exists r.\top \sqcap \exists r_{1}.\neg A\sqsubseteq \forall r_{2}.A\ \}
	\end{align*}
	and asking for an explicit $\mathcal{ALCH}(\{r_{1},r_{2}\})$-definition of $\exists r.\top$ 
	under $\Omc'$. \qedex
\end{example}
It is known that the CIP and PBDP are tightly
linked~\cite{TenEtAl13}. We state the inclusion for logics in $\Dln$ only, but the proof shows that it holds under rather mild conditions.
\begin{lemma}\label{lem:CIPBDP} 
  If $\Lmc\in \Dln$ enjoys the CIP, then \Lmc enjoys the PBDP. 
\end{lemma}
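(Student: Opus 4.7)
The plan is to derive the PBDP from the CIP via the standard renaming-plus-interpolation trick. I would start by fixing an $\Lmc$-ontology $\Omc$, an $\Lmc$-concept $C$, and a signature $\Sigma \subseteq \text{sig}(\Omc,C)$ such that $C$ is implicitly definable from $\Sigma$ under $\Omc$. Invoking the reformulation of implicit definability recalled just before Definition~\ref{def:pbdp}, this assumption is equivalent to $\Omc \cup \Omc_{\Sigma} \models C \equiv C_{\Sigma}$, where $\Omc_{\Sigma}$ and $C_{\Sigma}$ arise by uniformly replacing every non-$\Sigma$ symbol by a fresh copy. In particular, $\Omc \cup \Omc_{\Sigma} \models C \sqsubseteq C_{\Sigma}$.

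Next I would apply the CIP to this implication, viewing $\Omc$ together with $C$ as one side and $\Omc_{\Sigma}$ together with $C_{\Sigma}$ as the other. The symbols shared by $\text{sig}(\Omc,C)$ and $\text{sig}(\Omc_{\Sigma},C_{\Sigma})$ are exactly those of $\Sigma$, since all non-$\Sigma$ symbols on the right-hand side have been replaced by fresh ones. Hence the CIP yields an $\Lmc(\Sigma)$-concept $D$ with
\[
\Omc \cup \Omc_{\Sigma} \models C \sqsubseteq D \quad \text{and} \quad \Omc \cup \Omc_{\Sigma} \models D \sqsubseteq C_{\Sigma}.
\]

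Finally, I would argue that $D$ is already an explicit $\Lmc(\Sigma)$-definition of $C$ under $\Omc$ alone. For $\Omc \models C \sqsubseteq D$, I take any model $\Imc$ of $\Omc$ and expand it to $\Imc^{+}$ by interpreting each fresh symbol exactly as its original counterpart; this $\Imc^{+}$ is a model of $\Omc \cup \Omc_{\Sigma}$, and since $C$ and $D$ mention only original symbols, $C^{\Imc} = C^{\Imc^{+}} \subseteq D^{\Imc^{+}} = D^{\Imc}$. For $\Omc \models D \sqsubseteq C$, I use the symmetric expansion that copies the original non-$\Sigma$ symbols into their fresh counterparts while leaving the $\Sigma$-part of $\Imc$ fixed, noting that $D$ is unaffected because $\text{sig}(D) \subseteq \Sigma$, whereas $C_{\Sigma}$ coincides on $\Imc^{+}$ with $C^{\Imc}$. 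The argument is uniform across all DLs in $\Dln$ and contains no real obstacle: the only mild bookkeeping is checking that the renamed axioms are satisfied by these expansions, which is immediate because $\Omc_{\Sigma}$ is by construction a syntactic copy of $\Omc$.
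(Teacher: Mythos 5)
Your proof is correct and follows exactly the route the paper takes: reformulate implicit definability as $\Omc\cup\Omc_{\Sigma}\models C\sqsubseteq C_{\Sigma}$, apply the CIP to obtain an interpolant $D$ over the shared signature (which is contained in $\Sigma$), and observe that $D$ explicitly defines $C$ under $\Omc$. The model-expansion argument you give for the final step is exactly the standard justification the paper leaves implicit.
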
 
\begin{proof}
Assume that an $\Lmc$-concept $C_{0}$ is
implicitly $\Sigma$-definable
under an $\Lmc$-ontology $\Omc$ and $\Lmc$-concept $C$, for some signature
$\Sigma$.
Then \eqref{eq:impl} holds. Take an $\Lmc(\Sigma)$-interpolant $D$ for $C \sqcap C_{0} \sqsubseteq C_{\Sigma} \rightarrow {C_{0}}_{\Sigma}$ under $\Omc\cup \Omc_{\Sigma}$. 
Then $D$ is an explicit $\Lmc(\Sigma)$-definition of $C_{0}$ under $\Omc$ and $C$.
\end{proof}
 An important special case of explicit definability is the explicit definability of a concept name $A$ from $\text{sig}(\Omc,C)\setminus
\{A\}$ under an ontology $\Omc$ and concept $C$. For this case, we also consider the
following \emph{non-projective version} of the Beth definability
property.

\begin{definition}
A DL $\Lmc$ \emph{enjoys the Beth definability property} (\emph{BDP})
if for any $\Lmc$-ontology $\Omc$, concept $C$, and any concept name $A$ the
following holds: if $A$ is implicitly 
$(\text{sig}(\Omc,C)\setminus \{A\})$-definable

under $\Omc$ and $C$, then $A$ is
explicitly $\Lmc(\text{sig}(\Omc,C)\setminus \{A\})$-definable under
$\Omc$ and $C$.
If $\Omc$ ranges over $\Lmc$-ontologies containing RIs only or $\Omc = \emptyset$, then we say that $\Lmc$ \emph{enjoys the BDP for RI-ontologies} and \emph{the BDP for the empty ontology}, respectively.
\end{definition}

Clearly the PBDP entails the BDP, but we will see below that the converse direction does not always hold. In fact, the following theorem states that no DL in $\Dln$ enjoys the CIP or PBDP, but that quite a few DLs in $\Dln$ enjoy the BDP.
Moreover, all DLs in $\Dln$ enjoy the BDP for RI-ontologies and for the empty ontology.

	As mentioned before,
the theorem is mostly folklore and therefore proved in the appendix. 

\begin{restatable}{theorem}{thmcipandbdp}\label{thm:cipandbdp}
The following statements hold.
  \begin{enumerate}[label={(\arabic*)},leftmargin=6mm]
    \item No $\Lmc\in \Dln$ enjoys the CIP nor the PBDP.
    The CIP and PBDP also do not hold for RI-ontologies and, if $\Lmc$ admits nominals, the empty ontology.
%
    \item All $\Lmc\in \Dln\setminus
      \{\mathcal{ALCO},\mathcal{ALCHO}\}$ enjoy the BDP.
      $\mathcal{ALCO}$ and $\mathcal{ALCHO}$ do not enjoy the BDP.
     \item
     All $\Lmc\in \Dln$ enjoy the BDP for RI-ontologies and the BDP for the empty ontology.
  \end{enumerate}

\end{restatable}
We have seen that all $\Lmc\in \Dln\setminus
\{\mathcal{ALCO},\mathcal{ALCHO}\}$ enjoy the BDP. One might be tempted to conjecture that this holds as well if concept names are replaced by nominals; that is to say, a nominal $\{a\}$ that is implicitly definable using symbols distinct from $a$ is explicitly definable using symbols distinct from $a$. Rather surprisingly, the following example shows that this is not the case for any DL in $\Dln$ with nominals (for DLs without nominals this notion is clearly meaningless).
\begin{example}\label{exm:bdlnominals}
	Let $\Lmc\in \Dln$ admit nominals and assume that 
\begin{align*}
	\Omc = \{
	& \{a\} \sqsubseteq \exists r.\{a\}, \\
	& A \sqcap \neg \{a\} \sqsubseteq \forall r.(\neg \{a\} \rightarrow \neg A), \\
	& \neg A \sqcap \neg \{a\} \sqsubseteq \forall r.(\neg \{a\} \rightarrow A) \}.  
\end{align*}
Thus, $\mathcal{O}$ implies that $a$ is $r$-reflexive and 
that no element distinct from $a$ is $r$-reflexive. Let $\Sigma=\{r,A\}$.
Then $\{a\}$ is implicitly
$\Sigma$-definable
under $\Omc$ since 
we have the following explicit definition in first-order logic:
\[
\Omc \models \forall x ((x=a) \leftrightarrow r(x,x)), 
\]
but one can show that $\{a\}$ is not explicitly $\Lmc(\Sigma)$-definable under $\Omc$
for any $\Lmc\in \Dln$ with nominals.
%
%
		Indeed,
		the interpretation
		$\mathcal{I}$
		in Figure~\ref{fig:nopbdp}
		(where $a^{\Imc} = a$)
		is a model
		of $\Omc$ and the relation
		$S=\Delta^{\Imc}\times \Delta^{\Imc}$ is an
		$\mathcal{L}(\Sigma)$-bisimulation on $\Imc$. Thus,
		Lemma~\ref{lem:equivalence} implies
		%
		$
		\Imc, a^{\Imc} \equiv_{\mathcal{L},\Sigma} \Imc, d
		$
		%
		and there is no explicit
		$\mathcal{L}(\Sigma)$-definition
		for $\{a\}$ under $\mathcal{O}$, as any such definition would apply
		to $d$ as well.
\qedex

		
		
		\begin{figure}[th]
			\centering
			\begin{tikzpicture}
				\tikzset{
					dot/.style = {draw, fill=black, circle, inner sep=0pt, outer sep=0pt, minimum size=2pt}
				}
				
				\draw (-0.75,0.5) node[label=west:$\mathcal{I}$] {};
				\node (a) at (1.5,0) {};
				\draw (a) node[dot, label=south:$a$, label=north:$A$] {};
				\node (b) at (-0.5,0) {};
				\draw (b) node[dot, label=north:$A$,  label=south:$d$] {};
				
				\draw[->, >=stealth]  (a) edge [loop right] node[] {$r$} ();
				\path[->, >=stealth, bend right] (a) edge (b);
				\path[->, >=stealth, bend right] (b) edge (a);
				\node (x) at (0.5,0.5) {$r$};
				\node (y) at (0.5,-0.5) {$r$};
				
				%
				%

				
			\end{tikzpicture}
			\caption{
				Failure of the PBDP for $\Lmc\in \Dln$ with nominals.
			}
			\label{fig:nopbdp}
		\end{figure}

\end{example}

The CIP defined above is concerned with interpolating concepts. In the context of modular ontologies and forgetting there is also an interest in interpolating concept inclusions~\cite{DBLP:series/lncs/KonevLWW09}.
For simplicity, we only consider DLs without RIs.
Let $\Lmc$ not admit RIs and let $\Omc$ and $\Omc'$ be $\Lmc$-ontologies. We write $\Omc\models \Omc'$ if $\Omc\models \alpha$ for all $\alpha \in \Omc'$. Then an $\Lmc$-ontology $\Omc''$ is called an \emph{$\Lmc$-CI interpolant for $\Omc$ and $\Omc'$} if $\text{sig}(\Omc'') \subseteq \text{sig}(\Omc)\cap \text{sig}(\Omc')$, $\Omc\models \Omc''$,
		and $\Omc'' \models \Omc'$. If the particular language $\Lmc$ is clear from the context or not important we drop it and call $\Lmc$-CI interpolants simply CI-interpolants.
		
	\begin{definition}
		Let $\Lmc$ be a DL that does not admit RIs. Then $\Lmc$ has the
		\emph{CI-interpolation property}
		if for all $\Lmc$-ontologies $\Omc$ and $\Omc'$ such that $\Omc\models \Omc'$ there exists an $\Lmc$-CI-interpolant for $\Omc$ and $\Omc'$.
	\end{definition}
Observe that for any $\Lmc\in \Dln$ that does not admit RIs and $\Lmc$-ontology $\Omc$ one can construct in linear time an $\Lmc$-concept $D$ such that $\Omc$ and $\{\top \sqsubseteq D\}$ are equivalent in the sense that $\Omc \models \top\sqsubseteq D$ and $\{\top \sqsubseteq D \}\models \Omc$. 
Hence $\Lmc$ has the
CI-interpolation property
if for all $\Lmc$-ontologies $\Omc$ and $\Lmc$-CIs $C \sqsubseteq D$ such that $\Omc\models C \sqsubseteq D$ there exists an $\Lmc$-CI-interpolant for $\Omc$ and $\{C \sqsubseteq D\}$. It is known that $\mathcal{ALC}$ and its extensions with inverse roles, qualified number restrictions, and the universal role enjoy the CI-interpolation property~\cite{DBLP:series/lncs/KonevLWW09}. The following example shows that no DL in $\Dln$
that does not admit RIs enjoys the CI-interpolation property.
\begin{example}
We
 modify the ontology given in Example~\ref{exm:bdlnominals}. Let 
 \[
 \Omc'= \{ A \sqsubseteq \forall r.\neg A, \neg A \sqsubseteq \forall r. A \}.  
 \]
 We have that $\Omc'\models \{a\} \sqsubseteq \lnot \exists r.\{a\}$, but there does not exist
 an $\mathcal{ALCOI}^{u}$-CI-interpolant for $\Omc'$ and $\{a\}\sqsubseteq \lnot \exists r.\{a\}$, since one cannot express using an $\mathcal{ALCOI}^{u}(\{r\})$-CI that $\forall x \;\neg r(x,x)$.
Indeed, assume for a proof by contradiction that there exists an $\mathcal{ALCOI}^{u}$-CI-interpolant $\Omc''$ for $\Omc'$ and $\{ \{a\}\sqsubseteq \lnot \exists r.\{a\}\}$.
Consider the interpretations $\Imc_{1}, \Imc_{2}$ in Figure~\ref{fig:exCIinterpolant}, where $\Imc_{1} \models \Omc'$, $a^{\Imc_{2}} \in (\{a\} \sqcap \exists r.\{a\})^{\Imc_{2}}$, and $\Imc_{1}, d \sim_{\mathcal{ALCOI}^{u}, \{ r \}} \Imc_{2}, a^{\Imc_{2}}$.
Since $\Imc_{1} \models \Omc'$, we have that $\Imc_{1} \models \Omc''$, where $\Omc''$ can be assumed to be of the form $\{ \top \sqsubseteq D \}$, with $\text{sig}(D) \subseteq \{ r \}$.
Thus,
$d \in (\forall u.D)^{\Imc_{1}}$ and,
from $\Imc_{1}, d \sim_{\mathcal{ALCOI}^{u}, \{ r \}} \Imc_{2},
a^{\Imc_{2}}$, we obtain by Lemma~\ref{lem:equivalence} that
$a^{\Imc_{2}} \in (\forall u.D)^{\Imc_{2}}$.
This implies $\Imc_{2} \models \Omc''$, and hence $\Imc_{2} \models \{ a \} \sqsubseteq \lnot \exists r. \{ a \}$, contrary to the assumption that $a^{\Imc_{2}} \in (\{a\} \sqcap  \exists r.\{a\})^{\Imc_{2}}$.
\qedex
 

	\begin{figure}[th]
		\centering
		\begin{tikzpicture}
			\tikzset{
				dot/.style = {draw, fill=black, circle, inner sep=0pt, outer sep=1pt, minimum size=2pt}
			}

			\draw (-1,0.5) node[label=$\mathcal{I}_{1}$] (I1) {};
			
			\draw (0,-0.5) node[dot, label={[shift={(-0.25,-0.25)}]:$d$}, label={[shift={(0,-0.6)}]:$\lnot A$}, label={[shift={(-1,0.15)}]:$\mathcal{O}'$}] (d) {};
			
			\draw (0,0.5) node[dot, label={[shift={(-0.25,-0.25)}]:$e$}, label={[shift={(0,0)}]:$A$}] (e) {};
			
			
			\draw[->, >=stealth, bend right] (d) edge (e) node[label={[shift={(0.35,0.2)}]:$r$}] {};
			\draw[->, >=stealth, bend right] (e) edge (d) node[label={[shift={(-0.35,-0.8)}]:$r$}] {};

			\draw (5.6,0.5) node[label=$\mathcal{I}_{2}$] (I2) {};		
			
			\draw (5,0) node[dot, label={[shift={(0,-0.5)}]:$a$}, label={[shift={(1.35,-0.35)},align=center]:$\{ a \} \sqcap \exists r.\{ a \}$}] (a) {};
			
			
			\draw[->, >=stealth]  (a) edge [loop above] node[] {$r$} ();
			
			
			\path[black, dashed] (a) edge (e);
			\draw (2.5,0.15) node[label=$\sim_{\mathcal{ALCOI}^{u}, \Sigma}$] () {};

			\path[black, dashed] (a) edge (d);
			\draw (2.5,-0.9) node[label=$\sim_{\mathcal{ALCOI}^{u}, \Sigma}$] () {};
			
		\end{tikzpicture}
		
		\caption{Failure of the CI-interpolation property for $\Lmc \in \Dln$ without RIs.}
	\label{fig:exCIinterpolant}
	
\end{figure}

\end{example}

In this article we focus on interpolating concepts and not CIs. The main reasons are that the corresponding notion of an explicit definition of an ontology appears to be less useful than definitions of concepts and nominals and that while the CI-interpolation property is crucial for robust decompositions of ontologies and for robust forgetting~\cite{DBLP:series/lncs/KonevLWW09},
checking the existence of an interpolant or computing it for concrete ontologies and CIs appears to not have found any applications yet. Regarding the first point,
observe that CI-interpolants correspond to the following notion of an explicit CI-definition of an ontology. Let $\Sigma$ be a signature and $\Omc$ and $\Omc'$ ontologies. Then an $\Lmc(\Sigma)$-ontology $\Omc''$ is called an \emph{explicit $\Lmc(\Sigma)$-definition of $\Omc'$ under $\Omc$} if $\Omc\cup \Omc'\models \Omc''$ and $\Omc \cup \Omc''\models \Omc'$. In particular, if $\Omc$ is empty then one asks for an ontology using symbols in $\Sigma$ only that is equivalent to $\Omc$. While the existence of such ontologies is an interesting theoretical question that could well have applications in the future, investigating this problem is beyond the focus of this article. In what follows we only consider aspects of CI-interpolants that are closely related to concept interpolants, leaving their detailed investigation for future work.

\section{Main Results}
\label{sec:notions}

The failure of CIP and (P)BDP reported in Theorem~\ref{thm:cipandbdp}
imply that interpolant existence and projective and non-projective
definition existence cannot be directly polynomially reduced to subsumption checking.
This motivates studying the respective decision problems of
interpolant existence and projective and non-projective definition existence.
In this section we introduce the decision problems, formulate model-theoretic characterizations of the problems that play a fundamental role in our proofs, and we formulate the main results.

We start with interpolant existence for which we take the definition used in the 
formulation of the CIP.
\begin{definition}
  Let $\Lmc$ be a DL. Then \emph{$\Lmc$-interpolant existence} is the
  problem to decide for any $\Lmc$-ontologies $\Omc_{1},\Omc_{2}$ and $\Lmc$-concepts $C_{1},C_{2}$, whether there exists an
  $\Lmc(\Sigma)$-interpolant for $C_{1}\sqsubseteq C_{2}$ under $\Omc_{1}\cup \Omc_{2}$,
  where $\Sigma=\text{sig}(\Omc_{1},C_{1})\cap \text{sig}(\Omc_{2},C_{2})$. 
\end{definition}
In our proofs, we actually focus on a more general version of interpolant existence which has been discussed in the previous section and in which we do not split $\Omc$ into two ontologies and in which $\Sigma$ is arbitrary.
\begin{definition}
	Let $\Lmc$ be a DL. Then \emph{generalized $\Lmc$-interpolant existence} is the
	problem to decide for any $\Lmc$-ontology $\Omc$,
	$\Lmc$-concepts $C_{1},C_{2}$, and signature $\Sigma$ whether there exists an
	$\Lmc(\Sigma)$-interpolant for $C_{1}\sqsubseteq C_{2}$ under $\Omc$.
\end{definition}
We also consider (generalized) $\Lmc$-interpolant existence with empty
ontologies, called \emph{ontology-free (generalized) $\Lmc$-interpolant existence}, and with
RI-ontologies, called \emph{RI-ontology (generalized) $\Lmc$-interpolant existence},
both
defined in the obvious way. Observe that in the ontology-free case
there is no difference between generalized interpolant existence and
interpolant existence. In fact, also with ontologies generalized
interpolant existence and interpolant existence are interreducible.
\begin{lemma}\label{lem:mutual}
  Let $\Lmc\in \Dln$. There are mutual polynomial time reductions
  between generalized $\Lmc$-interpolant existence and
  $\Lmc$-interpolant existence.
\end{lemma}
\begin{proof} 
  The reduction from \Lmc-interpolant existence to generalized \Lmc-interpolant
  existence is trivial: for input $\Omc_1,\Omc_2,C_1,C_2$ to
  \Lmc-interpolant existence, set $\Omc=\Omc_1\cup\Omc_2$ and
  $\Sigma=\text{sig}(\Omc_1,C_1)\cap \text{sig}(\Omc_2,C_2)$. 

  For the converse reduction from generalized interpolant existence to
  interpolant existence, assume that an $\Lmc$-ontology $\Omc$, $\Lmc$-concepts
  $C_{1},C_{2}$, and $\Sigma$ are given. Then there exists an
  $\Lmc(\Sigma)$-interpolant for $C_{1}\sqsubseteq C_{2}$ under $\Omc$
  iff there exists an $\Lmc(\Sigma)$-interpolant for $C_{1}\sqsubseteq
  {C_{2}}_{\Sigma}$ under $\Omc\cup \Omc_{\Sigma}$, where
  $\Omc_{\Sigma}$ and ${C_{2}}_{\Sigma}$ are obtained from $\Omc$ and
  $C_{2}$ by replacing every non-$\Sigma$ symbol uniformly by a fresh
  symbol. The latter is an instance of $\Lmc$-interpolant
  existence.  
\end{proof} 
Note that the reduction above works for all
standard DLs including $\mathcal{ALC}$.  Recall that interpolant
existence reduces to checking $\Omc_{1}\cup \Omc_{2}\models
C_{1}\sqsubseteq C_{2}$ for logics with the CIP. Hence, for DLs which
enjoy the CIP such as $\mathcal{ALC}$, interpolant existence and
generalized interpolant existence are \ExpTime-complete and
ontology-free interpolant existence and generalized ontology-free
interpolant existence are \PSpace-complete. 

We next introduce the relevant definition existence problems.
\begin{definition}
  Let $\Lmc$ be a DL. \emph{Projective $\Lmc$-definition existence} is
  the problem to decide for any $\Lmc$-ontology $\Omc$, $\Lmc$-concepts
  $C$ and $C_{0}$, and signature $\Sigma$, whether
  there exists an explicit $\Lmc(\Sigma)$-definition of $C_{0}$ under
  $\Omc$ and $C$. 
  
  \emph{$($Non-projective$)$ $\Lmc$-definition existence of concept names (nominals)} is the
  sub-problem where $C_{0}$ ranges only over concept names $A$ (nominals $\{a\}$) and
  $\Sigma=\text{sig}(\Omc,C)\setminus\{A\}$ (and $\Sigma=\text{sig}(\Omc,C)\setminus\{a\}$, respectively).
\end{definition}
We also consider
the (projective) $\Lmc$-definition existence problems with empty ontologies, called \emph{ontology-free (projective) $\Lmc$-definition existence}, and with RI-ontologies, called \emph{RI-ontology (projective) $\Lmc$-definition existence}, both defined in the obvious way. Similar to the case of  interpolant existence, definition existence reduces to checking implicit
definability for logics with the PBDP. 
We provide model-theoretic characterizations for the non-existence of generalized interpolants and explicit definitions in terms of bisimulations.
\begin{definition}[Joint consistency] Let $\mathcal{L}\in \Dln$.
	Let $\Omc$ be an $\Lmc$-ontology, $C_{1},C_{2}$ be
	$\Lmc$-concepts, and $\Sigma$ a signature. Then
	$C_{1}$ and $C_{2}$ are called \emph{jointly
		consistent under \Omc modulo $\Lmc(\Sigma)$-bisimulations} if there exist
	pointed interpretations $\Imc_{1},d_{1}$ and $\Imc_{2},d_{2}$ such that
	$\Imc_{i}$ is a model of $\Omc$, $d_{i}\in C_{i}^{\Imc_{i}}$,
	for $i=1,2$, and $\Imc_{1},d_{1}\sim_{\Lmc,\Sigma} \Imc_{2},d_{2}$.
\end{definition}
The associated decision problem, \emph{joint consistency modulo
	$\Lmc$-bisimu-lations}, is defined in the expected way. The following
result characterizes the existence of interpolants using joint
consistency modulo $\Lmc(\Sigma)$-bisimulations. The proof uses
Lemma~\ref{lem:equivalence}.
\begin{restatable}{theorem}{thminterpolants}\label{thm:critint}
	Let $\mathcal{L}\in \Dln$.  Let $\Omc$ be an
	$\Lmc$-ontology, $C_{1},C_{2}$ be $\Lmc$-concepts, and
	$\Sigma$ a signature. Then the
	following conditions are equivalent:
	\begin{enumerate}
		
		\item there is no $\Lmc(\Sigma)$-interpolant for
		$C_{1}\sqsubseteq C_{2}$ under $\Omc$;
		
		\item $C_{1}$ and $\neg
		C_{2}$ are jointly consistent under \Omc modulo $\Lmc(\Sigma)$-bisimula-tions.
		
\end{enumerate} \end{restatable}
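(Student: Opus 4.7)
The plan is to prove the equivalence in two parts. The direction (2) $\Rightarrow$ (1) exploits bisimulation-invariance of $\Lmc(\Sigma)$-concepts, while the harder direction (1) $\Rightarrow$ (2) is a first-order compactness argument based on two nested complete types, followed by passage to $\omega$-saturated elementary extensions so that Lemma~\ref{lem:equivalence} can convert $\Lmc(\Sigma)$-equivalence into an $\Lmc(\Sigma)$-bisimulation.

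For (2) $\Rightarrow$ (1), I would contrapose: any $\Lmc$-interpolant $D$ is by definition an $\Lmc(\Sigma)$-concept, so the ``if''-direction of Lemma~\ref{lem:equivalence} (which does not require $\omega$-saturation) applied to $\Imc_1, d_1 \sim_{\Lmc, \Sigma} \Imc_2, d_2$ forces $d_1 \in D^{\Imc_1}$ iff $d_2 \in D^{\Imc_2}$. But $\Omc_1 \cup \Omc_2 \models C_1 \sqsubseteq D$ together with $d_1 \in C_1^{\Imc_1}$ gives $d_1 \in D^{\Imc_1}$, whereas $\Omc_1 \cup \Omc_2 \models D \sqsubseteq C_2$ together with $d_2 \in (\neg C_2)^{\Imc_2}$ gives $d_2 \notin D^{\Imc_2}$, a contradiction.

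For (1) $\Rightarrow$ (2), suppose no $\Lmc$-interpolant exists. Then $\Omc_1 \cup \Omc_2 \cup \{C_1\}$ and $\Omc_1 \cup \Omc_2 \cup \{\neg C_2\}$ are both satisfiable, for otherwise $\bot$ or $\top$, respectively, would interpolate. Set
$$
T = \{D \in \Lmc(\Sigma) : \Omc_1 \cup \Omc_2 \models C_1 \sqsubseteq D\}.
$$
By first-order compactness---available since every DL in $\Dln$ embeds into FOL---the set $\Omc_1 \cup \Omc_2 \cup \{\neg C_2\} \cup T$ is satisfiable: for any finite $\{D_1,\ldots,D_n\} \subseteq T$, the conjunction $D' = D_1 \sqcap \cdots \sqcap D_n$ lies in $T$, and were $\Omc_1 \cup \Omc_2 \models D' \sqsubseteq C_2$, then $D'$ would be an $\Lmc(\Sigma)$-interpolant, contrary to assumption. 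Let $\Imc_2, d_2$ be a pointed model witnessing this, so $d_2 \in (\neg C_2)^{\Imc_2}$ and $d_2$ realizes $T$. Next let $T_2 = \{E \in \Lmc(\Sigma) : d_2 \in E^{\Imc_2}\}$ be the full $\Lmc(\Sigma)$-type of $d_2$ and argue dually that $\Omc_1 \cup \Omc_2 \cup \{C_1\} \cup T_2$ is satisfiable: if some finite conjunction $E'$ of elements of $T_2$ were inconsistent with $C_1$ over $\Omc_1 \cup \Omc_2$, then $\neg E' \in T$, contradicting that $d_2$ satisfies both $E'$ and $T$. Compactness then produces $\Imc_1, d_1$ with $\Imc_1 \models \Omc_1 \cup \Omc_2$, $d_1 \in C_1^{\Imc_1}$, and $d_1$ realizing $T_2$. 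Since $\Lmc(\Sigma)$ is closed under negation, $T_2$ is a complete $\Lmc(\Sigma)$-type, hence $\Imc_1, d_1 \equiv_{\Lmc, \Sigma} \Imc_2, d_2$.

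To conclude, I would replace $\Imc_1, \Imc_2$ by $\omega$-saturated elementary extensions $\Imc_1^*, \Imc_2^*$; these preserve the FOL-expressible ontology $\Omc_1 \cup \Omc_2$ as well as the $\Lmc(\Sigma)$-equivalence of $d_1$ and $d_2$, after which the ``only if''-direction of Lemma~\ref{lem:equivalence} delivers $\Imc_1^*, d_1 \sim_{\Lmc, \Sigma} \Imc_2^*, d_2$, establishing joint consistency modulo $\Lmc(\Sigma)$-bisimulations. The main subtlety I anticipate is uniformity across $\Dln$: compactness and $\omega$-saturated elementary extensions are FOL-standard, but one must verify that Lemma~\ref{lem:equivalence} already subsumes the additional bisimulation clauses for inverse roles, the universal role (whose conditions on $\text{dom}(S)$ and $\text{ran}(S)$ follow from $\omega$-saturation), nominals (via [AtomI] for $a \in \Sigma$), and role inclusions. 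By the very statement of that lemma this is the case, so no per-logic adjustment is needed.
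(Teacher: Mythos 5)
Your proposal is correct and follows essentially the same route as the paper's proof: the same two compactness steps (first realizing the set of $\Lmc(\Sigma)$-consequences of $C_1$ together with $\neg C_2$, then realizing the full $\Lmc(\Sigma)$-type of that point together with $C_1$), followed by passage to $\omega$-saturated elementary extensions and Lemma~\ref{lem:equivalence}, with the converse direction handled by bisimulation-invariance exactly as in the paper. Your spelled-out justification of the second compactness step and the remark on uniformity across $\Dln$ are both accurate.
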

\begin{proof} \ The proof is standard and we refer the reader to~\cite{goranko20075} for similar proofs. We only provide a sketch.
	
	``1 $\Rightarrow$ 2''. Assume there is no $\Lmc(\Sigma)$-interpolant 
	for
	$C_{1}\sqsubseteq C_{2}$ under $\Omc$. Let
	\[
	\Gamma = \{ D \mid \Omc \models C_{1} \sqsubseteq D, D\in \Lmc(\Sigma)\}.
      \]
	Then $\Omc\not\models D\sqsubseteq C_{2}$, for any $D\in
	\Gamma$. As $\Gamma$ is closed under conjunction and by
	compactness (recall that $\mathcal{ALCHOI}^{u}$ is a fragment
	of first-order logic), there exists a model $\Jmc$ of $\Omc$
	and an element $d\in \Delta^{\Jmc}$ such that $d\in D^{\Jmc}$
	for all $D\in \Gamma$ but $d\not\in C_{2}^{\Jmc}$. Consider
	the set $t_{\Jmc}(d)= \{ D \in \Lmc(\Sigma) \mid d \in
	  D^{\Jmc}\}$. Then, using again compactness, there exists a
	  model $\Imc$ of $\Omc$ and an element $e\in \Delta^{\Imc}$ such that $e\in C_{1}^{\Imc}$ and $e\in D^{\Imc}$ for all $D\in t_{\Jmc}(d)$.
	Thus $\Imc,e\equiv_{\Lmc,\Sigma} \Jmc,d$. For every interpretation $\Imc$ there exists an $\omega$-saturated elementary extension $\Imc'$ of $\Imc$~\cite{modeltheory}. Thus, it follows from the fact that $\mathcal{ALCHOI}^{u}$ is a fragment of first-order logic that we may assume that both $\Imc$ and $\Jmc$
	are $\omega$-saturated. By Lemma~\ref{lem:equivalence}, $\Imc,e \sim_{\Lmc,\Sigma} \Jmc,d$.  
	
	\medskip
	
	``2 $\Rightarrow$ 1''. Assume an $\Lmc(\Sigma)$-interpolant $D$ for $C_{1}\sqsubseteq C_{2}$ under $\Omc$ exists.
	Assume that Condition~2 holds, that is, there are models $\Imc_{1}$ and $\Imc_{2}$ of $\Omc$ and $d_{i}\in \Delta^{\Imc_{i}}$ for $i=1,2$ such that
	$d_{1}\in C_{1}^{\Imc_{1}}$ and $d_{2}\not\in C_{2}^{\Imc_{2}}$ and
	$\Imc_{1},d_{1} \sim_{\Lmc,\Sigma} \Imc_{2},d_{2}$. Then, by Lemma~\ref{lem:equivalence},
	$\Imc_{1},d_{1} \equiv_{\Lmc,\Sigma} \Imc_{2},d_{2}$. But
	then from $d_{1}\in C^{\Imc_{1}}$ we obtain $d_{1}\in D^{\Imc_{1}}$ and
	so $d_{2}\in D^{\Imc_{2}}$ which implies $d_{2}\in C_{2}^{\Imc_{2}}$,
	a contradiction.
\end{proof}

\begin{example}\label{ex:3}
  Consider again $C_{1}=\{a\} \sqcap \exists r.\{a\}$ and $C_{2}=\{b\}
  \rightarrow \exists r.\{b\}$ from Example~\ref{ex:1} and set
  $\Sigma=\{r\}$. The interpretations $\Imc_{1}, \Imc_{2}$ depicted in
  Figure~\ref{fig:ex3} (where we set $a^{\Imc_i} = a$ and $b^{\Imc_i}
  = b$, for $i=1,2$) show that $C_{1}$ and $\neg C_{2}$ are jointly
  consistent modulo $\mathcal{ALCO}(\Sigma)$-bisimulations.  By
  extending the bisimulation in Figure~\ref{fig:ex3} to a relation $S$
  such that $(b^{\Imc_{1}},a^{\Imc_{2}}) \in S$ (so that the domain
  and range of $S$ contain $\Delta^{\Imc_{1}}$ and
  $\Delta^{\Imc_{2}}$, respectively), one can show that  $C_{1}$ and
  $\neg C_{2}$ are jointly consistent modulo
  $\mathcal{ALCO}^{u}(\Sigma)$-bisimulations.
  Moreover, by introducing an element $e$ in $\Imc_{2}$ so that $(e,
  b^{\Imc_{2}}) \in r^{\Imc_{2}}$ and $(e, e) \in r^{\Imc_{2}}$, and
  further extending $S$ by adding $(a^{\Imc_{1}}, e) \in S$, it can be
  seen that $C_{1}$ and $\neg C_{2}$ are jointly consistent modulo
  $\mathcal{ALCOI}^{u}(\Sigma)$-bisimulations (and hence
  $\mathcal{ALCHOI}^{u}(\Sigma)$-bisimulations).  \qedex
\end{example}
	
\begin{figure}[th]
  \centering
  \begin{tikzpicture}
    \tikzset{
      dot/.style = {draw, fill=black, circle, inner sep=0pt, outer sep=1pt, minimum size=2pt}
    }

    \draw (-2,0.5) node[label=$\Imc_{1}$] (I1) {};

    \draw (0,0) node[dot, label={[shift={(-0.25,-0.25)},align=center]:$a$}, label={[shift={(0,-0.75)},align=center]:$C_{1}$}] (a) {};

    \draw (-1,0) node[dot, label={[shift={(-0.25,-0.25)},align=center]:$b$}] (b1) {};


    \draw[->, >=stealth]  (a) edge [loop above] node[] {$r$} ();

    \draw (7,0.5) node[label=$\Imc_{2}$] (I2) {};

    \draw (6,-0.5) node[dot, label={[shift={(0.25,-0.25)},align=center]:$a$}] (a2) {};

    \draw (5,-0.5) node[dot, label={[shift={(0.25,-0.25)},align=center]:$b$}, label={[shift={(0,-0.75)},align=center]:$\lnot C_{2}$}] (b2) {};

    \draw (5,0.5) node[dot, label=east:$d$] (d) {};


    \draw[->, >=stealth] (b2) -- (d) node[midway, left] {$r$};

    \draw[->, >=stealth]  (d) edge [loop above] node[] {$r$} ();


    \path[black, dashed, bend left] (a) edge (b2);
    \draw (2.5,1) node[label=$\sim_{\mathcal{ALCO}, \Sigma}$] () {};

    \path[black, dashed, bend left] (a) edge (d);
    \draw (2.5,-0.25) node[label=$\sim_{\mathcal{ALCO}, \Sigma}$] () {};

  \end{tikzpicture}

  \caption{Interpretations $\Imc_{1}$ and $\Imc_{2}$ illustrating Example~\ref{ex:3}.}
  \label{fig:ex3}

\end{figure}

The following characterization of the existence of explicit
definitions can be proved similarly to Theorem~\ref{thm:critint}.
\begin{theorem}\label{thm:critdef}
  Let $\mathcal{L}\in \Dln$.  Let $\Omc$ be an $\Lmc$-ontology,
  $C$ and $C_{0}$ $\Lmc$-concepts, and $\Sigma\subseteq \text{sig}(\Omc,C)$ a
  signature. Then the following conditions are equivalent:
  \begin{enumerate} 

    \item there is no explicit
      $\Lmc(\Sigma)$-definition of $C_{0}$ under $\Omc$ and $C$; 

    \item $C\sqcap C_{0}$ and $C \sqcap \neg C_{0}$ are jointly consistent under \Omc modulo
      $\Lmc(\Sigma)$-bisimulations.  

  \end{enumerate}	
\end{theorem}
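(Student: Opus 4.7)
The theorem is stated as a direct consequence of Theorem~\ref{thm:critint}, so the natural plan is to reduce the definition-existence problem to the interpolant-existence problem via the standard renaming trick already implicit in the proof of Lemma~\ref{lem:CIPBDP}. Concretely, I would let $\Omc_{\Sigma}$ and $C_{\Sigma}$ be obtained from $\Omc$ and $C$ by uniformly replacing every non-$\Sigma$ symbol by a fresh copy, so that $\text{sig}(\Omc,C)\cap \text{sig}(\Omc_{\Sigma},C_{\Sigma})=\Sigma$. The key bridging observation is:
$C$ admits an explicit $\Lmc(\Sigma)$-definition under $\Omc$ iff $C\sqsubseteq C_{\Sigma}$ admits an $\Lmc$-interpolant under $\Omc\cup \Omc_{\Sigma}$.
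The ``only if'' direction is immediate: any $\Lmc(\Sigma)$-definition $D$ of $C$ satisfies $\Omc\models C\equiv D$, hence $\Omc\cup\Omc_{\Sigma}\models C\sqsubseteq D\sqsubseteq C_{\Sigma}$ (using that $D$ uses only $\Sigma$-symbols, so $D_{\Sigma}=D$). For the ``if'' direction, given an $\Lmc$-interpolant $D$, the signature restriction forces $D\in \Lmc(\Sigma)$, and then $\Omc\models C\sqsubseteq D$ follows by extending any model of $\Omc$ to a model of $\Omc_{\Sigma}$ (interpreting fresh symbols as their originals), while $\Omc\models D\sqsubseteq C$ follows by the symmetric construction together with the renaming being a bijection of signatures.

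Once this bridge is in place, I would invoke Theorem~\ref{thm:critint} with $\Omc_{1}=\Omc$, $\Omc_{2}=\Omc_{\Sigma}$, $C_{1}=C$, $C_{2}=C_{\Sigma}$. That theorem equates non-existence of an $\Lmc$-interpolant with joint consistency of $\Omc\cup\Omc_{\Sigma},C$ and $\Omc\cup\Omc_{\Sigma},\neg C_{\Sigma}$ modulo $\Lmc(\Sigma)$-bisimulations. It therefore remains to show that this joint consistency statement is equivalent to joint consistency of $\Omc,C$ and $\Omc,\neg C$ modulo $\Lmc(\Sigma)$-bisimulations.

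For the forward direction, pointed models $\Imc_{1},d_{1}$ and $\Imc_{2},d_{2}$ witnessing joint consistency of $\Omc\cup\Omc_{\Sigma},C$ and $\Omc\cup\Omc_{\Sigma},\neg C_{\Sigma}$ are already models of $\Omc$, with $d_{1}\in C^{\Imc_{1}}$; and from $\Imc_{2}$ I would build $\Imc_{2}'$ by copying the fresh-symbol interpretations back onto the original non-$\Sigma$ symbols, obtaining a model of $\Omc$ with $d_{2}\in (\neg C)^{\Imc_{2}'}$. Since $\Imc_{2}'$ and $\Imc_{2}$ agree on $\Sigma$, the $\Lmc(\Sigma)$-bisimulation is preserved. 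For the backward direction, given witnesses $\Imc_{i},d_{i}$ for $\Omc,C$ and $\Omc,\neg C$, I would expand each $\Imc_{i}$ to a model of $\Omc\cup\Omc_{\Sigma}$ by interpreting each fresh renamed symbol as its original; then $d_{1}\in C^{\Imc_{1}}$ and $d_{2}\in (\neg C_{\Sigma})^{\Imc_{2}}$ (because $C$ and $C_{\Sigma}$ coincide in $\Imc_{2}$ after this expansion), while the bisimulation is on $\Sigma$-symbols only and so carries over unchanged.

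There is no real obstacle here beyond bookkeeping: the only point that deserves care is checking that the bisimulation relation transfers cleanly across the renaming in both directions, which is immediate because all relevant bisimulation clauses in Figure~\ref{tab:conditions} quantify only over symbols in $\Sigma$, and the renaming fixes $\Sigma$ pointwise. Concatenating the equivalences from the bridge, Theorem~\ref{thm:critint}, and the joint-consistency transfer yields the stated equivalence between non-existence of an explicit $\Lmc(\Sigma)$-definition of $C$ under $\Omc$ and joint consistency of $\Omc,C$ and $\Omc,\neg C$ modulo $\Lmc(\Sigma)$-bisimulations.
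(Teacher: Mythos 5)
Your proposal is correct and matches the paper's intended argument: the paper presents Theorem~\ref{thm:critdef} as a direct consequence of Theorem~\ref{thm:critint}, obtained exactly via the renaming $\Omc_\Sigma, C_\Sigma$ from the proof of Lemma~\ref{lem:CIPBDP}, so that $\text{sig}(\Omc,C)\cap\text{sig}(\Omc_\Sigma,C_\Sigma)=\Sigma$ and explicit definitions correspond to interpolants for $C\sqsubseteq C_\Sigma$ under $\Omc\cup\Omc_\Sigma$. Your bridge lemma and the transfer of joint consistency across the renaming (using that the bisimulation conditions only mention $\Sigma$-symbols, which the renaming fixes) are precisely the bookkeeping the paper leaves implicit.
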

\begin{example}\label{ex:4}
	Consider $\Omc$, $C$, and $\Sigma$ from Example~\ref{ex:2}. The
	interpretations $\Imc_{1}, \Imc_{2}$ depicted in Figure~\ref{fig:ex4}
	show that $C \sqcap\exists r.\top$ and $C \sqcap \neg\exists r.\top$
	are jointly consistent under \Omc modulo
	$\mathcal{ALCH}(\Sigma)$-bisimulations.
	Note that the $\mathcal{ALCH}(\Sigma)$-bisimulation in
	Figure~\ref{fig:ex4} is also an
	$\mathcal{ALCH}^{u}(\Sigma)$-bisimula-tion, but it is not an
	$\mathcal{ALCHI}(\Sigma)$-bisimulation, since $e_{1}$ has both an
	$r_{1}$- and an $r_{2}$-predecessor, whereas $e_{2}$ and $e'_{2}$ lack
	an $r_{2}$- and an $r_{1}$-predecessor, respectively.  To repair this,
	we replace $\Imc_{2}$ with an interpretation $\Jmc$ that is obtained
	by taking the union of $\Imc_{2}$ with a copy $\overline{\Imc_{1}}$ of
	$\Imc_{1}$, and further adding $(\overline{d}_{1}, e_{2}) \in
	r_{2}^{\Jmc}$ and $(\overline{d}_{1}, e'_{2}) \in r_{1}^{\Jmc}$ (where
	$\overline{d}_{1}$ is the copy of $d_{1}$ in $\Jmc$).  Then, we extend
	the $\mathcal{ALCH}(\Sigma)$-bisimulation in Figure~\ref{fig:ex4} to a
	relation $S$ that also connects the elements of $\Imc_{1}$ with the
	respective copies in $\Jmc$.  It can be seen that $\Jmc$ is a model of
	$\Omc$, $d_{2} \in (C \sqcap \lnot \exists r.\top)^{\Jmc}$, and
	$(d_{1}, d_{2}) \in S$, where $S$ is an
	$\mathcal{ALCHI}^{u}(\Sigma)$-bisimulation.  \qedex
\end{example}

\begin{figure}[th]
	\centering
	\begin{tikzpicture}
		\tikzset{
			dot/.style = {draw, fill=black, circle, inner sep=0pt, outer sep=1pt, minimum size=2pt}
		}
		\draw (-2,0) node[label=$\Imc_{1}$] (I1) {};
		
		\draw (0,-0.5) node[dot, label=west:$d_{1}$, label={[shift={(-0.25,-0.85)},align=center]:$C \sqcap\exists r.\top$}] (d1) {};
		
		\draw (0,0.5) node[dot, label=west:$e_{1}$] (e1) {};
		
		
		\draw[->, >=stealth] (d1) -- (e1) node[midway, left] {$r, r_{1}, r_{2}$};

		\draw (7,0) node[label=$\Imc_{2}$] (I2) {};
		
		\draw (5,-0.5) node[dot, label=east:$d_{2}$, label={[shift={(0.25,-0.85)},align=center]:$C \sqcap \lnot \exists r.\top$}] (d2) {};
		
		\draw (4.5,0.5) node[dot, label=east:$e_{2}$] (e2) {};
		
		\draw (5.5,0.5) node[dot, label=east:$e'_{2}$, label=north:$A$] (e'2) {};

		
		\draw[->, >=stealth] (d2) -- (e2) node[midway, left] {$r_{1}$};
		
		\draw[->, >=stealth] (d2) -- (e'2) node[midway, right] {$r_{2}$};
		
		
		\path[black, dashed, bend right] (d1) edge (d2);
		\draw (2.5,-1.35) node[label=$\sim_{\mathcal{ALCH}, \Sigma}$] () {};

		\path[black, dashed, bend left] (e1) edge (e'2);
		\draw (2.5,0.65) node[label=$\sim_{\mathcal{ALCH}, \Sigma}$] () {};

		\path[black, dashed, bend right] (e1) edge (e2);
		\draw (2.5,-0.25) node[label=$\sim_{\mathcal{ALCH}, \Sigma}$] () {};

	\end{tikzpicture}

	\caption{Interpretations $\Imc_{1}$ and $\Imc_{2}$ illustrating Example~\ref{ex:4}.}
\label{fig:ex4}

\end{figure}
	%
	%
	%
%
%
%

 Interpolant existence and explicit definition existence are closely linked. We use Theorems~\ref{thm:critint} and~\ref{thm:critdef} to show 
the following reductions.
	
\begin{lemma}\label{lem:reduce1}
	Let $\mathcal{L}\in \Dln$, $\Omc$ be an $\Lmc$-ontology,
	$C, C_{0}, C_{1}$, and $C_{2}$ be $\Lmc$-concepts, and $\Sigma$ a signature.
	Then the following conditions are equivalent:
\begin{enumerate}
	\item there is an explicit $\mathcal{L}(\Sigma)$-definition of $C_{0}$ under $\Omc$ and $C$;
	\item there is an $\Lmc(\Sigma)$-interpolant for
	$C\sqcap C_{0} \sqsubseteq C \rightarrow C_{0}$ under $\Omc$.
\end{enumerate}
Conversely, the following conditions are also equivalent: 
\begin{enumerate}
	
	\item there is an $\Lmc(\Sigma)$-interpolant for
	$C_{1}\sqsubseteq C_{2}$ under $\Omc$;
	
	\item $\Omc\models C_{1}\sqsubseteq C_{2}$ and there is an explicit $\mathcal{L}(\Sigma)$-definition of $C_{2}$ under $\Omc$ and $C_{2} \rightarrow C_{1}$.
	
\end{enumerate} 
\end{lemma}	
\begin{proof}
We show the second equivalence. Assume that (1) does not hold. To show that (2) does not hold, assume $\Omc\models C_{1} \sqsubseteq C_{2}$ (otherwise we are done). By Theorem~\ref{thm:critint} there exist
pointed interpretations $\Imc_{1},d_{1}$ and $\Imc_{2},d_{2}$ such that
$\Imc_{i}$ is a model of $\Omc$, $d_{1}\in C_{1}^{\Imc_{1}}$, $d_{2}\not\in C_{2}^{\Imc_{2}}$, and $\Imc_{1},d_{1}\sim_{\Lmc,\Sigma} \Imc_{2},d_{2}$.
But then $d_{1}\in ((C_{2}\rightarrow C_{1})\sqcap C_{2})^{\Imc_{1}}$ and $d_{2}\in ((C_{2}\rightarrow C_{1})\sqcap \neg C_{2})^{\Imc_{2}}$ which shows that (2) does not hold by Theorem~\ref{thm:critdef}. The other direction is shown similarly.
\end{proof}

Hence we obtain the following corollary.
\begin{theorem}
Let $\mathcal{L}\in \Dln$. Then there is a polynomial time reduction of
projective $\Lmc$-definition existence to $\Lmc$-interpolant
existence (and thus to generalized \Lmc-interpolant existence). Conversely, there
is a polynomial time reduction of generalized $\Lmc$-interpolant
existence (and thus $\Lmc$-interpolant existence) to projective
$\Lmc$-definition existence if an oracle for \Lmc-subsumption is
admitted. 

Both reductions also exist for the ontology-free case and for RI-ontologies.
\end{theorem}

We now formulate the main complexity results proved in this article.

\begin{theorem}\label{thm:main1}
	Let $\mathcal{L}\in \Dln$. Then \Lmc-interpolant existence, generalized $\Lmc$-interpolant existence, and projective $\Lmc$-definition existence are
		all \TwoExpTime-complete.
\end{theorem}
It follows that interpolant existence and projective definition existence are one exponential harder than subsumption for logics in $\Dln$. Our lower bound proofs rely on the presence of ontologies. To understand the ontology-free case (and the case with RI-ontologies) we first recall from our introduction of DLs in $\Dln$ above that for DLs with the universal role or with both inverse roles and nominals, the ontology can be encoded in a concept and so interpolant existence and projective definition existence are still $\TwoExpTime$-complete with empty ontologies and RI-ontologies, respectively. 
For the remaining DLs in $\Dln$, interpolant existence and projective definition existence become $\coNExpTime$-complete, however. Thus, less complex than with ontologies, but still
harder than subsumption (which is $\PSpace$-complete), under standard complexity theoretic assumptions.
\begin{theorem}\label{thm:main2}
Let $\Lmc\in \Dln$.
\begin{enumerate}
	\item If $\Lmc$ admits nominals and the universal role, or nominals and inverse roles, then ontology-free \Lmc-interpolant existence, generalized $\Lmc$-interpolant existence, and projective 
    $\Lmc$-definition existence are all \TwoExpTime-complete.
    \item If $\Lmc$ admits the universal role and RIs, then RI-ontology \Lmc-interpo-lant existence, generalized $\Lmc$-interpolant existence, and projective 
    $\Lmc$-definition existence are all \TwoExpTime-complete.
    \item If $\Lmc\in \{\mathcal{ALCO},\mathcal{ALCHO}\}$, then ontology-free and RI-ontology
    \Lmc-interpolant existence, generalized $\Lmc$-interpolant existence, and projective 
    $\Lmc$-definition existence are all \coNExpTime-complete. 
    \item If $\Lmc\in \{\mathcal{ALCH},\mathcal{ALCHI}\}$, then RI-ontology
    \Lmc-interpolant existence, generalized $\Lmc$-interpolant existence, and projective 
    $\Lmc$-defini-tion existence are all \coNExpTime-complete.
\end{enumerate}
\end{theorem}
%
We have seen that with the exception of $\mathcal{ALCO}$ and $\mathcal{ALCHO}$, all DLs in $\Dln$ enjoy the non-projective Beth definability property. Hence checking the existence of a non-projective definition of a concept name is polynomial time reducible to subsumption checking and so \ExpTime-complete in the presence of an ontology. The following result states that even for $\mathcal{ALCO}$ and $\mathcal{ALCHO}$ checking the existence of non-projective definitions of concept names is not harder than subsumption. 
\begin{theorem}\label{thm:main3}
Let $\Lmc\in \{\mathcal{ALCO},\mathcal{ALCHO}\}$. Then non-projective \Lmc-definition existence of concept names is \ExpTime-complete.  
\end{theorem}
Interestingly, Theorem~\ref{thm:main3} is the only result where the lack of either the CIP of (P)BDP does not lead to an increase in complexity of the interpolant/explicit definition existence problem. {We show Theorem~\ref{thm:main3} in the appendix provided as supplementary material as it uses techniques that are slightly different from our other main results.}

We next consider the non-projective explicit definability of nominals. We have seen in Example~\ref{exm:bdlnominals} above that for nominals even the non-projective Beth definability property does not hold for any DL in $\Dln$. In fact, the following result states that the non-projective definability of nominals is as hard as their projective definability.
\begin{theorem}\label{thm:main4}
  Let $\Lmc\in \Dln$ admit nominals. Then non-projective \Lmc-definition existence of nominals is \TwoExpTime-complete.
\end{theorem}
Observe that the characterizations given in Theorems~\ref{thm:critint}
and~\ref{thm:critdef} provide mutual polynomial time reductions of generalized interpolant and
definition existence to the complement of joint consistency modulo
\Lmc-bisimulations. Hence, to prove Theorems~\ref{thm:main1} to ~\ref{thm:main4},
it suffices to prove the corresponding complexity bounds for joint consistency.

We finally discuss an interesting consequence for CI-interpolants. Let
$\Lmc$ be a DL in \Dln that does not admit RIs. The
\emph{CI-interpolant existence problem in $\Lmc$} is the problem to
decide for  $\Lmc$-ontologies $\Omc$ and $\Omc'$ whether there exists
an $\Lmc$-CI-interpolant for $\Omc$ and $\Omc'$. 
	
\begin{theorem}\label{thm:ciinterpolants} 
  Let $\Lmc \in \{\mathcal{ALCO}^{u},\mathcal{ALCOI}^{u}\}$. Then
  CI-interpolant existence in $\Lmc$ is \TwoExpTime-complete.
\end{theorem}
Observe that the \TwoExpTime upper bound is an immediate consequence of Point~1 of Theorem~\ref{thm:main2} as we can give a polynomial time reduction of CI-interpolant existence to ontology-free interpolant existence. Assume $\Lmc$-ontologies $\Omc$ and $\Omc'$ are given.
Let $\Sigma=\text{sig}(\Omc) \cap \text{sig}(\Omc')$. We find $\Lmc$-concepts $D$ and $D'$ such that $\Omc$ is equivalent to $\{\top \sqsubseteq D\}$ and $\Omc'$ is equivalent to $\{\top \sqsubseteq D'\}$, respectively. Then there exists a $\Lmc$-CI-interpolant for $\Omc$ and $\Omc'$ iff there exists an $\Lmc$-interpolant for $\forall u.D\sqsubseteq \forall u.D'$. 

The \TwoExpTime lower bound is proved in Section~\ref{sec:lowerbound} (Lemma~\ref{lem:twoexp11}) by adapting the \TwoExpTime lower bound proof for interpolant existence in $\Lmc$.

%
%



\section{Upper Bound Proofs With Ontology} \label{sec:upperbound}

We show the double exponential upper bound of Theorem~\ref{thm:main1}
(and thus of Theorem~\ref{thm:main4})
using a new mosaic elimination procedure that decides joint
consistency modulo $\Lmc$-bisimulations, for all $\Lmc\in \Dln$.  
\begin{theorem}\label{thm:2expupper}
  Let $\mathcal{L}\in \Dln$. Then joint consistency modulo
  $\Lmc$-bisimulations is in \TwoExpTime.
\end{theorem}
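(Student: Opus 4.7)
By Theorems~\ref{thm:critint} and~\ref{thm:critdef} it suffices to decide joint consistency modulo $\Lmc(\Sigma)$-bisimulations in double exponential time. The plan is a mosaic-style type elimination algorithm operating on a space of \emph{pair-mosaics} that locally encode candidate pairs of pointed models $\Imc_1,d_1$ and $\Imc_2,d_2$ related by an $\Lmc(\Sigma)$-bisimulation. First, I introduce, for each side $i\in\{1,2\}$, a notion of \emph{type} over a Fischer--Ladner-style closure of $\Omc_i,C_i$, enriched, whenever $\Lmc$ has inverse roles, with the set of concepts required of a predecessor along each role, and, whenever $\Lmc$ has nominals, with a flag recording which nominals are realized at the point. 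Types are of size polynomial in the input and there are at most exponentially many of them.

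A \emph{pair-mosaic} $(t_1,t_2)$ is then a pair of types that is $\Sigma$-coherent in the sense that $t_1$ and $t_2$ contain the same $\Sigma$-concepts (and the same $\Sigma$-nominals, where applicable). The conditions [Forth], [Back] and their extensions for inverse roles and the universal role from Section~\ref{sec:pre} are compiled into \emph{bisimulation successor obligations} on pair-mosaics: for every $\Sigma$-role $r$ and every $\exists r.D$ demanded by $t_i$, there must be a successor pair-mosaic $(s_1,s_2)$ reachable along $r$ on both sides; for DLs with role inclusions, successor obligations are closed under $\sqsubseteq$; for DLs with the universal role, the mosaic set is required to be globally connected. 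Nominals are handled by designating, for every nominal $\{a\}$ occurring in the input, a unique pair-mosaic that carries its type, which constrains successor witnesses globally. The algorithm enumerates subsets $M$ of the pair-mosaic space and iteratively eliminates any $m\in M$ whose existential, bisimulation, role-inclusion, or nominal obligations are not witnessed in $M$; it accepts iff the fixed point of this process contains an initial pair-mosaic realising $C_1$ on the left and $C_2$ on the right.

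Soundness follows by unravelling a surviving $M$ into a pair of models $\Imc_1,\Imc_2$ in which the relation induced by pair-mosaic membership is the required $\Lmc(\Sigma)$-bisimulation; completeness uses that any witnessing pair of $\omega$-saturated models, whose existence is harmless by Lemma~\ref{lem:equivalence}, realizes a set of pair-mosaics closed under all obligations. The pair-mosaic space has size at most $2^{\mathrm{poly}(n)}$, and its powerset is thus of size $2^{2^{\mathrm{poly}(n)}}$; each elimination step is polynomial in the current $M$, so the whole procedure runs in \TwoExpTime uniformly for every $\Lmc\in\Dln$. The main obstacle is the interaction of nominals with inverse roles and role inclusions across the two sides of the product: back-successor obligations along inverse roles must be matched at the \emph{unique} pair-mosaic that hosts each nominal, and this global constraint has to remain consistent under the iterative elimination. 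Designing the mosaic format so that these backward nominal constraints propagate correctly across both coordinates of the product is the central technical difficulty and is what justifies searching over sets of pair-mosaics rather than individual pair-mosaics.
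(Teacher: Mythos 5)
Your overall architecture (types over a closure, mosaics, greatest-fixed-point elimination, unravelling for soundness, $\omega$-saturation for completeness) matches the paper's, but your mosaic format is too weak, and this is a genuine gap rather than a presentational difference. You take a mosaic to be a pair $(t_1,t_2)$ of \emph{single} types, one per side. The paper's mosaics are pairs $(T_1,T_2)$ of \emph{sets} of types: one mosaic represents one $\Lmc(\Sigma)$-bisimulation class together with \emph{all} $\Xi$-types realized inside that class in each of the two models. This extra structure is not optional. An $\Lmc(\Sigma)$-bisimulation preserves only $\Sigma$-symbols, whereas $\Xi$-types also record non-$\Sigma$ subconcepts of $\Omc_1,\Omc_2,C_1,C_2$; consequently a single element of $\Imc_1$ is in general bisimilar to many elements of $\Imc_2$ carrying pairwise distinct $\Xi$-types. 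The paper's own lower bound is built on exactly this phenomenon: in Figure~\ref{fig:lower} the nominal $b$, with its $r$-self-loop, must be bisimilar to \emph{every} element of the infinite $r$-path $\rho$, and these elements carry exponentially many distinct non-$\Sigma$ counter values, hence exponentially many distinct $\Xi$-types. Your rule that each nominal is hosted by a \emph{unique} pair-mosaic $(t_a,t_2)$ therefore destroys completeness (it forbids precisely the configurations that witness joint consistency in the hard instances); splitting the nominal over several pair-mosaics instead destroys soundness, since unravelling then produces several distinct elements all satisfying $\{a\}$. The same loss of grouping information breaks the [Back] condition even without nominals: whether a side-1 successor can serve as the bisimulation partner of a side-2 successor depends on which types co-occur in one class, and a set of single-type pairs does not determine that.

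The set-based mosaics are also what delivers the right complexity. With pairs of sets the mosaic space itself has doubly exponential size, so the standard greatest-fixed-point elimination (run once per guess of the nominal assignment, starting from the maximal candidate set) already costs \TwoExpTime; this matches the \TwoExpTime-hardness. Your single-exponential space of type pairs would, under the usual elimination, yield an \ExpTime procedure, contradicting the lower bound -- and your fallback of brute-forcing over all $2^{2^{\mathrm{poly}(n)}}$ subsets restores the running time but not the missing expressiveness, because no subset of single-type pairs encodes which types belong to the same bisimulation class. The fix is to adopt the paper's format: mosaics $(T_1,T_2)\in 2^{T(\Xi)}\times 2^{T(\Xi)}$, coherence relations $\rightsquigarrow_r$ and $\leftrightsquigarrow_r$ lifted to such pairs, the nominal condition ``exactly one mosaic contains $t_a^i$, and for $a\in\Sigma$ that mosaic is $(\{t_a^1\},\{t_a^2\})$'', and existential saturation phrased so that the witness mosaic is reachable, as a whole, along every $\Sigma$-superrole.
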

To motivate our approach, reconsider Example~\ref{ex:3}. Notice that
in interpretations $\Imc_1,\Imc_2$ witnessing joint consistency of
$C_1$ and $\neg C_2$, $a^{\Imc_1}$ is bisimilar to both $b^{\Imc_2}$
and $d$.  Moreover, it can be easily verified that there are no
witnessing interpretations where $a^{\Imc_1}$ is bisimilar to a single
element in $\Imc_2$. Using an ontology, one can extend this example so
that $a^{\Imc_1}$ is enforced to be bisimilar to exponentially many
elements in $\Imc_2$ in any interpretations $\Imc_1,\Imc_2$ witnessing
joint consistency of two 
concepts (in fact, this will be the basis
for showing the lower bound in the subsequent section).  
Thus, we
cannot consider (pairs of) elements in isolation, but instead need to
consider sets of elements. As usual in DLs, we abstract elements in
interpretations by \emph{types}, which syntactically describe the behavior of these
elements by listing the relevant concepts that are satisfied there.
Correspondingly, sets of elements are abstracted to sets of
types. Since we need to coordinate two interpretations
$\Imc_1,\Imc_2$, we thus consider \emph{mosaics}, which are pairs
$(T_1,T_2)$ of sets of types.  The intuitive meaning of such a
pair is that it describes collections of elements in two
interpretations $\Imc_1$ and $\Imc_2$ which realize precisely the
types in $T_1$ and $T_2$, respectively, and are all mutually bisimilar.
Naturally, not all possible mosaics $(T_1,T_2)$ can be realized in
this way and the goal is to determine the realizable ones. For this
task, we use an elimination procedure. We start with the set
of all possible mosaics and drop the `bad' ones until a fixed point is
reached. We will see that the elimination conditions extend the
conditions known from standard type elimination procedures in a relatively natural way to
mosaics.  Then, concepts $C_1,C_2$ will be jointly consistent
under an ontology \Omc modulo bisimulations if there is a
surviving mosaic $(T_1,T_2)$ such that $C_1$ is contained in some type
in $T_1$ and $C_2$ is contained in some type in $T_2$. 

We will now formalize our approach and start by introducing the
relevant notions. Assume $\mathcal{L}\in \Dln$ and consider an $\Lmc$-ontology $\Omc$, $\Lmc$-concepts
$C_{1},C_{2}$, and a signature $\Sigma$.  Let
$\Xi=\text{sub}(\Omc,C_{1},C_{2})$ denote the closure under single
negation of the set of subconcepts of concepts in $\Omc,C_{1},C_{2}$.
A \emph{$\Xi$-type $t$} is a subset of $\Xi$ such that there exists a
model $\Imc$ of \Omc and $d\in \Delta^{\Imc}$ with
$t=\text{tp}_{\Xi}(\Imc,d)$, where 
\[\text{tp}_{\Xi}(\Imc,d) = \{ C\in \Xi\mid d\in C^{\Imc}\}\] 
is the \emph{$\Xi$-type realized at $d$ in $\Imc$}. Let
$\text{Tp}(\Xi)$ denote the set of all $\Xi$-types. We remark that
the number of $\Xi$-types is at most exponential in
$||\Omc||+||C_1||+||C_2||$ and, moreover, the 
set of all $\Xi$-types can be computed in time 
exponential in $||\Omc||+||C_1||+||C_2||$ for all considered
logics~\cite{handbookDL,DBLP:conf/csl/ArecesBM99}.  A \emph{mosaic} is
a pair $(T_1,T_2)$ of sets of types $T_1,T_2\subseteq \text{Tp}(\Xi)$.
For interpretations $\Imc_1,\Imc_2$ and $i\in \{1,2\}$, the
\emph{mosaic defined by $d\in \Delta^{\Imc_{i}}$} in
$\Imc_{1},\Imc_{2}$ is the pair $(T_{1}(d),T_{2}(d))$ where
\[T_{j}(d) = \{ \text{tp}_{\Xi}(\Imc_{j},e) \mid e\in
\Delta^{\Imc_{j}}, \Imc_{i},d \sim_{\Lmc,\Sigma}\Imc_{j},e\}, \] 
for $j=1,2$. 
 We say that a pair $(T_{1},T_{2})$ of sets $T_{1},T_{2}$
of types is a \emph{mosaic defined by $\Imc_{1},\Imc_{2}$} if there
exists $d \in \Delta^{\Imc_{1}}\cup \Delta^{\Imc_{2}}$ such that
$(T_{1,}T_{2}) = (T_{1}(d),T_{2}(d))$. Clearly, there are at most
doubly exponentially many mosaics.
\begin{example}\label{ex:5bis}
  Recall $C_{1}$, $C_{2}$, $\Sigma$, and $\Imc_{1}, \Imc_{2}$ from
  Example~\ref{ex:3}, and let $\Omc=\emptyset$.  The set $\Xi$
  consists of the concepts $\{a\}$, $\exists r.\{a\}$, $\{b\}$,
  $\exists r.\{b\}$, $C_{1}$, $C_{2}$, and negations thereof.  We have
  that:
  \begin{align*}
    \textnormal{tp}_{\Xi}(\Imc_{1},a^{\Imc_{1}}) & = \{ \{a\},
      \exists r.\{a\}, \lnot \{b\}, \lnot \exists r.\{b\}, C_{1},
      C_{2} \} \\
    \textnormal{tp}_{\Xi}(\Imc_{2},b^{\Imc_{2}}) & = \{ \lnot
	\{a\}, \lnot \exists r.\{a\}, \{b\}, \lnot \exists r.\{b\},
	\lnot C_{1}, \lnot C_{2} \} \\
    \textnormal{tp}_{\Xi}(\Imc_{2},d) & = \{ \lnot \{a\}, \lnot
      \exists r.\{a\}, \lnot \{b\}, \lnot \exists r.\{b\},  \lnot
      C_{1}, C_{2} \}
  \end{align*}
  The mosaic defined by $a^{\Imc_{1}}$ in $\Imc_{1}, \Imc_{2}$ is $(T_{1}(a^{\Imc_{1}}), T_{2}(a^{\Imc_{1}}))$, where 
  \[T_{1}(a^{\Imc_{1}})  = \{
    \textnormal{tp}_{\Xi}(\Imc_{1},a^{\Imc_{1}}) \}
    \quad\text{and}\quad %
    T_{2}(a^{\Imc_{1}})  = \{
      \textnormal{tp}_{\Xi}(\Imc_{2},b^{\Imc_{2}}),
    \textnormal{tp}_{\Xi}(\Imc_{2},d) \}.\]
    \qedex
\end{example}
As announced above, the aim of the mosaic elimination
procedure is to determine all mosaics $(T_{1},T_{2})$ such that all $t\in
T_{1}\cup T_{2}$ can be realized in mutually $\Lmc(\Sigma)$-bisimilar
elements of models $\Imc_1,\Imc_2$ of $\Omc$. In order to formulate the elimination
conditions, we define several compatibility conditions between types
and between mosaics, similar to the compatibility conditions that are
used in standard type elimination procedures.
Throughout the rest of the section, we treat the
universal role $u$ as a role name contained in $\Sigma$, in case
$\Lmc$ admits the universal role. Note that $u^-$ is equivalent to
$u$, and that $\Omc\models r\sqsubseteq u$, for every role $r$.

Let $t_1,t_2$ be $\Xi$-types. We call $t_1,t_2$ \emph{$u$-equivalent}
if  $\exists u.C\in t_1$ iff
$\exists u.C\in t_2$, for every $\exists u.C\in \Xi$. Notice that the
condition is trivially satisfied if \Lmc does not 
admit the universal role.
For a role $r$, we call $t_{1},t_{2}$ 
\emph{$r$-coherent for $\Omc$}, in symbols $t_{1}
\rightsquigarrow_{r} t_{2}$, if $t_1,t_2$ are $u$-equivalent and
the following conditions hold for
all roles $s$ with $\Omc \models r \sqsubseteq s$: (1) if $\neg
\exists s.C\in t_{1}$, then $C\not\in t_{2}$ and (2) if $\neg \exists
s^{-}.C\in t_{2}$, then $C\not\in t_{1}$.  Note that
$t\rightsquigarrow_{r}t'$ iff 
$t'\rightsquigarrow_{r^-}t$.
We lift the definition of
$r$-coherence from types to mosaics $(T_1,T_2),(T_1',T_2')$.
Specifically, we call 
$(T_{1},T_{2})$, $(T_{1}',T_{2}')$ \emph{$r$-coherent}, in symbols
$(T_{1},T_{2})\rightsquigarrow_{r} (T_{1}',T_{2}')$, if for $i=1,2$: 
\begin{itemize}

  \item for every $t\in T_{i}$ there exists a $t'\in T_{i}'$ such that
    $t\rightsquigarrow_{r}t'$, and

  \item if $\Lmc$ admits inverse roles, then for every $t'\in
    T_{i}'$, there is a $t\in T_{i}$ such that
    $t\rightsquigarrow_{r}t'$.

\end{itemize}
Note that $(T_1,T_2)\rightsquigarrow_r(T_1',T_2')$ iff
$(T_1',T_2')\rightsquigarrow_{r^-}(T_1,T_2)$ in case \Lmc admits 
inverse roles. Also notice that $(T_1,T_2)\rightsquigarrow_r
(T_1',T_2')$ implies $(T_1,T_2)\rightsquigarrow_u (T_1',T_2')$, for
every role $r$.
\begin{example}
  Consider again interpretations $\Imc_1,\Imc_2$ from
  Example~\ref{ex:3} and the types $t_1 =
  \textnormal{tp}_{\Xi}(\Imc_1,a^{\Imc_1})$, $t_2 =
  \textnormal{tp}_{\Xi}(\Imc_2,b^{\Imc_2})$, and $t_3 =
  \textnormal{tp}_{\Xi}(\Imc_2,d)$. Then,
  $t_1\rightsquigarrow_{r} t_1$,
  $t_2\rightsquigarrow_{r}t_3$, and
  $t_3\rightsquigarrow_{r}t_3$. Moreover, the mosaic $(T_1,T_2)$
  defined by $a^\Imc_1$ in $\Imc_1,\Imc_2$ satisfies
  $(T_1,T_2)\rightsquigarrow_r (T_1,T_2)$.
  \qedex
\end{example}
We are now in the position to formulate the mosaic elimination
conditions.  Let $\Smc\subseteq 2^{\text{Tp}(\Xi)} \times
2^{\text{Tp}(\Xi)}$ be a set of mosaics. We call
$(T_{1},T_{2})\in\Smc$ \emph{bad} if it violates one of the following
conditions.  
\begin{description}
	
  \item[$\Sigma$-concept name coherence] $A\in t$ iff
    $A\in t'$, for every concept name
    $A\in \Sigma$ and every $t,t'\in T_{1}\cup T_{2}$;

  \item [Existential saturation] for $i=1,2$ and $\exists r.C\in
    t\in T_{i}$, there exists $(T_{1}',T_{2}')\in \mathcal{S}$ such
    that (1) there exists $t'\in T_{i}'$ with $C\in t'$ and
    $t\rightsquigarrow_{r}t'$ and (2) if $\Omc\models
    r\sqsubseteq s$ for a $\Sigma$-role $s$, then
    $(T_{1},T_{2})\rightsquigarrow_{s} (T_{1}',T_{2}')$.


\end{description}
For didactic purposes and because we need it later in
Section~\ref{sec:computation}, we first give the mosaic elimination procedure
for logics \Lmc that do not admit nominals. The procedure starts with the 
set $\Smc_0$ of all mosaics.
Then obtain, for $i\geq 0$, $\Smc_{i+1}$ from $\Smc_i$ by eliminating
all mosaics $(T_1,T_2)$ that are bad in $\Smc_i$. Let $\Smc^*$ be where the
sequence stabilizes. The elimination procedure decides joint
consistency in the following sense.  
\begin{lemma}
  \label{lem:mosaic-elim-wo-nominals} 
  If $\Lmc$ does not admit nominals, the following conditions are equivalent:
   \begin{enumerate}

    \item $C_1,C_2$ are jointly consistent under \Omc modulo $\Lmc(\Sigma)$-bisimulations;

    \item there exist $(T_1,T_2)\in \Smc^*$ and $\Xi$-types $t_{1}\in
      T_1,t_{2}\in T_2$ with $C_{1}\in t_{1}$ and $C_{2}\in t_{2}$.

  \end{enumerate} 	
\end{lemma}
We refrain from giving the proof of
Lemma~\ref{lem:mosaic-elim-wo-nominals} since it will follow from
Lemma~\ref{lem:mosaic-elim} below. We note, however, that for \Lmc as
in the lemma, Theorem~\ref{thm:2expupper} is an immediate consequence
of the procedure: there are only double exponentially many mosaics, so
the elimination terminates after at most double exponentially steps.
It remains to observe that every elimination step can be executed in
double exponential time.

This relatively straightforward elimination procedure does not quite
work in the presence of nominals.  Intuitively, the reason is that in
any two interpretations $\Imc_1,\Imc_2$, every nominal $a$ is realized
(modulo bisimulation) in exactly one mosaic. Now, if the set $\Smc$
contains several mosaics mentioning $a$, they possibly witness
existential saturation of each other which, however, cannot be
reflected in an interpretation.  Thus, for the mosaic elimination
procedure to work (in the sense of
Lemma~\ref{lem:mosaic-elim-wo-nominals}) one has to ``guess'' for
every nominal $a$ exactly one mosaic that describes $a$. 

To formalize this idea, let us call a set $\mathcal{S}$ of mosaics
\emph{good for nominals} if for every individual name $a\in
\text{sig}(\Xi)$ and $i=1,2$ there exists exactly one $t_{a}^{i}$ with
$\{a\}\in t_{a}^{i}\in \bigcup_{(T_{1},T_{2})\in \mathcal{S}}T_{i}$
and exactly one pair $(T_{1},T_{2})\in \mathcal{S}$ with $t_{a}^{i}\in
T_{i}$. Moreover, if $a\in \Sigma$, then that pair takes the form
\begin{itemize}

  \item $(\{t_{a}^1\},\{t_a^2\})$ in case \Lmc admits the universal role, and

  \item $(\{t_{a}^{1}\},\{t_{a}^{2}\})$, $(\{t_{a}^{1}\},\emptyset)$,
    or $(\emptyset, \{t_{a}^{2}\})$, otherwise.

\end{itemize}
%
We can now formulate the more general lemma.
\begin{restatable}{lemma}{lemmaupper} \label{lem:mosaic-elim} 
  The following conditions are equivalent:
   \begin{enumerate}

    \item $C_1,C_2$ are jointly consistent under \Omc
      modulo $\Lmc(\Sigma)$-bisimulations;

    \item there exists a set $\Smc^*$ of mosaics that is good for
      nominals and does not contain a bad mosaic, such that there exist $(T_1,T_2)\in \Smc^*$ and $\Xi$-types $t_{1}\in
      T_1,t_{2}\in T_2$ with $C_1\in t_{1}$ and $C_{2}\in t_{2}$.

  \end{enumerate} 	
\end{restatable}
\begin{proof}
  ``1 $\Rightarrow$ 2''. Let $\Imc_{1},d_{1}\sim_{\Lmc,\Sigma}
  \Imc_{2},d_{2}$ for models $\Imc_{1}$ and $\Imc_{2}$
  of $\Omc$ such that $d_{1},d_{2}$ realize $\Xi$-types
  $t_{1},t_{2}$ and $C_{1}\in t_{1}, C_{2}\in t_{2}$. 
  Let $\Smc^*$ be
  the set of all mosaics defined by $\Imc_1,\Imc_2$.
%
  It is routine to show that no $(T_1,T_2)$ in $\Smc^*$ is bad and that
  $\Smc^*$ is good for nominals. Now, the mosaic $(T_1,T_2)$ defined by $d_1^\Imc$ in
  $\Imc_1,\Imc_2$ witnesses Condition~(2).

  \smallskip ``2 $\Rightarrow$ 1''.  Suppose there exist a good set
  $\Smc^*$ of mosaics and $(S_1,S_2)\in
  \Smc^*$ and $\Xi$-types $s_{1}\in S_1,s_{2}\in S_2$ with $C_{1}\in
  s_{1}$ and $C_{2}\in s_{2}$. Let $\Imc_i$, for $i=1,2$ be interpretations defined by setting:
  \begin{align*}
    \Delta^{\Imc_{i}} &:= \{ (t,(T_{1},T_{2})) \mid (T_{1},T_{2})\in
    \Smc^*, t\in T_i,\text{ and }\\
    & \hspace{0.75cm}\text{ if \Lmc admits the universal role, then }
    (S_1,S_2)\rightsquigarrow_u(T_1,T_2) \\
    & \hspace{0.75cm} \text{
    and $t,s_i$ are $u$-equivalent}\}\\ 
    r^{\Imc_{i}} &:=\{((t,p),(t',p'))\in \Delta^{\Imc_{i}}\times
    \Delta^{\Imc_{i}} \mid t\rightsquigarrow_{r}t'\text{ and }
    \text{for all $\Sigma$-roles } s:\\
    & \hspace{0.75cm} ((\Omc\models r
    \sqsubseteq s)\Rightarrow p\rightsquigarrow_{s}p')\} \\
    A^{\Imc_{i}} &:= \{(t,p)\in \Delta^{\Imc_{i}}\mid A\in t\} \\
    a^{\Imc_{i}} & := (t,(T_{1},T_{2}))\in \Delta^{\Imc_{i}}, \{a\}\in t\in T_{i}
  \end{align*}
  Note that the interpretation of nominals is well-defined since
  $\Smc^*$ is good for nominals. 
  
  We verify that interpretations $\Imc_{1}$ and $\Imc_{2}$
  witness Condition~(1).  

  \smallskip\noindent\textit{Claim~1.} For $i=1,2$, all $C\in \Xi$,
  and all $(t,p)\in \Delta^{\Imc_i}$, we have $(t,p)\in C^{\Imc_i}$ iff
  $C\in t$.

  \smallskip\noindent\textit{Proof of Claim~1.} Let $i\in\{1,2\}$. The proof is by
  induction on the structure of concepts in $\Xi$.
  \begin{itemize}

    \item The claim holds for concept names $C=A$ and all
      nominals $C=\{a\}$, by definition of
      $\Imc_i$.

    \item The Boolean cases, $\neg C$ and $C\sqcap C'$, are
      immediate consequences of the hypothesis. 

    \item Let $C = \exists r.D$. (Recall that $r$ is possibly the
      universal role $u$.)

      ``if'': Suppose $\exists r.D\in t$. By existential saturation,
      there is a $p'=(T_1',T_2')\in\Smc^*$ such that~(1) there exists
      $t'\in T_i'$ with $D\in t'$ and $t\rightsquigarrow_{r},t'$
      and~(2) if $\Omc\models r\sqsubseteq s$ for some $\Sigma$-role
      $s$, then $p\rightsquigarrow_s p'$. Note that $t,t'$ are thus
      also $u$-equivalent, so $(t',p')\in \Delta^{\Imc_i}$. We
      distinguish cases: 

      \begin{itemize}

	\item If $r$ is a role name, then by definition of
	  $r^{\Imc_i}$, we have that $( (t,p),(t',p'))\in r^{\Imc_i}$.
	  Since $D\in t'$, induction yields $(t',p')\in D^{\Imc_i}$.
	  Overall, we get $(t,p)\in (\exists r.D)^{\Imc_i}$.

	\item If $r=r_0^-$ is an inverse role, then~(1) and~(2) above
	  imply~(1') $t'\rightsquigarrow_{r_0}t$ and~(2') if
	  $\Omc\models r_0\sqsubseteq s$ for some $\Sigma$-role $s$,
	  then $p\rightsquigarrow_s p'$. As before, we can then
	  conclude that $( (t',p'),(t,p))\in r_0^{\Imc_i}$.  Since
	  $D\in t'$, induction yields $(t',p')\in D^{\Imc_i}$.
	  Overall, we get $(t,p)\in (\exists r_0^-.D)^{\Imc_i}$.

      \end{itemize}

      ``only if'': Suppose $(t,p)\in (\exists r.D)^{\Imc_i}$. Then,
      there is $(t',p')\in \Delta^{\Imc_i}$ with 
      $( (t,p),(t',p'))\in r^{\Imc_i}$ and $(t',p')\in D^{\Imc_i}$.
      By induction, the latter implies $D\in t'$.  We distinguish cases: 

      \begin{itemize}

	\item If $r$ is a role name, then by definition of
	  $r^{\Imc_i}$, $t\rightsquigarrow_{r} t'$ and thus
	  $\exists r.D\in t$.

	\item If $r=r_0^-$ is an inverse role, then by definition of
	  $r_0^{\Imc_i}$, $t'\rightsquigarrow_{r_0}t$. Thus, also
	  $\exists r^-_0.D\in t$.

      \end{itemize}

  \end{itemize}
  This finishes the proof of Claim~1. Claim~1 implies that
  $(s_1,(S_1,S_2))\in C_1^{\Imc_1}$ and $(s_2,(S_1,S_2))\in
  C_2^{\Imc_2}$. Claim~1 also implies that the type realized by
  $(t,p)$ in $\Imc_i$ is $t$, for all $(t,p)\in \Delta^{\Imc_i}$.
  Since types are, by definition, realized in models of \Omc, it
  follows that both $\Imc_1$ and $\Imc_2$ are models of \Omc.

  \smallskip\noindent\textit{Claim~2.} The relation $R$ defined by 
  \[R = \{ ((t,p), (t',p))\mid (t,p)\in \Delta^{\Imc_1}, (t',p)\in
  \Delta^{\Imc_2}\}\] 
  is an $\Lmc(\Sigma)$-bisimulation. 

  \smallskip\noindent\textit{Proof of Claim~2.} Clearly, $R$
  satisfies Condition~[AtomC] due to $\Sigma$-concept name coherence.
  Condition~[AtomI] follows from the fact that $\Smc^*$ is good for
  nominals in case \Lmc admits nominals. 

  For Condition~[Forth], let $((t,p), (t',p))\in R$ and
  $((t,p),(t_1,p_1))\in r^{\Imc_1}$, for some $\Sigma$-role $r$, and
  let $p=(T_1,T_2)$ and $p_1=(T_1',T_2')$. We
  distinguish cases: 
  \begin{itemize}

    \item If $r$ is a role name, then by definition of $r^{\Imc_1}$,
      we have~(1) $t\rightsquigarrow_{r} t_1$ and~(2) for all
      $\Sigma$-roles $s$ with $\Omc\models r\sqsubseteq s$, we have
      $p\rightsquigarrow_s p_1$. Since $t'\in T_2$ and
      $p\rightsquigarrow_r p_1$ there is some $t''\in T_2'$ with
      $t'\rightsquigarrow_{r} t''$. Thus, in particular, $t''$ is
      $u$-equivalent to $t'$ (and thus to $s_2$), which implies
      $(t'',p_1)\in \Delta^{\Imc_2}$. The definition of $r^{\Imc_2}$
      then implies that $( (t',p), (t'',p_1))\in r^{\Imc_2}$. It
      remains to note that the definition of $R$ yields $(
      (t_1,p_1),(t'',p_1))\in R$.

    \item If $r=r_0^-$ is an inverse role, then by definition of
      $r_0^{\Imc_1}$, we have~(1) $t_1\rightsquigarrow_{r_0} t$
      and~(2) for all $\Sigma$-roles $s$ with $\Omc\models
      r_0\sqsubseteq s$, we have $p_1\rightsquigarrow_s p$. Since
      $t'\in T_2$ and $p_1\rightsquigarrow_{r_0} p$ there is some
      $t''\in T_2'$ with $t''\rightsquigarrow_{r_0} t'$. Thus, in
      particular, $t''$ is $u$-equivalent to $t'$ (and thus to $s_2$),
      which implies $(t'',p_1)\in \Delta^{\Imc_2}$. The definition of
      $r_0^{\Imc_2}$ then implies that $( (t'',p_1), (t',p))\in
      r_0^{\Imc_2}$. It remains to note that the definition of $R$
      yields $( (t_1,p_1),(t'',p_1))\in R$.

  \end{itemize}
  Condition~[Back] is dual. 

  Finally, we verify that $R$ and $R^-$ are surjective if \Lmc admits the universal role. Let
  $(t,(T_1,T_2))\in \Delta^{\Imc_1}$. Then,
  $(S_1,S_2)\rightsquigarrow_u(T_1,T_2)$, by definition of
  $\Delta^{\Imc_1}$. This implies that there is a type $t'\in
  T_2$ which is $u$-equivalent to $s_2$ and thus $(t',(T_1,T_2))\in
  \Delta^{\Imc_2}$. The definition of $R$ implies
  $((t,(T_1,T_2)),(t',(T_1,T_2)))\in R$. The other direction is dual.

  \smallskip This finishes the proof of Claim~2. 
  By definition of $R$, we have 
  $( (s_1,(S_1,S_2)), (s_2,(S_1,S_2)))\in R$, 
  and thus
  $\Imc_1,(s_1,(S_1,S_2))\sim_{\Lmc,\Sigma}\Imc_2,(s_2,(S_1,S_2))$.
\end{proof}

It remains to argue that we can find in double exponential time a set
$\Smc^*$ as in Condition~(2) of Lemma~\ref{lem:mosaic-elim}. We use a
suitable variant of the elimination procedure described after
Lemma~\ref{lem:mosaic-elim-wo-nominals}.
\begin{restatable}{lemma}{lemmaupperz} \label{lem:dexp}
  Let $\mathcal{L}\in \Dln$. Then it is decidable in 
  time double exponential in $||\Omc||+||C_1||+||C_2||$ whether for an $\Lmc$-ontology $\Omc$,
  $\Lmc$-concepts $C_{1},C_{2}$, and a signature $\Sigma\subseteq
  \text{sig}(\Xi)$ there exists some $\mathcal{S}^*$
  satisfying Condition~(2) of Lemma~\ref{lem:mosaic-elim}.
\end{restatable}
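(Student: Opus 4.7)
My plan is to implement a standard mosaic elimination algorithm that, by Lemma~\ref{lem:equiv}, settles joint consistency modulo $\Lmc(\Sigma)$-bisimulations within the claimed bound. First, let $n$ denote the total size of $\Omc_{1},\Omc_{2},C_{1},C_{2}$. Then $|\Xi|$ is polynomial in $n$, so $|T(\Xi)|\le 2^{|\Xi|}$ is at most exponential in $n$, and therefore the mosaic universe $2^{T(\Xi)}\times 2^{T(\Xi)}$ has cardinality at most $2^{2\cdot 2^{|\Xi|}}$, i.e.\ doubly exponential in $n$.

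Second, I would precompute three things. (a) The set $T(\Xi)$, by filtering the propositionally consistent subsets of $\Xi$. (b) For every $t\in T(\Xi)$ and every $i\in\{1,2\}$, the bit ``$t$ is realizable in a model of $\Omc_{i}$''; this is a standard concept satisfiability test in $\Lmc$ w.r.t.\ $\Omc_{i}$, decidable in deterministic exponential time, so all such checks together cost $2^{O(n)}$. (c) The $r$-coherence relation $\rightsquigarrow_{r,\Omc_{i}}$ on $T(\Xi)\times T(\Xi)$ for every role $r$ appearing in $\Xi$; entailment $\Omc_{i}\models r\sqsubseteq s$ is polynomial in $n$, so this precomputation remains singly exponential.

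Third, I initialize $\mathcal{S}_{0}$ to the set of all $(T_{1},T_{2})\in 2^{T(\Xi)}\times 2^{T(\Xi)}$ satisfying the purely local clauses of $\Lmc(\Sigma)$-goodness: the types in $T_{i}$ are realizable in $\Omc_{i}$, $\Sigma$-concept name coherence holds, $(T_{1},T_{2})\ne(\emptyset,\emptyset)$, and (for the $\cdot^{u}$ fragments) Condition~$3u$. I then run the elimination loop: repeatedly remove from $\mathcal{S}$ any pair $(T_{1},T_{2})$ that violates Existential Saturation (Condition~2 or $2\Imc$) relative to the \emph{current} $\mathcal{S}$, i.e.\ some $t\in T_{i}$ contains a witness $\exists r.C$ for which no surviving $(T_{1}',T_{2}')$ supplies a compatible $t'$ realizing $C$ together with the required ($r$- or fully $r$-)coherence between $(T_{1},T_{2})$ and $(T_{1}',T_{2}')$. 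Each check scans at most $|\mathcal{S}_{0}|$ mosaics, each iteration removes at least one mosaic, and $|\mathcal{S}_{0}|$ is doubly exponential, so the loop terminates in doubly exponential time.

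The subtle part, which I expect to be the main obstacle, is the global ``good for nominals'' condition: it requires, for each nominal $a\in\text{sig}(\Xi)$ and each $i$, a unique witness type $t_{a}^{i}$ and a unique mosaic containing it, with a prescribed shape when $a\in\Sigma$. Because this condition is not monotone under mosaic removal, I would handle it by guessing the nominal configuration up front: for each choice of $(t_{a}^{1},t_{a}^{2})$ per nominal and the corresponding shape of the containing pair(s), of which there are at most $|T(\Xi)|^{2|\text{sig}(\Xi)|}\le 2^{O(n^{2})}$ many, I restrict $\mathcal{S}_{0}$ accordingly and run the elimination above. The algorithm accepts iff, for some such configuration, the fixpoint $\mathcal{S}$ still contains a pair $(T_{1},T_{2})$ and types $t_{1}\in T_{1}$, $t_{2}\in T_{2}$ with $C_{1}\in t_{1}$ and $C_{2}\in t_{2}$. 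Correctness reduces to two routine observations: the fixpoint is by construction $\Lmc(\Sigma)$-good; and, for the converse, the mosaic set extracted in the proof of Lemma~\ref{lem:equiv}(1$\Rightarrow$2) from any witness survives the elimination started from the $\mathcal{S}_{0}$ matching its nominal configuration. The overall running time is bounded by the product of the singly exponential number of guesses and the doubly exponential cost of each elimination, hence still doubly exponential.
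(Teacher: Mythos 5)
Your proposal is correct and follows essentially the same route as the paper: enumerate the possible nominal configurations up front (the paper's ``maximal sets good for nominals''), then run a greatest-fixpoint elimination of mosaics violating $\Sigma$-concept name coherence or existential saturation (Condition~2 or $2\Imc$), with the local realizability and $3u$ filters applied at initialization. The only point to make explicit is that after elimination one must also verify that the designated nominal mosaics survived (the existence half of ``good for nominals'' is not preserved by removal), rejecting the guess otherwise; and note that the number of guesses of containing pairs is itself doubly exponential rather than singly exponential, which still keeps the total within \TwoExpTime.
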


\begin{proof} 
  Let $\Lmc\in\Dln$, and assume $\Omc$, $C_{1},C_{2}$, and $\Sigma$
  are given. We can enumerate in double exponential time the maximal
  good sets $\mathcal{U}\subseteq 2^{T(\Xi)}\times 2^{T(\Xi)}$ by
  picking, for each nominal $a\in
  \text{sig}(\Xi)$ and $i=1,2$, a type $t_a^{i}$, and a mosaic
  $(T_1,T_2)$ with $t_a^i\in T_i$. In doing so, we make sure that
  $(\{t_{a}^{1}\},\{t_{a}^{2}\})$ is selected in case $a\in \Sigma$.
  Crucially, there are only double exponentially many possibilities to make this
  choice. Remove all mosaics that mention a nominal and have not been
  selected. The resulting set is good for nominals. 
%

  Then we eliminate from any set $\mathcal{U}$ obtained in that
  process recursively all bad mosaics. 
  Let $\mathcal{S}_{\Umc}\subseteq \mathcal{U}$ be the largest fixpoint of
  that procedure. Then one can easily show that there exists a set
  $\mathcal{S}^*$ satisfying Condition~(2) of Lemma~\ref{lem:mosaic-elim}
  iff there exists a set $\mathcal{U}$ that can be obtained by the
  process described above
  such that the largest fixpoint $\mathcal{S}_{\Umc}$ satisfies
  Condition~(2) of Lemma~\ref{lem:mosaic-elim}. Since elimination
  terminates after double exponential time, and there are only double
  exponentially many possible choices for \Umc, the lemma follows. 
\end{proof}

Theorem~\ref{thm:2expupper} is a direct consequence of
Lemmas~\ref{lem:mosaic-elim} and \ref{lem:dexp}.

\section{Lower Bound Proofs With Ontology}
\label{sec:lowerbound}

The goal of this section is to provide the proofs of the lower bounds in
Theorems~\ref{thm:main1},~\ref{thm:main4},
and~\ref{thm:ciinterpolants}. We start with the former two.  By
Lemma~\ref{lem:reduce1} and Theorem~\ref{thm:critdef}, it suffices to
consider joint consistency. 
We will provide two reductions: in
Section~\ref{sec:lowernom}, we provide the reduction for DLs in $\Dln$ that
admits nominals and, in Section~\ref{sec:lower-role-inclusions}, the one for
DLs that admits role inclusions. In
Section~\ref{sec:shapeofinterpolant}, we will investigate the shape of
the interpolants / explicit definitions that arise in the preceding lower bound
proof. 
In Section~\ref{sec:lowerci}, we then show how to adapt the lower
bound proof from Section~\ref{sec:lowernom} to the case of
CI-interpolant existence. 
In all cases we reduce the word problem for languages recognized by
exponentially space bounded alternating Turing machines, which we
introduce next.

An \emph{alternating Turing
machine (ATM)} is a tuple
\[
M=(Q,\Theta,\Gamma,q_0,\Delta)
\]
where
$Q=Q_{\exists}\uplus Q_{\forall}$
is a finite
set of states partitioned into \emph{existential states}~$Q_{\exists}$
and \emph{universal states}~$Q_{\forall}$.
Further, $\Theta$ is the input alphabet and $\Gamma$ is the tape
alphabet that contains a \emph{blank symbol} $\Box \notin \Theta$,
$q_0\in Q_{\forall}$ is the \emph{initial state}, and $\Delta\subseteq
Q\times \Gamma\times Q\times \Gamma \times \{L,R\}$ is the
\emph{transition relation}.  We assume without loss of generality that
the set $\Delta(q,a):=\{(q',a',M)\mid (q,a,q',a',M)\in\Delta\}$
contains exactly two or zero elements for every $q\in Q$ and $a \in
\Gamma$. Moreover, the state $q'$ must be from $Q_\forall$ if $q \in
Q_\exists$ and from $Q_\exists$ otherwise, that is, existential and
universal states alternate. 
Acceptance of ATMs is defined in a slightly unusual way, without using
accepting states. Intuitively, an ATM accepts if it runs forever on
all branches and rejects otherwise.  More formally, a
\emph{configuration} of an ATM is a word $wqw'$ with
\mbox{$w,w'\in\Gamma^*$} and $q\in Q$. 
We say that $wqw'$ is \emph{existential} if~$q$ is, and likewise for
\emph{universal}.  \emph{Successor configurations} are defined in the
usual way.  Note that every configuration has exactly zero or two
successor configurations. A \emph{computation tree} of an ATM $M$ on
input $w$ is a (possibly infinite) tree whose nodes are labeled with
configurations of $M$ such that
\begin{itemize}

  \item the root is labeled with the initial configuration
    $q_0w$;

  \item if a node is labeled with an existential configuration 
    $wqw'$, then it has a single successor which is labeled 
    with a successor configuration of $wqw'$;

  \item if a node is labeled with a universal configuration $wqw'$,
    then it has two successors which are labeled with the two
    successor configurations of~$wqw'$.

\end{itemize}
An ATM $M$ \emph{accepts} an input $w$ if there is a computation tree
of $M$ on $w$. Note that we can convert any ATM $M$ in which
acceptance is based on accepting states to our model by assuming that
$M$ terminates on any input and then modifying it to enter an infinite
loop from the accepting states. It is well-known that there are
$2^n$-space bounded ATMs which recognize a \TwoExpTime-hard
language~\cite{chandraAlternation1981}, where $n$ is the length of the input $w$.

\subsection{DLs with Nominals}\label{sec:lowernom}
We start with DLs supporting nominals. 
By Theorem~\ref{thm:critdef}, it suffices to prove the following result.
\begin{lemma}\label{lem:lowernom}
  Let $\mathcal{L}\in \Dln$ admit nominals.
  It is \TwoExpTime-hard to
  decide for an $\Lmc$-ontology $\Omc$, individual name $b$, and
  signature $\Sigma\subseteq \text{sig}(\Omc)\setminus \{b\}$ whether
  $\{b\}$ and $\neg\{b\}$ are jointly consistent under \Omc  modulo
  $\Lmc(\Sigma)$-bisimulations. This is true even if $b$ is the only
  individual in $\Omc$ and $\Sigma=\text{sig}(\Omc)\setminus \{b\}$.
\end{lemma}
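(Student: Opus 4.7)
The plan is to reduce from the word problem for alternating Turing machines (ATMs) with $2^{n}$-space bound, which is known to be \TwoExpTime-complete. Given such an ATM $M$ and input $w$, I would construct in polynomial time an $\mathcal{ALCO}$-ontology $\Omc$ whose only individual name is $b$, with $\Sigma=\text{sig}(\Omc)\setminus\{b\}$, so that $M$ accepts $w$ iff $\Omc,\{b\}$ and $\Omc,\neg\{b\}$ are jointly consistent modulo $\mathcal{ALCO}(\Sigma)$-bisimulations. Because every $\Lmc\in\Dln$ with nominals extends $\mathcal{ALCO}$, and richer $\Lmc(\Sigma)$-bisimulations only refine $\mathcal{ALCO}(\Sigma)$-bisimulations, the same encoding serves as a lower bound for all such $\Lmc$ provided it employs no roles that inverse or universal bisimulations can exploit ``for free''; this can be arranged by using only forward-reachable tree-shaped models.

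The encoding has two layers. The \emph{inner} layer represents a single ATM configuration as a complete binary tree of depth $n$: the $2^{n}$ leaves are addressed by $n$ concept names $X_{1},\dots,X_{n}$ interpreted as bits, each leaf carries $\Sigma$-concept names for its tape symbol and, where applicable, the head together with $M$'s current state, and polynomially many $\mathcal{ALC}(\Sigma)$-axioms enforce the correct numbering and the uniqueness of the head. The \emph{outer} layer chains successor configurations together by a role $r$ and uses standard local constraints on triples of cells sharing the same address in consecutive configurations to enforce $M$'s transition function.

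The crux is to turn the $\mathcal{ALCO}(\Sigma)$-bisimulation game between $(\Imc_{1},b^{\Imc_{1}})$ and $(\Imc_{2},d_{2})$ (with $d_{2}\neq b^{\Imc_{2}}$) into the ATM's acceptance game. I would use the nominal $b$ as an anchor that is invisible to the bisimulation but accessible to the ontology: $\{b\}$-guarded axioms force the $b$-rooted side to offer precisely one $r$-successor configuration at $\exists$-states of $M$ and all transitions at $\forall$-states, while in the $\neg\{b\}$-side the roles are swapped. Since the Forth and Back conditions must align $r$-successors on both sides, Duplicator's winning strategies then correspond exactly to accepting subtrees of $M$'s computation tree, with acceptance signaled by a distinguished $\Sigma$-concept name that can jointly hold at a bisimilar pair only when the simulated subtree reaches an accepting configuration.

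The principal obstacle I foresee is making every error in the encoded computation detectable using $\Sigma$-symbols alone. Since the bisimulation ignores $b$, an inconsistency in the binary address, tape content, or transition step must be forced to manifest itself as a mismatch of a $\Sigma$-concept name at some matched pair, rather than merely as a violation of a $\{b\}$-guarded axiom inside one model in isolation. I would handle this by threading ``error flags'' in $\Sigma$ that must propagate upwards from any local violation, so that a faulty encoding inevitably produces a $\Sigma$-distinguishable pair somewhere along the bisimulation. Once this propagation mechanism is in place, the equivalence between Duplicator winning the bisimulation game and $M$ accepting $w$ follows by a standard induction on the configuration tree; checking the fidelity of the propagation and of the alternation encoding will be the most delicate part of the argument.
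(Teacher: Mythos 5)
There is a genuine gap at the heart of your reduction: you never explain how to synchronize the cell at a given address in one configuration with the cell at the same address in the successor configuration, and this is exactly the step that cannot be done naively with a polynomial-size $\mathcal{ALCO}$ ontology (no inverse roles, no fixpoints, configurations of length $2^{n}$). Your ``standard local constraints on triples of cells sharing the same address in consecutive configurations'' presuppose a mechanism for matching addresses across configurations; writing these constraints out directly would require exponentially many axioms. The paper's entire construction is built around supplying such a mechanism: the CI $\{b\}\sqsubseteq \exists r.\{b\}$ forces an $r$-self-loop at $b$, so on the $\neg\{b\}$ side any bisimilar point must lie on an infinite $r$-path all of whose elements are $\Sigma$-bisimilar to $b$ and hence to each other; each such element roots a $\Sigma$-identical copy of the computation tree but carries a \emph{hidden} (non-$\Sigma$) counter with a different offset modulo $2^{n}$. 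Because every offset occurs among these bisimilar copies, for each tape position there is a copy in which the hidden counter reaches a designated value exactly at that position, and non-$\Sigma$ marker concepts can then carry the transition information forward for $2^{n}$ steps. Your proposal contains no analogue of this device, so the reduction as sketched does not go through.

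Two further points would also need repair. First, joint consistency modulo bisimulations is a purely existential statement --- both models \emph{and} the bisimulation are chosen by the ``prover'' --- so you cannot delegate the ATM's $\forall/\exists$ alternation to an adversarial Forth/Back game as you suggest; the alternation must be encoded inside the models themselves (the paper does this with $\Sigma$-concepts $B_{\forall}, B_{\exists}^{1}, B_{\exists}^{2}$ and a branching marker $Z$, forcing two successors after universal configurations and one chosen successor after existential ones). Second, your plan to ``thread error flags in $\Sigma$ that propagate upwards'' is not implementable in $\mathcal{ALCO}$: without inverse roles one can only propagate information downwards along roles with $\forall r$. The paper sidesteps both the acceptance signal and error propagation by using the acceptance-as-nontermination ATM model: a computation tree exists iff the machine never reaches a pair $(q,a)$ with $\Delta(q,a)=\emptyset$, and such pairs are simply made locally inconsistent via $A_{q,a}\sqsubseteq\bot$. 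Your inner-layer choice (binary address tree of depth $n$ versus the paper's $s$-chain of $2^{n}$ elements) is a legitimate cosmetic variation, but it does not address any of the above.
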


As announced, we reduce the word problem for $2^n$-space bounded ATMs.
Let us fix such an ATM $M=(Q,\Theta,\Gamma,q_0,\Delta)$ and an input
$w=a_0\ldots a_{n-1}$ of length $n$.  We first provide the reduction
for $\Lmc=\ALCO$ using an ontology $\Omc$ and a signature $\Sigma$ such
that $\Omc$ contains concept names that are not in $\Sigma$ and uses
two role names $r,s$, and show later how to adapt this proof to
$\Sigma=\text{sig}(\Omc)\setminus\{b\}$ and DLs
supporting inverses and/or the universal role. 

The idea of the reduction is as follows. We aim to construct an
ontology \Omc such that $M$ accepts $w$ iff $\{b\}$ and
$\neg\{b\}$ are jointly consistent under \Omc modulo
$\Lmc(\Sigma)$-bisimulations, where
\[\Sigma  = \{r,s,Z,B_{\forall},B_{\exists}^1,B_{\exists}^2\}\cup
\{A_\sigma\mid \sigma\in \Gamma\cup (Q\times \Gamma)\}.\]
The ontology $\Omc$ enforces that $r(b,b)$ holds in any model \Omc
using the concept inclusion $\{b\}\sqsubseteq \exists r.\{b\}$.
Moreover, it enforces that any element distinct from $b^\Imc$ with
an $r$-successor lies on an infinite $r$-path $\rho$ enforced by the concept
inclusions:
\begin{align*}
  \neg \{b\} \sqcap \exists r.\top & \sqsubseteq I_{s} &
  I_s \sqsubseteq \exists r.\top\sqcap \forall r.I_s
\end{align*}
with $I_{s}$ a concept name.
Thus, if there exist models \Imc, \Jmc of \Omc with
$\Imc,b^\Imc\sim_{\ALCO,\Sigma}\Jmc,d$ for some $d\neq b^{\Jmc}$ and
$d\in (\exists r.\top)^\Jmc$, it follows that all elements on the path
$\rho$ are
$\ALC(\Sigma)$-bisimilar to $b^\Imc$ and thus mutually
$\ALC(\Sigma)$-bisimilar. The situation is depicted in
Figure~\ref{fig:loweronto}, where the trees $T_{*}$ and $T_i$,
$i\geq 0$ starting in $b^\Imc$ and on the path elements, respectively,
are also mutually $\ALC(\Sigma)$-bisimilar. These trees shall
represent the computation tree of $M$ on input $w$ (using symbols from
$\Sigma$) as follows, cf.~Figure~\ref{fig:computation-tree} which
shows the skeleton of a single tree $T_i$. Configurations of $M$ are
represented as paths of length $2^n$ over a role $s$ in which
every element is labeled with a symbol $A_\sigma$,
$\sigma\in\Gamma\cup(Q\times \Gamma)$ that represents the content of a
single tape cell (omitted in the figure for the sake of readibility). In Figure~\ref{fig:computation-tree}, the start of a
configuration is indicated by $\circ$ and the $s$-path between
consecutive $\circ$ has length $2^n$.
Every configuration is marked as existential or universal using concept
names $B_\forall,B_\exists^1,B_\exists^2$; the superscript
$\cdot^1$/$\cdot^2$ indicates which successor is chosen for an
existential configuration. Existential configurations have a single
successor configuration and universal configurations have two
successor configurations.

\begin{figure}[th]
\centering
\begin{tikzpicture}
\tikzset{
dot/.style = {draw, fill=black, circle, inner sep=0pt, outer sep=0pt, minimum size=2pt}
}

\node (b) at (0,0) {};
\draw (0,0) node[dot, label=north:$b^\Imc$] {};
\draw[->, >=stealth]  (b) edge [loop left] node[] {$r$} ();

\draw (0,-0.1) -- (-1,-2.1) -- (1,-2.1)-- (0,-0.1);
\draw (0,-1.4) node {$T_{\ast}$};

\draw (4,0) node[dot, label=north:{$d\neq b^\Jmc$}, label=south east:$0$] (0) {};

\draw (4.1,0) -- (6,-0.4) -- (6,0.4)-- (4.1,0);
\draw (5.5,0) node {$T_{0}$};

\draw[->, >=stealth] (4,-0.1) -- (4,-1.2) node[midway, left] {$r$};

\draw (4,-1.3) node[dot, label=south east:$1$] (1) {};
\draw (4.1,-1.3) -- (6,-1.7) -- (6,-0.9)-- (4.1,-1.3);
\draw (5.5,-1.3) node {$T_{1}$};

\draw[->, >=stealth] (4,-1.4) -- (4,-2.5) node[midway, left] {$r$};

\draw (4,-2.6) node[dot, label=south east:$2$] (2) {};
\draw (4.1,-2.6) -- (6,-2.2) -- (6,-3)-- (4.1,-2.6);
\draw (5.5,-2.6) node {$T_{2}$};

\draw[->, >=stealth] (4,-2.7) -- (4,-3.8) node[midway, left] {$r$};

\draw (4,-4.1) node {$\vdots$};

\draw (5.5,-3.5) node {$\vdots$};

\draw (3.4,-4) node {$\rho$};

%
%




\path[black, dashed, out=20,in=170] (b) edge (0);
\path[black, dashed, bend left] (b) edge (1);
\path[black, dashed, out=-15, in=125] (b) edge (2);

\draw (3,-2.1) node {\color{black}{$\vdots$}};



\end{tikzpicture}
\caption{Enforced bisimulation in lower bound}\label{fig:loweronto}

\end{figure}



The structure of this computation
tree can easily be enforced in \ALC using standard techniques (as we
detail below). The difficulty is to achieve synchronization between
successor configurations in the tree. That is, 
if a configuration $c$ in the computation tree is followed by another
configuration $c'$, then $c'$ is actually a successor configuration of
$c$ according to $M$. 
To achieve this, we first ensure that in $T_i$, for 
$i\leq 2^n$, the
$(2^n-i)$-th cell of each configuration in the computation tree is synchronized with the
$(2^n-i)$-th cell of the next configuration(s), as indicated by the
dotted lines in Figure~\ref{fig:computation-tree}. This can be
realized in \ALC using a set of concept
names not in $\Sigma$. Then we exploit the fact
that the trees $T_i$ are mutually $\ALC(\Sigma)$-bisimilar which implies that in \emph{all}
$T_i$ \emph{all} cells of \emph{all} configurations are
synchronized.  In more detail, we do this by using several counters modulo
$2^n$ as follows.

\begin{figure}[th]
	\centering
	\begin{tikzpicture}
		\tikzset{
			dot/.style = {draw, fill=black, circle, inner
			sep=0pt, outer sep=1pt, minimum size=3pt},
			wdot/.style = {draw, fill=white, circle, inner
			sep=0pt, outer sep=1pt, minimum size=3pt}
		}
		

\draw (0,0.75) node[label=$T_{i}$] (Ti) {};


\draw (0,0) node[wdot, label=south:$B_\forall$] (a2) {};
\draw (0.75,0) node[dot, label=$$] (a3) {};
\draw (1.125,-0.275) node[label=$\ldots$] (dot1) {};
\draw (1.5,0) node[dot, label=south:$\!\!\!\!\!\!2^n -i$] (f) {};
\draw (1.875,-0.275) node[label=$\ldots$] (dot2) {};
\draw (2.25,0) node[dot, label=$$] (a4) {};
\draw (3,0) node[dot, label=$$] (a5) {};

\draw (3.75,-1) node[wdot, label=north:$B_\exists^{*}$] (aa1) {};
\draw (4.5,-1) node[dot, label=$$] (ii1) {};
\draw (4.875,-1.25) node[label=$\ldots$] (dota1) {};
\draw (5.25,-1) node[dot, label=south:$2^n-i$] (aa2) {};
\draw (5.625,-1.25) node[label=$\ldots$] (dota2) {};
\draw (6,-1) node[dot,label=$$] (aa3) {};
\draw (6.75,-1) node[wdot,label=north:$B_\forall$] (aa4) {};
\draw (7.5,-1) node[dot, label=$$] (kk1) {};
\draw (7.875,-1.25) node[label=$\ldots$] (dota3) {};
\draw (8.25,-1) node[dot, label=south:$2^n-i$] (abk1) {};
\draw (8.625,-1.25) node[label=$\ldots$] (dota4) {};
\draw (9,-1) node[dot, label=$$] (aa5) {};
\draw (9.75,-1) node[dot, label=$$] (aa6) {};
\draw (10.5,-1.5) node[wdot, label=$$] (aaa1) {};
\draw (10.875,-1.775) node[label=$\ldots$] (dotaaf) {};
\draw (10.5,-0.5) node[wdot, label=$$] (aab1) {};
\draw (10.875,-0.75) node[label=$\ldots$] (dotabf) {};

\draw (3.75,1) node[wdot, label=south:$B_\exists^{*}$] (ab1) {};
\draw (4.5,1) node[dot, label=$$] (ii2) {};
\draw (4.875,0.725) node[label=$\ldots$] (dotb1) {};
\draw (5.25,1) node[dot, label=south:$2^n-i$] (ab2) {};
\draw (5.625,0.725) node[label=$\ldots$] (dotb2) {};
\draw (6,1) node[dot,label=$$] (ab3) {};
\draw (6.75,1) node[wdot,label=south:$B_\forall$] (ab4) {};
\draw (7.5,1) node[dot, label=$$] (kk2) {};
\draw (7.875,0.725) node[label=$\ldots$] (dotb3) {};
\draw (8.25,1) node[dot,label=south:$2^n-i$] (abk2) {};
\draw (8.625,0.725) node[label=$\ldots$] (dotb4) {};
\draw (9,1) node[dot, label=$$] (ab5) {};
\draw (9.75,1) node[dot, label=$$] (ab6) {};
\draw (10.5,0.5) node[wdot, label=$$] (aba1) {};
\draw (10.875,0.225) node[label=$\ldots$] (dotbaf) {};
\draw (10.5,1.5) node[wdot, label=$$] (abb1) {};
\draw (10.875,1.25) node[label=$\ldots$] (dotbbf) {};


\draw[->, >=stealth] (a2) edge (a3)
node[label={[shift={(0.35,0)}]:$s$}] {};
\draw[->, >=stealth] (a4) edge (a5) node[label={[shift={(0.35,0.2)}]:$$}] {};
\draw[->, >=stealth] (a5) edge (aa1) node[label={[shift={(0.35,0.2)}]:$$}] {};
\draw[->, >=stealth] (a5) edge (ab1) node[label={[shift={(0.35,0.2)}]:$$}] {};
\draw[->, >=stealth] (aa1) edge (ii1) node[label={[shift={(0.35,0.2)}]:$$}] {};
%
\draw[->, >=stealth] (aa3) edge (aa4) node[label={[shift={(0.35,0.2)}]:$$}] {};
\draw[->, >=stealth] (aa4) edge (kk1) node[label={[shift={(0.35,0.2)}]:$$}] {};
\draw[->, >=stealth] (aa5) edge (aa6) node[label={[shift={(0.35,0.2)}]:$$}] {};
\draw[->, >=stealth] (aa6) edge (aaa1) node[label={[shift={(0.35,0.2)}]:$$}] {};
\draw[->, >=stealth] (aa6) edge (aab1) node[label={[shift={(0.35,0.2)}]:$$}] {};

\draw[->, >=stealth] (ab1) edge (ii2) node[label={[shift={(0.35,0.2)}]:$$}] {};
%
\draw[->, >=stealth] (ab3) edge (ab4) node[label={[shift={(0.35,0.2)}]:$$}] {};
\draw[->, >=stealth] (ab4) edge (kk2) node[label={[shift={(0.35,0.2)}]:$$}] {};
\draw[->, >=stealth] (ab5) edge (ab6) node[label={[shift={(0.35,0.2)}]:$$}] {};
\draw[->, >=stealth] (ab6) edge (aba1) node[label={[shift={(0.35,0.2)}]:$$}] {};
\draw[->, >=stealth] (ab6) edge (abb1) node[label={[shift={(0.35,0.2)}]:$$}] {};

			
\path[black, dashed, bend right] (f) edge (aa2);
\draw (2.5,0.15) node[label=$$] () {};
\path[black, dashed, bend right] (aa2) edge (abk1);
\draw (2.5,0.15) node[label=$$] () {};

\path[black, dashed, bend left] (f) edge (ab2);
\draw (2.5,0.15) node[label=$$] () {};
\path[black, dashed, bend left] (ab2) edge (abk2);
\draw (2.5,0.15) node[label=$$] () {};

\end{tikzpicture}	
		
\caption{Computation tree of $M$}
\label{fig:computation-tree}
	
\end{figure}

The first counter counts modulo $2^{n}$ along the path $\rho$ using
concept names not in $\Sigma$. As announced, each point of $\rho$
starts an infinite tree along role $s$ that is supposed to mimick the
computation tree of $M$ on input $w$. Along this tree, two more counters are maintained:
\begin{itemize}

  \item one counter starting at $0$ and counting modulo $2^n$, and 
 
  \item another counter starting at \emph{the value of the counter on
    $\rho$} and also counting modulo $2^{n}$.

\end{itemize}
The first counter is used to divide the tree into
configurations of length $2^n$ and the second counter is used to link
the $(2^n-i)$-th cell of successive configurations in $T_i$ as described
above.
%

We will next provide the concept inclusions in $\Omc$ in more detail.
The counter along $\rho$ is realized using concept names $A_i$,
$0\leq i<n$ and by including the following (standard) concept inclusions, for every $i$ with
$0\leq i<n$:
\begin{align*}
  I_s \sqcap A_i \sqcap \bigsqcap_{j < i} A_j & \sqsubseteq \forall r.
  \neg A_i &
  I_s \sqcap \neg A_i \sqcap \bigsqcap_{j < i} A_j & \sqsubseteq \forall r.
  A_i\\
  I_s \sqcap A_i \sqcap \bigsqcup_{j < i} \neg {A}_j & \sqsubseteq \forall r.
  A_i
  & I_s\sqcap \neg{A}_i \sqcap \bigsqcup_{j < i} \neg{A}_j & \sqsubseteq
  \forall r. \neg{A}_i
\end{align*}
Using again the concept name $I_s$, we start the $s$-trees with two counters, realized using
concept names $U_i$ and $V_i$, $0\leq i<n$,
and initialized to $0$ and the value of the $A$-counter, respectively,
by including the following concept inclusions for every $j$ with $0\leq j< n$:
\begin{align*}
  I_s & \sqsubseteq (U=0) \\
  I_s & \sqsubseteq A_j\leftrightarrow V_j \\
%
%
  \top & \sqsubseteq \exists s.\top
\end{align*}
Here, $(U=0)$ is an abbreviation for the concept
$\bigsqcap_{i=0}^{n-1}\neg U_i$; we use similar abbreviations below
without further notice.  The counters $U_i$ and $V_i$ are incremented
along $s$ in the same way as $A_i$ is incremented along $r$, so we omit details. Configurations
of $M$ are represented between two consecutive points having
$U$-counter value $0$. We next enforce the structure of the
computation tree (recall that $q_0\in Q_\forall$):
\begin{align*}
  I_s & \sqsubseteq B_{\forall} \tag{$\dagger$}\label{eq:startrolei}\\
  %
  %
  (U<2^n-1) \sqcap B_\forall & \sqsubseteq \forall s.B_{\forall} \\
  (U<2^n-1) \sqcap B_\exists^i & \sqsubseteq \forall
  s.B_{\exists}^i && i\in\{1,2\} \\
  (U=2^n-1) \sqcap B_\forall & \sqsubseteq \forall
  s.(B_{\exists}^1\sqcup B_{\exists}^2) \\
  (U=2^n-1) \sqcap B_\exists^i& \sqsubseteq \forall
  s.B_\forall && i\in\{1,2\} \\
  (U=2^n-1) \sqcap B_{\forall} & \sqsubseteq \exists s.Z\sqcap \exists
  s.\neg Z
   %
\end{align*}
These concept inclusions enforce that all points which represent a configuration
satisfy one of $B_{\forall},B_{\exists}^1,B_{\exists}^2$
indicating the kind of configuration and, if existential, also a choice of the transition
function. The symbol $Z\in \Sigma$ enforces the branching.

We next set the initial configuration, for
input $w=a_0,\dots,a_{n-1}$.
\begin{align*}
  I_{s} & \sqsubseteq  A_{q_{0},a_0}\\
  I_{s} & \sqsubseteq  \forall s^{k}.A_{a_{k}} & 0<i<n \\
  I_{s} & \sqsubseteq  \forall s^{n}.\mn{Blank} \\
  \mn{Blank} & \sqsubseteq A_{\Box} \\
  \mn{Blank} \sqcap (U<2^n -1) & \sqsubseteq \forall s. \mn{Blank}
\end{align*}
To coordinate successor configurations, we associate with $M$
functions $f_i$, $i\in \{1,2\}$ that map the content of three
consecutive cells of a configuration to the content of the middle cell
in the $i$-the successor configuration (assuming an arbitrary order on
the set $\Delta(q,a)$, for all $q,a$). In what follows, we ignore the corner cases that occur
at the border of configurations; they can be treated in a similar way. 
Clearly, for each possible triple
$(\sigma_1,\sigma_2,\sigma_3)\in (\Gamma\cup(Q\times \Gamma))^3$, the
\ALC-concept $C_{\sigma_1,\sigma_2,\sigma_3}=A_{\sigma_1}\sqcap
\exists s.(A_{\sigma_2}\sqcap \exists s.A_{\sigma_3})$ is true at an
element $a$ of the computation tree iff $a$ is labeled with
$A_{\sigma_1}$, an $s$-successors $b$ of $a$ is labeled with
$A_{\sigma_2}$, and an $s$-successors $c$ of $b$ is labeled with
$A_{\sigma_3}$. In each configuration, we synchronize elements
with $V$-counter $0$ by including for every $(\sigma_1,\sigma_2,\sigma_3)$ and $i\in\{1,2\}$ the following
concept inclusions: 
\begin{align*}
  (V=2^n-1) \sqcap (U<2^n-2) \sqcap C_{\sigma_1,\sigma_2,\sigma_3}
  \sqcap B_{\forall} & \sqsubseteq
  \forall s. A^1_{f_1(\sigma_1,\sigma_2,\sigma_3)} \ \sqcap \\
  & \ \ \ \ \forall
  s.A^2_{f_2(\sigma_1,\sigma_2,\sigma_3)}\\
  (V=2^n-1) \sqcap (U<2^n-2) \sqcap C_{\sigma_1,\sigma_2,\sigma_3} \sqcap  B_\exists^i
  & \sqsubseteq \forall s. A^i_{f_i(\sigma_1,\sigma_2,\sigma_3)}
  %
\end{align*}
 At this point, the importance of the superscript in
$B_\exists^\ast$ becomes apparent: since different cells of a
configuration are synchronized
in different trees $T_k$ the superscript makes sure that all trees
rely on the same choice for existential configurations. The concept names $A^i_{\sigma}$ are used as markers (not in $\Sigma$) and
are propagated along $s$ for $2^n$ steps, exploiting the $V$-counter.
The superscript $i\in\{1,2\}$ determines the successor configuration
that the symbol is referring to. After crossing the end of a
configuration, the symbol $\sigma$ is propagated using concept names
$A_{\sigma}'$ (the superscript is not needed anymore because the
branching happens at the end of the configuration, based on $Z$).
\begin{align*}
  (U<2^n-1) \sqcap A_\sigma^i & \sqsubseteq \forall s. A_{\sigma}^i \\
  (U=2^n-1) \sqcap B_\forall \sqcap A_\sigma^1 & \sqsubseteq \forall
  s.(Z\to A'_\sigma)\\
  (U=2^n-1) \sqcap B_\forall \sqcap A_\sigma^2 & \sqsubseteq \forall
  s.(\neg Z\to A'_\sigma)\\
  (U=2^n-1) \sqcap B_\exists^i \sqcap A_\sigma^i & \sqsubseteq \forall
  s.A'_\sigma && i\in\{1,2\}\\
  (V<2^n-1) \sqcap A'_\sigma & \sqsubseteq \forall s.A'_\sigma \\
  (V=2^n-1) \sqcap A'_\sigma & \sqsubseteq \forall s.A_\sigma
\end{align*}
%
For those $(q,a)$ with $\Delta(q,a)=\emptyset$, we add the concept inclusion
\[A_{q,a} \sqsubseteq \bot. \]
The following lemma establishes correctness of the reduction.

\begin{lemma}\label{lem:sim1}
  The following conditions are equivalent:
  \begin{enumerate} 

    \item $M$ accepts $w$;

    \item there exist models $\Imc$ and $\Jmc$ of $\Omc$ such that
      $\Imc,b^{\Imc}\sim_{\ALCO,\Sigma} \Jmc,d$, for some
      $d\not=b^{\Jmc}$.  

  \end{enumerate} 		

\end{lemma}

\begin{proof} ``1 $\Rightarrow$ 2''. If $M$ accepts $w$, there is a
  computation tree of $M$ on $w$. We construct a single interpretation
  $\Imc$ with $\Imc,b^{\Imc}\sim_{\ALCO,\Sigma} \Imc,d$ for some
  $d\not=b^{\Imc}$ as follows. Let $\widehat\Jmc$ be the infinite tree-shaped
  interpretation that represents the computation tree of $M$ on $w$ as
  described above, that is, configurations are represented by
  sequences of $2^n$ elements linked by role $s$ and labeled by
  $B_\forall,B_\exists^1,B_\exists^2$ depending on whether the
  configuration is universal or existential, and in the latter case
  the superscript indicates which choice has been made for the
  existential state. Finally, the first element of the first successor
  configuration of a universal configuration is labeled with
  $Z$. Observe that $\widehat\Jmc$
  interprets only the symbols in $\Sigma$ as non-empty. Now, we
  obtain interpretations $\Imc_k$, $k<2^n$ from $\widehat\Jmc$ by interpreting
  non-$\Sigma$-symbols as follows: 
  \begin{itemize}

    \item the root of $\Imc_k$ satisfies $I_s$; 

    \item the $U$-counter starts at $0$ at the root and counts modulo
      $2^n$ along each $s$-path;

    \item the $V$-counter starts at $k$ at the root and counts modulo
      $2^n$ along each $s$-path;

    \item the auxiliary concept names of the shape $A_\sigma^i$ and
      $A_\sigma'$ are interpreted in a minimal way so as to satisfy
      the concept inclusions starting from
      concept inclusion~\eqref{eq:startrolei}. Note that, by definition of
      these concept inclusions, there is a unique result.

  \end{itemize}
  Now obtain $\Imc$ from $\widehat\Jmc$ and the $\Imc_k$ by creating an infinite outgoing
  $r$-path $\rho$ from some element $d\neq b^\Imc$ (with the corresponding $A$-counter)
  and adding $\Imc_k$, $k<2^n$ to
  every element with $A$-counter value $k$ on the $r$-path, identifying the roots of the $\Imc_k$ with
  the element on the path. Additionally, include
  $(b^\Imc,b^\Imc)\in r^\Imc$ and add $\widehat\Jmc$ to \Imc
  by identifying $b^\Imc$ with the root of $\widehat\Jmc$. It should be clear that $\Imc$ is as
  required. In particular, the reflexive, transitive, and symmetric closure of 
  \begin{itemize}

    \item all pairs $(b^\Imc,e)$, with $e$ on $\rho$, and 

    \item all pairs $(e,e')$, with $e$ in $\widehat\Jmc$ and $e'$ a
      copy of $e$ in some tree $\Imc_{k}$

  \end{itemize}
  is an $\ALCO(\Sigma)$-bisimulation $S$ on $\Imc$ with $(b^\Imc,d)\in S$.

  ``2 $\Rightarrow$ 1''.  Assume that
  $\Imc,b^{\Imc}\sim_{\ALCO,\Sigma} \Jmc,d$ for some $d\not=b^{\Jmc}$.
  As argued above, due to the $r$-self loop at $b^\Imc$, from $d$
  there has to be an outgoing infinite $r$-path on which all $s$-trees
  are $\ALCO(\Sigma)$-bisimilar. Since $\Imc$ is a model of $\Omc$,
  all these $s$-trees are additionally labeled with some auxiliary
  concept names not in $\Sigma$, depending on the distance from their
  roots on $\rho$. Using the concept inclusions in \Omc and the
  arguments given in their description, it can be shown that all
  $s$-trees contain a computation tree of $M$ on input $w$ (which is
  solely represented with concept names in $\Sigma$).  
\end{proof}
The same ontology $\Omc$ can be used for the remaining DLs with
nominals.  For $\ALCO^{u}$, exactly the same proof works; in
particular, note that both the bisimulation $S$ constructed in
``$1\Rightarrow 2$'' and its inverse are surjective. For the DLs with inverse
roles the (one-way) infinite $r$-path $\rho$ has to replaced by a two-way
infinite path in ``$1\Rightarrow 2$''.

\medskip
Using the ontology $\Omc$ defined above we define a new ontology $\Omc'$
to obtain the \TwoExpTime lower bound for signatures $\Sigma'=\text{sig}(\Omc')\setminus\{b\}$. Fix a role name $r_{E}$ for any concept name $E\in \text{sig}(\Omc)\setminus \Sigma$.
Now replace in $\Omc$ any occurrence of $E\in \text{sig}(\Omc)\setminus \Sigma$ by $\exists r_{E}.\{b\}$ and denote the resulting ontology 
by $\Omc'$. 
%
\begin{lemma}\label{lem:correctness_nom} The following conditions are equivalent:
  \begin{enumerate}

    \item $M$ accepts $w$;

    \item there exist models $\Imc$ and $\Jmc$ of $\Omc'$ such that
      $\Imc,b^{\Imc}\sim_{\ALCO,\Sigma'} \Jmc,d$, for some $d\not=b^{\Jmc}$.

  \end{enumerate} 		
\end{lemma}

\begin{proof} ``1 $\Rightarrow$ 2''. We modify the interpretation
  \Imc defined in the proof of Lemma~\ref{lem:sim1} in such a way that
  we obtain a model of $\Omc'$ and such that the
  $\ALCO(\Sigma)$-bisimulation $S$ on $\Imc$ defined in that proof is,
  in fact, an $\ALCO(\Sigma')$-bisimulation on the new interpretation.
  Formally, obtain $\Imc'$ from $\Imc$ by interpreting every $r_E$,
  $E\in \text{sig}(\Omc)\setminus\Sigma$ as follows: 
  \begin{enumerate}[label=(\roman*)]

    \item there is an $r_{E}$-edge from $e$ to $b^{\Imc}$, for all
      $e\in E^{\Imc}$;

    \item there is an $r_{E}$-edge from $e$ to all elements on the path
      $\rho$, for all $(e,e')\in S$ and $e'\in E^{\Imc}$;  

    \item there are no more $r_E$-edges.



  \end{enumerate} 
  Note that, by~(i), $\Imc'$ is a model of $\Omc'$. By~(ii), the
  relation $S$ defined in the proof of Lemma~\ref{lem:sim1} is an
  $\ALCO(\Sigma')$-bisimulation. In particular, by~(i), elements $e'\in
  E^\Imc$ have now an $r_E$-edge to $b^\Imc$, so any element $e$ bisimilar to
  $e'$, that is, $(e,e')\in S$, needs an $r_E$-successor to some
  element bisimilar to $b^\Imc$. Since all elements on the path
  $\rho$ are bisimilar to $b^\Imc$, these $r_E$-successors exist due
  to~(ii).

  \medskip ``2 $\Rightarrow$ 1''. This direction remains the same as
  in the proof of Lemma~\ref{lem:sim1}.
\end{proof} 
The extension to DLs with inverse roles and the universal role and
the restriction to a single role name are again straightforward. 

\medskip We conclude the section with an observation that will be
relevant for the application of our results to modal logic in
Section~\ref{sec:modal}. More specificially, we strengthen the lower
bound for the case of $\Lmc=\ALCO^u$ as follows: 
\begin{lemma} \label{lem:lower-single-role}
  Let $\Lmc\in\{\ALCO,\ALCO^u\}$. Then, it is \TwoExpTime-hard to
  decide for an $\Lmc$-ontology $\Omc$,
  individual name $b$, and signature $\Sigma\subseteq
  \text{sig}(\Omc)\setminus \{b\}$ whether $\{b\}$ and
  $\neg\{b\}$ are jointly consistent under \Omc modulo 
  $\Lmc(\Sigma)$-bisimulations, even if $\Omc$ is allowed to use only
  a single role name. 
\end{lemma} 
\begin{proof}
We modify the ontology \Omc and signature $\Sigma$ used in the proof of Lemma~\ref{lem:lowernom}. Let $\Omc'$ be the ontology
obtained from $\Omc$ by: 
\begin{itemize}

  \item replacing every subconcept of the shape $\exists r.C$ with
    $\exists r.(X_r\sqcap C)$ and

  \item replacing every subconcept of the shape $\exists s.C$ with
    $\exists r.(X_s\sqcap C)$,

\end{itemize}
for fresh concept names $X_r,X_s$, and set
$\Sigma'=\Sigma\cup\{X_r,X_s\}$. It is routine to verify that
Lemma~\ref{lem:sim1} holds for $\Omc',\Sigma'$ instead of $\Omc,\Sigma$. In
particular, we can obtain an interpretation $\Imc'$ from \Imc as
constructed in ``$1\Rightarrow 2$'' as follows.
\begin{itemize}

  \item replace all $s$-connections by $r$-connections; 

  \item every element that has an $s$-predecessor in \Imc satisfies
    $X_s$ in $\Imc'$, that is, $X_s^{\Imc'}=(\exists s^-.\top)^\Imc$;

  \item $b^\Imc$ and every element on the infinite $r$-path $\rho$ in
    \Imc satisfy
    $X_r$ in $\Imc'$, that is, $X_r^{\Imc'}=(\exists
    r.\top)^\Imc$ (the root of the infinite path has to satisfy
    $X_r$ since it is bisimilar to $b^\Imc$ which satisfies $X_r$).

\end{itemize}
\end{proof}

\subsection{Shape of Explicit Definitions in the Lower Bound}\label{sec:shapeofinterpolant}

The goal of this subsection is to provide some intuition on the shape
of the explicit definitions that arise in the proof of
Lemma~\ref{lem:lowernom}. We note first that $r(x,x)$ is an explicit
FO$(\Sigma)$-definition of $\{b\}$ under $\Omc$, regardless of whether the ATM
accepts its input or not. This means that interpolant and explicit
definition existence is \TwoExpTime-hard even under the promise that a
fixed FO-definition / FO-interpolant exists.

We now analyze the $\ALCO(\Sigma)$-definitions that arise in the proof
of Lemma~\ref{lem:lowernom}. Recall that such a definition exists iff
the ATM $M$ does not accept its input $w$. So, for the rest of the
dicussion we assume the latter. Instead of directly
providing an explicit $\ALCO(\Sigma)$-definition of $\{b\}$, we give a
definition $C_{\neg b}$ of $\neg \{b\}$, since the definition of
$C_{\neg b}$ is close to the intuitions provided in the proof of
Lemma~\ref{lem:lowernom}.  Obviously, $\neg C_{\neg b}$ will be the
desired definition of $\{b\}$. Let $n$ be the length of the input
word $w$ and let $k=|\Gamma\cup (Q\times \Gamma)|$ be the number of
possible labelings of a cell in some configuration of the ATM.
Moreover, set $K=k^{2^n}+2^n$. 

The concept $C_{\neg b}$ takes the shape 
\[C_{\neg b} = \exists r.\top \to \big(C_\text{tree}\sqcap
C_{\text{start}}\sqcap \neg C_{\text{stop}}\sqcap
\bigsqcup_{i=0}^{2^n-1}C_i\big).\]
To understand the structure $\exists r.\top\to C'$ of $C_{\neg b}$,
recall that the proof of Lemma~\ref{lem:lowernom} relies on the
assumption that an element $d\neq b^\Imc$ has an $r$-successor. The
concepts 
$C_{\text{tree}},C_{\text{stop}},C_{\text{start}}, C_i$ provide an
``approximation'' of an accepting computation tree of the ATM $M$ on
its input $w$ in the following sense. (Note that the definition of $\neg\{b\}$ cannot
describe the full accepting computation since it is not entailed).

The concept $C_{\text{tree}}$ enforces an $s$-tree of depth $K$ that
acts as the skeleton for encoding (an initial fragment of) a
computation tree. It is labeled with concepts
$Z,B_{\forall},B_{\exists}^1,B_{\exists}^2$ in the expected way.
Formally, $C_{\text{tree}}$ is 
\begin{align*}
  \text{Path}_{s,B_\forall}^{2^n} 
  & \sqcap
  \bigsqcap_{\substack{i=\ell\cdot 2^n-1\\i<K}} \forall s^i.\bigl(
B_{\forall}\to \bigl(\exists s. (Z\sqcap
\text{Path}_{s,B_{\exists}^1}^{2^n}) \sqcap \exists s. (\neg Z\sqcap
\text{Path}_{s,B_{\exists}^2}^{2^n})\bigr) \\
& \sqcap
(B_{\exists}^1\sqcup
B_{\exists}^2)\to
\exists s.\text{Path}_{s,B_{\forall}}^{2^n}\bigr)
\end{align*}
where $\text{Path}_{s,X}^{m}$ is a concept that enforces an
$s$-path of length $m$ with each element labeled with $X$.
We refrain from giving the precise definitions of the remaining
concepts, and rather provide the intuitions.
$C_{\text{start}}$ is a concept that enforces the initial
configuration to be true in the computation tree, and 
$C_{\text{stop}}$ is a concept that is true if some element within
$K$ $s$-steps is labeled with a concept name $A_{q,a}$ for which
$\Delta(q,a)=\emptyset$.  Moreover, each $C_i$ is a concept with
$\Omc\models I_s\sqcap (A=i)\sqsubseteq C_i$; recall that we denote
with $(A=i)$ that the $A$-counter has value $i$. The disjunction over
all possible $C_i$ in $C_{\neg b}$ is needed since the $A$-counter can
take any value between $0$ and $2^n-1$ at a given element in $d\neq b^\Imc.$ More precisely, each
$C_i$ is a conjunction 
\[C_i = \bigsqcap_{j=0}^{2^n-1}\forall
r^j.C_\text{sync}^{i\oplus_{2^n}j},\] 
where $\oplus_{m}$ denotes
addition modulo $m$, and for each $m$ with $0\leq m< 2^n$,
$C_\text{sync}^m$
is a concept that coordinates the content of the $m$-th cell in every
configuration in the computation tree with the same cell in the
successor configuration(s). This can be easily realized using
value restrictions $\forall s$. 

Observe that $\Omc\models\neg\{b\}\sqsubseteq C_{\neg b}$ regardless
of whether the ATM accepts $w$ or not.
In particular, in every model of \Omc, each element $d$ satisfying
$\neg \{b\}\sqcap \exists r.\top$ satisfies the concepts $C_{\text{tree}}, C_{\text{start}}$, and $\neg C_{\text{stop}}$. Moreover, $d$ satisfies
$I_s$ and $(A=i)$ for some $i$, and thus $d$ also satisfies $C_i$.

For the converse, $\Omc\models C_{\neg b}\sqsubseteq \neg \{ b\}$,
suppose that $C_{\neg b}$ is realizable in a model \Imc of \Omc
in an element $d$ with $(d,d)\in r^\Imc$. We thus also have $d\in
(C_{\text{tree}}\sqcap C_{\text{start}}\sqcap \neg
C_{\text{stop}})^\Imc$, and $d\in C_i^\Imc$, for some $i$.
Due to the $r$-self loop,
$d\in (C_\text{sync}^m)^\Imc$, for all $m$ with $0\leq m<2^n$. But this means that
at $d$ starts the initial segment of a computation tree of
$M$ which is not
labeled with a halting configuration, and all of whose cells are
coordinated with the corresponding cell of the successor
configuration(s). By the
choice of $K$, on every path there is a configuration that occurs
twice. We can thus extend the initial fragment of the computation tree
to an infinite computation tree for the word $w$, in contradiction
to the fact that $M$ does not accept $w$.

We conclude with observing that the size of the definition $C_{\neg
b}$ of $\neg\{b\}$ is double exponential in the length $n$ of the
input word, due to the depth $K$ of the enforced tree. This is in
stark contrast with the (constant!) size of the
FO$(\Sigma)$-definition. We conjecture that one can enforce explicit
definitions of triple exponential size. For example, when using two
roles $s_1,s_2$ instead of $s$ for encoding the computation tree,
already the concept $C_{\text{tree}}$ will be of triple exponential
size. We leave a detailed analysis for future work.

\subsection{DLs with Role Inclusions}\label{sec:lower-role-inclusions}

By Theorem~\ref{thm:critdef}, it suffices to prove the following.
\begin{lemma}\label{lem:lowerrolei}
  Let $\mathcal{L}\in \Dln$ admit role inclusions.  It is
  \TwoExpTime-hard to decide for an $\Lmc$-ontology $\Omc$, concept
  $C$, and signature $\Sigma\subseteq \text{sig}(\Omc)$ whether
  $C$ and $\neg C$ are jointly consistent under \Omc modulo
  $\Lmc(\Sigma)$-bisimulations.  
\end{lemma}
As in the proof of Lemma~\ref{lem:lowernom}, we reduce the word
problem for exponentially space bounded ATMs, so let $M$ be a
$2^n$-space bounded ATM and $w=a_0\ldots a_{n-1}$ an input of length
$n$. In fact, the only
difference to the proof of Lemma~\ref{lem:lowernom} is the way in
which we enforce that exponentially many elements are
$\Lmc(\Sigma)$-bisimilar.  We first provide the reduction for
$\Lmc=\mathcal{ALCH}$ and 
\[\Sigma  =
\{r_{1},r_{2},s,Z,B_{\forall},B^{1}_{\exists},B^{2}_{\exists}\}\cup
\{A_\sigma\mid \sigma\in \Gamma\cup (Q\times \Gamma)\}.\]
The symbols $s,Z,B_{\forall},B^{1}_{\exists},B^{2}_{\exists}$ and
$A_\sigma$, $\sigma\in \Gamma\cup (Q\times \Gamma)$, play exactly the
same role as above. The main difference is that we replace the nominal
$b$ by an $r$-chain of length $n$. The ontology $\Omc$ contains the
RIs $r\sqsubseteq r_{1},r\sqsubseteq r_{2}$ and the CI $\neg \exists
r^{n}.\top \sqcap \exists r_{1}^{n}.\top \sqsubseteq R$. As usual
$\exists r^n$ abbreviates a sequence of $n$ times $\exists r$. 

To see how we use these inclusions, suppose there exist models $\Imc$ and $\Jmc$ of $\Omc$ 
and $d\in \Delta^{\Imc}$, $e\in \Delta^{\Jmc}$ such
that 
\begin{itemize}

  \item $d\in (\exists r^{n}.\top)^{\Imc}$;

  \item $e\in (\neg \exists r^{n}.\top)^{\Jmc}$;

  \item $\Imc,d \sim_{\mathcal{ALCH},\Sigma} \Jmc,e$.

\end{itemize}
then it follows that $e\in R^{\Jmc}$: due to $\Imc,d
\sim_{\mathcal{ALC},\Sigma} \Jmc,e$ and $d\in (\exists
r_1^n.\top)^\Imc$, we also have $e\in (\exists r_1^n.\top)^\Imc$. Let
now $d'$ be an element reachable from $d$ via an $r$-path of length
$n$ (which exists due to $d\in (\exists r^n.\top)^\Imc$). Since $r\sqsubseteq r_{i}$ for $i=1,2$, there are also arbitrary
$r_1/r_2$-paths of length $n$ from $d$ to $d'$. Since $\Imc,d
\sim_{\mathcal{ALC},\Sigma} \Jmc,e$, there are also arbitrary
$r_1/r_2$-paths of length $n$ starting in $e$ and whose end points are all
$\ALCH(\Sigma)$-bisimilar to $d'$ and thus also mutually
$\ALCH(\Sigma)$-bisimilar. The concept name $R$ will enforce that 
\begin{itemize}

  \item[$(\ast)$] the end point of any $r_1/r_2$-path of length $n$
    starting in $e$ carries a counter value that describes the
    path in a canonical way.

\end{itemize}
We can thus use these $2^n$ different, but bisimilar end points to
start the infinite trees which mimick the computation tree of $M$ as
in the proof of Lemma~\ref{lem:lowernom}. Along these we maintain the same two counters
as there:

\begin{itemize}

  \item one counter starting at $0$ and counting modulo $2^n$ to
    divide the tree into configurations of length $2^n$; 

  \item another counter starting at \emph{the value of the counter on
    the leaf} and also counting modulo $2^{n}$.

\end{itemize}
Formally, the ontology $\Omc$ is constructed as follows. In order to
realize~$(\ast)$ above, we use concept names $A_i$, $0\leq i<n$
realizing the counter and the following concept inclusions: 
\begin{align*}
  R & \sqsubseteq R_{0} \\
  R_{i} & \sqsubseteq \forall r_{1}.(A_{i}\sqcap R_{i+1}) \sqcap
  \forall r_{2}.(\neg A_{i}\sqcap R_{i+1}) & i<n \\
  R_i \sqcap A_j & \sqsubseteq \forall r_1.A_j \sqcap \forall r_2.A_j
  & 0\leq j < i<n
  \\
  R_i \sqcap \neg A_j & \sqsubseteq \forall r_1.\neg A_j \sqcap
  \forall r_2.\neg A_j & 0\leq j < i<n \\
  R_{n} & \sqsubseteq L_{R}
\end{align*}
Using the concept name $L_{R}$, we start the $s$-trees with two counters, realized using
concept names $U_i$ and $V_i$, $0\leq i<n$, and
initialized to $0$ and the value of the $A$-counter, respectively:
\begin{align*}
  L_{R} & \sqsubseteq (U=0) \\
  L_{R} & \sqsubseteq A_j\leftrightarrow V_j &0\leq j<n\\
%
%
  \top & \sqsubseteq \exists s.\top
\end{align*}
The structure of the computation tree, the initial configuration, and
the coordination between consecutive configurations is done using the
same concept inclusions as in the proof of Lemma~\ref{lem:lowernom},
starting from inclusion~\eqref{eq:startrolei} and replacing $I_s$ with
$L_R$. We can then prove the following very similarly to
Lemma~\ref{lem:sim1}.

\begin{lemma}\label{lem:sim2}
  The following conditions are equivalent:
  \begin{enumerate} 

    \item $M$ accepts $w$;

    \item there exist models $\Imc$ and $\Jmc$ of $\Omc$ such that
      $\Imc,d\sim_{\mathcal{ALCH},\Sigma} \Jmc,e$, for some $d\in
      (\exists r^{n}.\top)^{\Imc}$ and $e\not\in (\exists
      r^{n}.\top)^{\Jmc}$.  

  \end{enumerate} 		

\end{lemma}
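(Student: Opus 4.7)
The plan is to mirror the proof of Lemma~\ref{lem:sim1}, replacing the $r$-self-loop at $b^{\Imc}$ with the $r$-chain of length $n$ from $d$, and the infinite $r$-path $\rho$ of $b$-bisimilar elements in $\Jmc$ with the leaves of the binary tree $T_R$ of depth $n$ rooted at $e$. The key observation, already mentioned before the lemma, is that because $r\sqsubseteq r_1,r_2$, bisimilarity $\Imc,d\sim_{\ALCH,\Sigma}\Jmc,e$ forces, for each $r_1/r_2$-sequence of length $n$ starting at $e$, the end of that sequence to be $\Sigma$-bisimilar to the end of the $r$-chain from $d$; hence all $2^n$ leaves of $T_R$ are mutually $\Sigma$-bisimilar.

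For the direction ``$1 \Rightarrow 2$'', I would take $\Imc$ to consist of an $r$-chain of length exactly $n$ from $d$ (so $d\in (\exists r^n.\top)^\Imc$), propagate this chain along $r_1,r_2$ as forced by the RIs, and attach to the endpoint an interpretation $\Kmc$ isomorphic to one copy of the $s$-tree construction described in the proof of Lemma~\ref{lem:sim1} (the $\widehat\Jmc$ plus all $\Imc_k$, identified at the endpoint). For $\Jmc$ I take $e$ to be the root of a fresh binary tree $T_R$ of depth $n$ along $r_1,r_2$ (without any $r$-chain of length $n$), with counter values $k\in\{0,\ldots,2^n-1\}$ at the leaves as enforced by the CIs, and then attach at the $k$th leaf a copy $\Imc_k$ of the $s$-tree from the proof of Lemma~\ref{lem:sim1}, where the $V$-counter is initialized at $k$. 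The relation $S$ that relates the endpoint of the $r$-chain in $\Imc$ (and each interior node of the induced $r_1/r_2$-binary tree, at the appropriate depth) to all leaves of $T_R$, and identifies corresponding nodes of the $\Imc_k$'s across leaves and with those of $\Kmc$, is an $\ALCH(\Sigma)$-bisimulation; verification is routine once one checks that the $r_1,r_2$-successor conditions from the binary tree in $\Jmc$ are matched on the $\Imc$-side by the degenerate binary tree collapsing to the single $r$-chain (which, via $r\sqsubseteq r_i$, supplies both $r_1$- and $r_2$-successors).

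For ``$2\Rightarrow 1$'', assume $\Imc,d\sim_{\ALCH,\Sigma}\Jmc,e$ with $d\in(\exists r^n.\top)^\Imc$ and $e\notin(\exists r^n.\top)^\Jmc$. Since $r\sqsubseteq r_1,r_2$, we have $d\in(\exists r_1^n.\top)^\Imc$, hence $e\in(\exists r_1^n.\top)^\Jmc$ by bisimilarity, and the CI $\neg\exists r^n.\top\sqcap\exists r_1^n.\top\sqsubseteq R$ gives $e\in R^\Jmc$; the $R_i$-axioms then force the full binary tree $T_R$ of depth $n$ with $L_R$-labelled leaves carrying all $2^n$ values of the $A$-counter. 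As noted above, every leaf of $T_R$ is $\Sigma$-bisimilar to the endpoint of the $r$-chain from $d$, hence to every other leaf, so all the $s$-trees they root carry identical $\Sigma$-labellings. Now the argument from Lemma~\ref{lem:sim1} applies verbatim: this common $\Sigma$-labelling encodes a computation tree of $M$ on $w$, with the $2^n$ distinct $V$-counter initializations (one per leaf) providing the coordination between consecutive configurations via the concept names $A^i_\sigma$ and $A'_\sigma$, and the $B$-labels together with $Z$ enforcing the alternation pattern; the axiom $A_{q,a}\sqsubseteq\bot$ for halting pairs $(q,a)$ guarantees that the encoded computation never halts, i.e.\ $M$ accepts $w$.

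The main obstacle is verifying the bisimulation in the forward direction across the non-uniform structures: on the $\Imc$-side a single linear $r$-chain (plus its $r_1,r_2$-fattening) must be related to a genuine binary $r_1/r_2$-tree on the $\Jmc$-side, and the forth/back clauses for $r_1,r_2$ must be met at every level. This is handled by relating \emph{every} node at depth $i$ of the $r_1/r_2$-fattening on the $\Imc$-side to \emph{all} nodes at depth $i$ of $T_R$ on the $\Jmc$-side, so that any $r_1$- or $r_2$-move in $\Jmc$ is matched on the $\Imc$-side by the $r$-move (which is simultaneously an $r_1$- and $r_2$-move). The correct synchronization of the non-$\Sigma$ counter symbols between the two models, and the fact that the concept names $A_i,\overline A_i$ used inside $T_R$ lie outside $\Sigma$, are what make this work.
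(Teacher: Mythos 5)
Your proposal follows the paper's proof essentially verbatim: the same construction in the forward direction (the $\Sigma$-decorated computation tree $\widehat\Jmc$ at the end of the $r$-chain, counter-decorated copies $\Imc_k$ at the leaves of $T_R$, and the level-wise bisimulation relating each depth-$\ell$ chain node to all depth-$\ell$ tree nodes) and the same forced-tree argument in the backward direction. One small caution: the mutual bisimilarity of the relevant leaves should be derived by applying the [Forth] clause to the $r$-chain read as an arbitrary $r_1/r_2$-word (yielding, for each counter value, \emph{some} leaf bisimilar to the chain's endpoint), since applying [Back] to an arbitrary branch of $T_R$, as your phrasing suggests, only matches it to some $r_1/r_2$-path in $\Imc$ that need not be the $r$-chain.
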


\begin{proof} ``1 $\Rightarrow$ 2''. If $M$ accepts $w$, there is a
  computation tree of $M$ on $w$. We construct a single interpretation
  $\Imc$ with $\Imc,d\sim_{\ALCH,\Sigma} \Imc,e$ for some
  $d,e$ with $d\in (\exists r^n.\top)^\Imc$ and $e\notin (\exists
  r^n.\top)^\Imc$ as follows. Let $\widehat\Jmc$ be the infinite tree-shaped
  interpretation that represents the computation tree of $M$ on $w$ as
  described above, that is, configurations are represented by
  sequences of $2^n$ elements linked by role $s$ and labeled by
  $B_\forall,B_\exists^1,B_\exists^2$ depending on whether the
  configuration is universal or existential, and in the latter case
  the superscript indicates which choice has been made for the
  existential state. Finally, the first element of the first successor
  configuration of a universal configuration is labeled with
  $Z$. Observe that $\widehat\Jmc$
  interprets only the symbols in $\Sigma$ as non-empty. Now, we
  obtain interpretations  $\Imc_k$, $k<2^n$ from $\widehat\Jmc$ by interpreting
  non-$\Sigma$-symbols as follows: 
  \begin{itemize}

    \item the root of $\Imc_k$ satisfies $L_R$; 

    \item the $U$-counter starts at $0$ at the root and counts modulo
      $2^n$ along each $s$-path;

    \item the $V$-counter starts at $k$ at the root and counts modulo
      $2^n$ along each $s$-path;

    \item the auxiliary concept names of the shape $A_\sigma^i$ and
      $A_\sigma'$ are interpreted in a minimal way so as to satisfy the
      concept inclusions that enforce the coordination between
      consecutive configurations (cf. the concept inclusions in proof
      of Lemma~\ref{lem:lowernom}). 

  \end{itemize}
  Now obtain $\Imc$ from $\widehat\Jmc$ and the $\Imc_k$ as follows:
  First, create a path of length $n$ from some element $d$ so that
  consecutive elements are connected with $r,r_1,r_2$, and
  identify the end point of the path with the root of $\widehat\Jmc$. Then
  create a binary tree of depth $n$, rooted in $e$, in which left children are always
  $r_1$-successors and right children are always $r_2$-successors.
  Label the nodes of the tree with $R_i$ and $A_j$ as described above and 
  identify the leaf having $A$-counter value $k$ with the root of
  $\Imc_k$, for all $k<2^n$. $\Imc$ is as
  required since, by construction, $d\in (\exists r^n.\top)^\Imc$,
  $e\notin (\exists r^n.\top)^\Imc$, and 
  the reflexive, transitive, and symmetric closure of 
  \begin{itemize}

    \item all pairs $(d',e')$ such that $d'$ has distance
      $\ell\leq n$ from $d$ and $e'$ has distance $\ell$ from
      $e$, and 

    \item all pairs $(b,b')$, with $b$ in $\widehat\Jmc$ and
      $b'$ a copy of $b$ in some tree $\Imc_{k}$

  \end{itemize}
  is an $\ALCH(\Sigma)$-bisimulation $S$ on $\Imc$ with $(d,e)\in S$.

  ``2 $\Rightarrow$ 1''.  Assume that $\Imc,d\sim_{\ALCH,\Sigma} \Jmc,e$
  for models $\Imc,\Jmc$ of \Omc and 
  some $d,e$ with $d\in (\exists r^n.\top)^\Imc$ and $e\notin
  (\exists r^n.\top)^\Jmc$. As argued above, there are $r_1/r_2$-paths
  of length $n$ whose end points carry all possible counters $<2^n$ and are
  all $\ALCH(\Sigma)$-bisimilar. 
  In addition, all these end points root $s$-trees which are
  $\ALCH(\Sigma)$-bisimilar. Since $\Jmc$ is a model of $\Omc$, all
  these $s$-trees are additionally labeled with some auxiliary concept
  names not in $\Sigma$, depending on the value of the $A$-counter of
  the corresponding leaf. Using the concept inclusions in \Omc and the
  arguments given in their description, it can be shown that all
  $s$-trees contain a computation tree of $M$ on input $w$ (which is
  solely represented with concept names in $\Sigma$).  
\end{proof}

The same proof works as well for $\ALCH^u$ as the relation $S$
constructed in the direction ``$1\Rightarrow 2$'' above is actually an
$\ALCH^u(\Sigma)$-bisimulation. For
$\Lmc\in\{\mathcal{ALCHI},\mathcal{ALCHI}^u\}$, we have to slightly
adapt the model construction in ``$1\Rightarrow 2$'', following the
idea provided in Example~\ref{ex:4} (except that we do not need to
take the union of $\Imc_1,\Imc_2$ here, since we construct a single
interpretation $\Imc=\Imc_1=\Imc_2$).  Let $d_0,\ldots,d_n$ be the
elements on the $r$-path that starts in $d$, that is, $d_0=d$ and
$d_\ell$ has distance $\ell$ from $d$.  Recall that $(d_\ell,e')\in S$
for every element $e'$ in level $\ell$ in the binary tree rooted at
$e$. Observe that $S$ is not an $\Lmc(\Sigma)$-bisimulation since, for
$\ell>0$, $d_\ell$ has both an $r_1$ and an $r_2$-predecessor (both
are $d_{\ell-1}$), but elements in the binary tree lack either an
$r_1$- or an $r_2$-predecessor. To repair this, we add for every
element $e'$ in level $\ell>0$ in the binary tree the following
connections: 
\[ (d_{\ell-1},e')\in r_1^\Imc\quad\text{ and }\quad(d_{\ell-1},e')\in
r_2^\Imc.\]
It can be verified that the modified interpretation is still a model
of \Omc, and that $S$ is an $\Lmc(\Sigma)$-bisimulation as required. 

We conclude the section by remarking that one can analyze the
structure of the explicit $\ALCH(\Sigma)$-definitions that arise in
the proof of Lemma~\ref{lem:lowerrolei} along the lines of
Section~\ref{sec:shapeofinterpolant}. In contrast to that section,
the size of the FO-definition \[\varphi(x_1) = \exists
  x_2\ldots\exists x_n. \bigwedge_{i=1}^{n-1}r_1(x_i,x_{i+1})\wedge
r_2(x_{i},x_{i+1})\]
of $\exists r^n.\top$ under \Omc is not constant, but depends on $n$.

\subsection{CI-interpolant Existence}\label{sec:lowerci}

We show the \TwoExpTime lower bound for CI-interpolant existence
stated in Theorem~\ref{thm:ciinterpolants}.  We employ the ontology
$\Omc$, individual $b$, and signature $\Sigma$ constructed in the
proof of Lemma~\ref{lem:sim1} and remind the reader that the claim of
Lemma~\ref{lem:sim1} holds also for $\mathcal{ALCO}^{u}$ and
$\mathcal{ALCOI}^{u}$. Let $\Omc_1$ be defined as $\Omc$ without
$\{b\} \sqsubseteq \exists r.\{b\}$and with $\neg \{b\} \sqcap \exists
r.\top \sqsubseteq I_{s}$ replaced by $\exists r.\top \sqsubseteq
I_{s}$. Also define $\Omc_{2}$ as $\Omc$ without $\{b\} \sqsubseteq
\exists r.\{b\}$ and with all concept and role names not in $\Sigma$
replaced by fresh symbols. Transform $\Omc_{2}$ into an equivalent
ontology of the form $\{\top\sqsubseteq D\}$.  Observe that $\Omc_{1}$
does not use $b$. In fact, the shared symbols of $\Omc_{1}$ and the CI
$\forall u.D \sqcap \{b\} \sqsubseteq \neg \exists r.\{b\}$ are
exactly the symbols in $\Sigma$. 
The \TwoExpTime lower bound now follows from the following lemma.

\begin{lemma}\label{lem:twoexp11} Let $\Lmc\in \{\mathcal{ALCO}^{u},\mathcal{ALCOI}^{u}\}$. Then the following conditions are equivalent:
	\begin{enumerate} 
		\item Point~2 of Lemma~\ref{lem:sim1} holds; that is, there exist models $\Imc$ and $\Jmc$ of $\Omc$ such that $\Imc,b^{\Imc}\sim_{\Lmc,\Sigma} \Jmc,d$, for some
		$d\not=b^{\Jmc}$;  
		\item there does not exist an $\Lmc$-CI interpolant for $\Omc_{1}$ and $\forall u.D \sqcap \{b\} \sqsubseteq \neg \exists r.\{b\}$.
	\end{enumerate}
\end{lemma}
\begin{proof}
	Assume Point~(1) holds and take $\Imc,\Jmc$, and $d$ witnesssing this. We may assume that
	$\Imc=\Jmc$ is the interpretation constructed in the proof of ``1 $\Rightarrow$ 2'' of Lemma~\ref{lem:sim1}. Assume for a proof by contradiction that $\Omc'$
	is an $\Lmc$-CI interpolant for $\Omc_{1}$ and $\forall u.D \sqcap \{b\} \sqsubseteq 
	\neg \exists r.\{b\}$. Let $\Imc'$ denote the restriction of $\Imc$ to elements that cannot be reached from $b$ along a path following $r^{\Imc}$ or $s^{\Imc}$ and reinterpret $b$ as an element of $\Delta^{\Imc'}$. Then $\Imc'$ is a model of $\Omc_{1}$ by the definition of $\Imc$ and since $\Omc_{1}$ does not contain any CIs with the individual $b$ . Moreover, we have $\Imc,b^{\Imc}\sim_{\Lmc,\Sigma} \Imc',d$ since $b\not\in\Sigma$. Then, as $\Lmc$ admits the universal role, 
	$\Imc$ is a model of $\Omc'$. We now reinterpret in $\Imc$ the fresh concept and role names in $\Omc_{2}$ in the same way as the original ones in $\Imc$ and obtain a model $\Imc''$ with $\Delta^{\Imc}=D^{\Imc''}$ since $\Imc$ is a model of $\Omc$. But then $\Imc''\not\models \forall u.D \sqcap \{b\} \sqsubseteq \neg \exists r.\{b\}$ and so
	(as $\Imc''$ is still a model of $\Omc'$ since $\text{sig}(\Omc')\subseteq \Sigma$) $\Omc'\not\models \forall u.D \sqcap \{b\} \sqsubseteq \neg \exists r.\{b\}$, a contradiction.
	
	Conversely, assume there does not exist an $\Lmc$-CI interpolant for $\Omc_{1}$ and $\forall u.D \sqcap \{b\} \sqsubseteq \neg \exists r.\{b\}$. As in the proof of Theorem~\ref{thm:critint} and using the fact that $\Lmc$ admits the universal role, we obtain a model $\Jmc$ of $\Omc_{1}$, $d\in \Delta^{\Jmc}$, and an interpretation
	$\Imc$ with $\Imc\not\models \forall u.D \sqcap \{b\} \sqsubseteq \neg \exists r.\{b\}$ and  $\Imc,b^{\Imc}\sim_{\Lmc,\Sigma} \Jmc,d$. We may assume that $\Imc$ and $\Jmc$ are disjoint. Observe that $\Jmc$ satisfies all CIs in $\Omc$ with the exception of $\{b\} \sqsubseteq \exists r.\{b\}$. By reinterpreting in $\Imc$ the original concept and role names in $\Omc$ in the same way as the fresh concept and role names in $\Omc_{2}$, we obtain a model $\Imc'$ of $\Omc$. Take the union $\Imc'\cup \Jmc$ of $\Imc'$ and $\Jmc$ with $b^{\Imc'\cup \Jmc}$ defined as $b^{\Imc}$. Then $\Imc' \cup \Jmc$ is a model of $\Omc$ such that $\Imc'\cup \Jmc,b^{\Imc'\cup \Jmc}\sim_{\Lmc,\Sigma} \Imc'\cup \Jmc,d$, for some
	$d\not=b^{\Jmc}$, as required for Point~(1).
\end{proof}


\section{Upper Bound Proofs without Ontology}\label{sec:complexity}

The upper bound for Points~1 and~2  of Theorem~\ref{thm:main2} is a consequence of
the respective upper bounds in Theorem~\ref{thm:main1}. For
showing the upper bounds of Points~3 and~4 in Theorem~\ref{thm:main2}, we prove
that joint consistency is in \NExpTime and then apply Theorem~\ref{thm:critint}. 
Indeed, the \NExpTime upper bound follows directly
from the following exponential size witness model property.

\begin{lemma}\label{lem:witness}
  Let $\Lmc\in \Dln$ admit neither the universal role nor both inverse
  roles and nominals simultaneously. Let $\Omc$ be a set of RIs,
  $C_{1},C_{2}$ $\Lmc$-concepts, and $\Sigma$ a signature. If $C_{1}$
  and $C_{2}$ are jointly consistent under $\Omc$ modulo
  $\Lmc(\Sigma)$-bisimulations, then there exist pointed interpretations $\Imc_1,d_1$ and $\Imc_2,d_2$ with $\Imc_{1},\Imc_{2}$ models of 
  $\Omc$ and of at most exponential size in $||O||+||C_1||+||C_2||$ such that $d_1\in
  C_{1}^{\Imc_1}$, $d_2\in C_{2}^{\Imc_2}$, and $\Imc_1,d_1\sim_{\Lmc,
  \Sigma} \Imc_2,d_2$.
%
\end{lemma}  
Before we prove Lemma~\ref{lem:witness}, we introduce some notation.
The \emph{depth} of a concept $C$ is the number of nestings of
existential restrictions in $C$.  For instance, a concept name has
depth $0$ and $\exists r.\exists r.B$ has depth $2$.
Given the ontology $\Omc$, concepts $C_{1}, C_{2}$, and the signature $\Sigma$, we use the notation introduced in Section~\ref{sec:upperbound}. For instance, the set of concepts $\Xi$, $\Xi$-types $t$, and mosaics $(T_{1},T_{2})$ are defined as in Section~\ref{sec:upperbound}.
While in Section~\ref{sec:upperbound} we used the relation $\rightsquigarrow_{r}$ between mosaics to guide the construction of interpretations, here we use a relation between mosaics that is directly induced by interpretations. Assume interpretations $\Imc_{1}$ and $\Imc_{2}$ are given. Consider mosaics $p=(T_{1}(d),T_{2}(d))$ and $q=(T_{1}(d'),T_{2}(d'))$ 
such that there exists a role name $r\in \Sigma$ with $(d,d')\in r^{\Imc_{i}}$,
for some $i\in \{1,2\}$. Then define, for every role name $s$
    and $i \in \{ 1, 2 \}$,
    relations $R_{p,q}^{s,i}\subseteq T_{i}(d)\times T_{i}(d')$ by setting
    $(t,t')\in R_{p,q}^{s,i}$ if there exist $e$ and $e'$ realizing $t$ and $t'$, respectively, with $(T_{1}(e),T_{2}(e))= p$ and $(T_{1}(e'),T_{2}(e'))=q$, such that $(e,e')\in s^{\Imc_{i}}$. 

\medskip

Now assume that $C_{1}$ and $C_{2}$ are jointly consistent under
$\Omc$ modulo $\Lmc(\Sigma)$-bisimulations. By definition, there exist
pointed models $\Imc_{1},d_{1}$ and $\Imc_{2},d_{2}$ of $\Omc$ such
that $d_{1}\in C_{1}^{\Imc_{1}}$, $d_{2}\in C_{2}^{\Imc_{2}}$, and
$\Imc_{1},d_{1}\sim_{\Lmc, \Sigma} \Imc_{2},d_{2}$. Let $k$ be the
maximum depth of $C_{1}, C_{2}$.

%
%
%
%
%

We start with the case involving nominals and without inverse roles.
We construct exponential size $\Jmc_{1},\Jmc_{2}$ with the same
properties as $\Imc_{1},\Imc_{2}$ above. Intuitively, $\Jmc_i$ is
obtained via a suitable unraveling operation up to the depth
$k$ of the concepts $C_1,C_2$; during the unraveling, we take
care of the nominals and, moreover, restrict the outdegree of the
produced interpretation by keeping only necessary successors. 
Formally, let $\mathcal{B}$ be 
some minimal set of mosaics defined by $\Imc_{1},\Imc_{2}$ such that 
\begin{itemize}


  \item $(T_{1}(d_{1}),T_{2}(d_1))\in \Bmc$;

  \item \Bmc contains every mosaic generated by some nominal, or
    formally, 
    $(T_1(d),T_2(d))\in \Bmc$ for every $d\in \Delta^{\Imc_i}$ such that
    $d=a^{\Imc_i}$ for some nominal $a\in \text{sig}(C_i)$;

  \item for every type $t$ realized in $\Imc_{i}$ there exists
     $(T_{1},T_{2})\in \mathcal{B}$ with $t\in T_{i}$.

\end{itemize}  
%
Intuitively, \Bmc serves to describe the behavior of the root of 
the unraveling (first item), of the nominals (second item), and of
potential witnesses for existential restrictions for
non-$\Sigma$-roles (third item). Observe that the size of $\mathcal{B}$ is
at most exponential in the size of $\mathcal{O},C_{1},C_{2}$. 
To restrict the
outdegree, select, for any mosaic $p=(T_{1},T_{2})$ defined by
$\Imc_{1},\Imc_{2}$ and any $\exists s.C\in t\in T_{i}$ such that
there exists $r\in \Sigma$ with $\Omc\models s \sqsubseteq r$, a
mosaic $q=(T_{1}',T_{2}')$ such that $(t,t')\in R_{p,q}^{s,i}$ and $C
\in t'$, and denote the resulting set by $\mathcal{S}(p)$.
%
	Form the set $\mathcal{T}$ of sequences 
	\[
	\sigma = p_{0}\cdots p_{j}=(T_{1}^{0},T_{2}^{0})\cdots (T_{1}^{j},T_{2}^{j})
      \]
	with
	$j\leq k$, $p_{0}\in \mathcal{B}$ and $p_{i+1}\in \mathcal{S}(p_{i})$ for $i < j$.
	Let $\text{tail}(\sigma)= p_{j}$ and
	$\text{tail}_{i}(\sigma)= T_{i}^{j}$.
	We next define the domain of $\Jmc_{1}$ and $\Jmc_{2}$ as
	
	\begin{eqnarray*}
		\Delta^{\Jmc_{i}} & = & \{ (t,p) \mid t\in \text{tail}_{i}(p),p\in \mathcal{B}\} \cup\\
		& &      \{ (t,\sigma) \mid \sigma\in \mathcal{T}, t\in \text{tail}_{i}(\sigma), |\sigma|>1, \text{$t$ contains no nominal}\}
	\end{eqnarray*}
	and define the interpretation of individual, concept and role names in $\Jmc_{1},\Jmc_{2}$ in the expected way: 
\begin{itemize}
	\item for any individual name $a$
	and $(T_{1},T_{2})\in \mathcal{B}$ with $\{ a \} \in t \in T_{i}$, we set $a^{\Jmc_{i}}= (t,(T_{1},T_{2}))$;
	
	\item for any concept name $A$, $(t,\sigma) \in A^{\Jmc_{i}}$ iff $A\in t$;
	
	\item for any role name $r$ we let for $\sigma p\in \mathcal{T}$, 
	\begin{itemize}
		\item $((t,\sigma),(t',\sigma p))\in r^{\Jmc_{i}}$ if $(t,t') \in R_{\text{tail}(\sigma),p}^{r,i}$ and $t'$ contains no nominal;
		\item $((t,\sigma),(t',p)) \in r^{\Jmc_{i}}$ if $(t,t') \in R_{\text{tail}(\sigma),p}^{r,i}$ and $t'$ contains a nominal.
	\end{itemize}
Next assume that $\text{tail}(\sigma)=(T_{1},T_{2})$ and $\sigma$ has
length $k$. If $\text{tail}(\sigma')=(T_{1},T_{2})$ for some
$|\sigma'|<k$, then choose as $r$-successors of any element of the
form $(t,\sigma)$ exactly the $r$-successors of $(t,\sigma')$ defined
above. If no such $\sigma'$ exists, then all elements of the form $(t,\text{tail}(\sigma))$ have distance exactly $k$ from the roots (since no nominal occurs in any type in any mosaic in $\sigma$) and no successors are added.

It remains to take care of existential restrictions $\exists r.C$ for the role names $r$
that do not entail any role name in $\Sigma$.
If $\sigma\in \mathcal{T}$, $\exists r.C\in t\in T_{i}$ with $\text{tail}_{i}(\sigma) = T_{i}$
and $\Omc\not \models r \sqsubseteq s$ for any $s\in\Sigma$, we add $((t,\sigma),(t',p))$ to $r^{\Jmc_{i}}$ (and all $s^{\Jmc_{i}}$ with $\Omc\models r\sqsubseteq s$) for some $p=(T_{1}',T_{2}')\in \mathcal{B}$ and $t'\in T_{i}'$ with $C\in t'$ such that there are $e,e'$ realizing $t,t'$ in $\Imc_{i}$ and $(e,e')\in r^{\Imc_{i}}$. 	
\end{itemize}
The following example illustrates the construction of $\Jmc_{1},\Jmc_{2}$ using the interpretations $\Imc_{1},\Imc_{2}$ introduced in Example~\ref{ex:3}.

\begin{example}
\label{ex:5}
	Let $t_{0}= \textnormal{tp}_{\Xi}(\Imc_{1},a^{\Imc_{1}})$,
	$t_{1}=\textnormal{tp}_{\Xi}(\Imc_{2},b^{\Imc_{2}})$, and
	$t_{2}=\textnormal{tp}_{\Xi}(\Imc_{2},d)$. We ignore the types realized by $b^{\Imc_{1}}$ in $\Imc_{1}$ and by $a^{\Imc_{2}}$ in $\Imc_{2}$ as they are not relevant for understanding  the construction. Then only the mosaic $p=(T_{1},T_{2})$ with $T_{1}=\{t_{0}\}$ and $T_{2}=\{t_{1},t_{2}\}$ remains and $\Jmc_{1}$ and $\Jmc_{2}$ are depicted in Figure~\ref{fig:ex5}. 
	\qedex
	
\begin{figure}[th]
\centering
\begin{tikzpicture}
\tikzset{
dot/.style = {draw, fill=black, circle, inner sep=0pt, outer sep=1pt, minimum size=2pt}
}

\draw (-2,0.5) node[label=$\Jmc_{1}$] (J1) {};

\draw (0,0) node[dot, label=south:$(t_{0}\mathit{,}\,p)$] (a) {};


\draw[->, >=stealth]  (a) edge [loop above] node[] {$r$} ();

\draw (8,0.5) node[label=$\Jmc_{2}$] (J2) {};

\draw (4,-0.5) node[dot, label=south:$(t_{1}\mathit{,}\,p)$] (b2) {};

\draw (4,0.5) node[dot, label=north:$(t_{2}\mathit{,}\,p)$] (d1) {};

\draw (6,-0.5) node[dot, label=south:$(t_{2}\mathit{,}\,p)$] (a2) {};

\draw (6,0.5) node[dot, label=east:$(t_{2}\mathit{,}\,p)$] (d2) {};


\draw[->, >=stealth] (b2) -- (d1) node[midway, left] {$r$};

\draw[->, >=stealth] (d1) -- (d2) node[pos=0.5,above] {$r$};

\draw[->, >=stealth] (a2) -- (d2) node[midway, right] {$r$};

\draw[->, >=stealth]  (d2) edge [loop above] node[] {$r$} ();

%
%

\end{tikzpicture}

\caption{Interpretations $\Jmc_{1}$ and $\Jmc_{2}$ illustrating Example~\ref{ex:5}.}
\label{fig:ex5}

\end{figure}
	
\end{example}

We show that $\Jmc_{1},\Jmc_{2}$ are as required. First, for $i \in \{ 1, 2 \}$, $\Jmc_{i} \models \Omc$ follows from the definition of $\Jmc_{i}$ and the fact that $\Imc_{i} \models \Omc$.
Indeed, given $r \sqsubseteq s \in \Omc$, let $((t, \sigma), (t', \sigma')) \in r^{\Jmc_{i}}$. This means that $(t, t') \in R^{r,i}_{\textnormal{tail}(\sigma), \textnormal{tail}(\sigma')}$,
that is, 
 there exist $e,e'$ realizing $t$ and $t'$, respectively, with $(T_{1}(e),T_{2}(e))= \textnormal{tail}(\sigma)$ and $(T_{1}(e'),T_{2}(e'))=\textnormal{tail}(\sigma')$, such that $(e,e')\in r^{\Imc_{i}}$. Since $\Imc_{i} \models \Omc$, we obtain that $(e,e') \in s^{\Imc_{i}}$ as well, and thus $(t, t') \in R^{s,i}_{\textnormal{tail}(\sigma), \textnormal{tail}(\sigma')}$, meaning that $((t, \sigma), (t', \sigma')) \in s^{\Jmc_{i}}$. Hence, $\Jmc_{i} \models r \sqsubseteq s$.

%

We next prove that, for every $(t, \sigma) \in \Delta^{\Jmc_{i}}$ and every concept $C\in \Xi$
of depth $\leq k - | \sigma |$,
\begin{equation*}
\label{eq:connom}
(t,\sigma) \in C^{\Jmc_{i}} \text{ iff } C \in t. {\color{red} }
\end{equation*}
The proof is by induction on the structure of $C$.
%
%
We consider the case $C = \exists r.D$, where $D$ has depth $< k - | \sigma |$.
We can assume that $|\sigma| < k$, since for $|\sigma| = k$ the claim holds trivially.

$(\Rightarrow)$ Let $(t, \sigma) \in \exists r.D^{\Jmc_{i}}$. Then $\exists r.D \in t$ follows by construction of $r^{\Jmc_{i}}$ as we only have $((t,\sigma),(t',\sigma')\in r^{\Jmc_{i}}$ if
there are $e,e'$ realizing $t,t'$ in $\Imc_{i}$ such that $(e,e')\in r^{\Imc_{i}}$.


$(\Leftarrow)$ Let $\text{tail}(\sigma)=p=(T_{1},T_{2})$ and suppose that $\exists r.D \in t \in T_{i}$.
We distinguish two cases.

\begin{itemize}
	\item There exists $s \in \Sigma$ such that $\Omc \models r \sqsubseteq s$.
Then there exists $q=(T_{1}',T_{2}') \in \mathcal{S}(p)$ and $t' \in T'_{i}$ such that $(t, t') \in R^{r,i}_{p,q}$ and $D \in t'$. We distinguish two cases.
\begin{itemize}
	\item $t'$ does not contain nominals. Then we have that $((t, \sigma), (t', \sigma q)) \in r^{\Jmc_{i}}$. By inductive hypothesis, $(t', \sigma q) \in D^{\Jmc_{i}}$, and thus $(t, \sigma) \in \exists r.D^{\Jmc_{i}}$.
	\item $t'$ contains a nominal. Then we have that $((t, \sigma), (t', q)) \in r^{\Jmc_{i}}$. By inductive hypothesis, $(t', q) \in D^{\Jmc_{i}}$, hence $(t, \sigma) \in \exists r.D^{\Jmc_{i}}$.
\end{itemize}
	\item For every $s \in \Sigma$, $\Omc \not \models r \sqsubseteq s$. By definition of $\Jmc_{i}$, we have $((t,\sigma), (t',q)) \in r^{\Jmc_{i}}$, for some $q = (T'_{1},T'_{2}) \in \Bmc$ and $t' \in T'_{i}$ such that $D \in t'$. By inductive hypothesis, $(t', q) \in D^{\Jmc_{i}}$. Thus, $(t,\sigma) \in \exists r.D^{\Jmc_{i}}$.
\end{itemize}
Next observe that the relation
\[
S= \{ ((t,\sigma),(t',\sigma'))\in \Delta^{\Jmc_{1}}\times \Delta^{\Jmc_{2}} \mid \text{tail}(\sigma) = \text{tail}(\sigma')\}
\]
is an $\mathcal{ALCHO}(\Sigma)$-bisimulation.
Indeed, for $((t, \sigma), (t', \sigma')) \in S$, we have the following.
%
\begin{description}
	\item{[AtomC]}
	Let $(t, \sigma) \in A^{\Jmc_{1}}$ and $A\in\Sigma$. By definition of $\Jmc_{1}$, we have that $(t, \sigma) \in A^{\Jmc_{1}}$ iff $A \in t \in \textnormal{tail}_{1}(\sigma)$, and thus $A \in t' \in \textnormal{tail}_{2}(\sigma)=\textnormal{tail}_{2}(\sigma')$, by definition of mosaics.
	But then $(t', \sigma') \in A^{\Jmc_{2}}$. The converse direction is analogous.
	\item{[AtomI]} Let $(t, \sigma) = a^{\Jmc_{1}}$ and $a\in \Sigma$. By definition of $\Jmc_{1}$, $(t, \sigma) = a^{\Jmc_{1}}$ iff $\{a\} \in t \in \textnormal{tail}_{1}(\sigma)$, and thus $\{a\} \in t' \in \textnormal{tail}_{2}(\sigma)=\textnormal{tail}_{2}(\sigma')$, by definition of mosaics. But then $(t', \sigma') = a^{\Jmc_{2}}$.
	\item{[Forth]} Suppose that $((t, \sigma),(\hat{t}, \hat{\sigma})) \in r^{\Jmc_{1}}$ with $r\in\Sigma$.
	
	First, consider the case with $|\sigma|,|\sigma'| <k$. We have two possibilities.
	\begin{itemize}
		\item $\hat{t}$ does not contain nominals. The following proof is illustrated in Figure~\ref{fig:forth}. There is a mosaic $p$ with $\hat{\sigma} = \sigma p$, and from $((t, \sigma), (\hat{t}, \sigma p)) \in r^{\Jmc_{1}}$ we obtain $(t, \hat{t}) \in R^{r,1}_{\textnormal{tail}(\sigma),p}$. This means that there exist $d,\hat{d}$ realizing $t$ and $\hat{t}$, respectively, with $(T_{1}(d),T_{2}(d))= \textnormal{tail}(\sigma)$ and $(T_{1}(\hat{d}),T_{2}(\hat{d}))=p$, such that $(d,\hat{d})\in r^{\Imc_{1}}$.
		As $t'\in T_{2}(d)$, there exists $e\in \Delta^{\Imc_{2}}$ with $\Imc_{1}, d \sim_{\mathcal{ALCHO}, \Sigma} \Imc_{2}, e$
		and $e$ realizes $t'$. By the definition of bisimulations, there exists $\hat{e}$ with $(e,\hat{e})\in r^{\Imc_{2}}$ and $\Imc_{1}, \hat{d} \sim_{\mathcal{ALCHO},\Sigma} \Imc_{2}, \hat{e}$. Assume that $\hat{e}$ realizes $\hat{t}'$. Then $\hat{t}' \in T_{2}(\hat{d})$
		and $(\hat{t},\hat{t}')\in R^{r, 2}_{\textnormal{tail}(\sigma'), p}$.
		Now we consider again two possibilities.
		\begin{itemize}
			\item $\hat{t}'$ does not contain nominals. Then from $(t', \hat{t}') \in R^{r, 2}_{\textnormal{tail}(\sigma'), p}$ we obtain $((t',\sigma'),(\hat{t}',\sigma'p)) \in r^{\Jmc_{2}}$. Since $\textnormal{tail}(\sigma p) = \textnormal{tail}(\sigma' p)$, we also obtain $((\hat{t}, \sigma p), (\hat{t}', \sigma' p)) \in S$.
			\item $\hat{t}'$ contains nominals. Then from $(t', \hat{t}') \in R^{r, 2}_{\textnormal{tail}(\sigma'), p}$ we obtain $((t',\sigma'),(\hat{t}',p)) \in r^{\Jmc_{2}}$. Since $\textnormal{tail}(\sigma p) = p$, we get that $((\hat{t}, \sigma p), (\hat{t}', p)) \in S$ as well.
		\end{itemize}
In both cases, we obtain some $(\hat{t}', \hat{\sigma}')$ with $((t', \sigma'), (\hat{t}', \hat{\sigma}')) \in r^{\Jmc_{2}}$ and $((\hat{t}, \hat{\sigma}), (\hat{t}', \hat{\sigma}')) \in S$, as required.
		\item $\hat{t}$ contains nominals. In this case, $\hat{\sigma} = p$ for some mosaic $p$, and from 
		$((t, \sigma), (\hat{t}, p)) \in r^{\Jmc_{1}}$ we obtain $(t, \hat{t}) \in R^{r,1}_{\textnormal{tail}(\sigma),p}$. Now we can reason as above.
	\end{itemize}	
Now consider the case with $| \sigma | = k$ and $|\sigma'|<k$.
As $\text{tail}(\sigma) = \text{tail}(\sigma')$, there exists $\sigma''$ such that $|\sigma''|<k$ and $\text{tail}(\sigma'') = \text{tail}(\sigma)$ and the $r$-successors of any node of the form $(t,\sigma)$ are exactly the $r$-successors of $(t,\sigma'')$, and thus to show [Forth] one can proceed as above.
The same argument applies if $| \sigma | < k$ and $|\sigma'|=k$ and if $| \sigma |= |\sigma'|=k$
and there exists $\sigma''$ with $\text{tail}(\sigma'')=\text{tail}(\sigma')=\text{tail}(\sigma)$
and $|\sigma''|<k$. Finally, if $| \sigma |= |\sigma'|=k$ but there does not exist any 
$\sigma''$ with $\text{tail}(\sigma'')=\text{tail}(\sigma')=\text{tail}(\sigma)$
and $|\sigma''|<k$, then there are no $r$-successors to consider.

	\item{[Back]} Dual to [Forth].
\end{description}

\begin{figure}[th]
\centering
\begin{tikzpicture}
\tikzset{
dot/.style = {draw, fill=black, circle, inner sep=0pt, outer sep=1pt, minimum size=2pt}
}

\draw (-1,0.5) node[label=$\mathcal{I}_{1}$] (I1) {};

\draw (0,0) node[dot, label=south:$d$] (d) {};

\draw (0,1) node[dot, label=north:$\hat{d}$] (dh) {};


\draw[->, >=stealth] (d) -- (dh) node[midway, left] {$r$};

\draw (4,0.5) node[label=$\mathcal{I}_{2}$] (I2) {};

\draw (3,0) node[dot, label=south:$e$] (e) {};

\draw (3,1) node[dot, label=north:$\hat{e}$] (eh) {};


\draw[->, >=stealth] (e) -- (eh) node[midway, right] {$r$};


\path[black, dashed, bend left] (d) edge (e);
\draw (1.5,0.35) node[label=$\sim_{\mathcal{ALCHO},\Sigma}$] () {};

\path[black, dashed, bend left] (dh) edge (eh);
\draw (1.5,1.35) node[label=$\sim_{\mathcal{ALCHO},\Sigma}$] () {};

\path[black] (5,-0.5) edge (5,1.5);

\draw (6,0.5) node[label=$\mathcal{J}_{1}$] (J1) {};

\draw (11,0.5) node[label=$\mathcal{J}_{2}$] (J2) {};

\draw (7,0) node[dot, label={[shift={(-0.15,-0.75)}]:$(t\mathit{,}\,\sigma)$}] (t) {};

\draw (7,1) node[dot,  label={[shift={(-0.15,0.1)}]:$(\hat{t}\mathit{,}\,\hat{\sigma})$}] (th) {};

\draw (10,0) node[dot, label={[shift={(0.15,-0.75)}]:$(t'\mathit{,}\,\sigma')$}] (t') {};

\draw (10,1) node[dot, label={[shift={(0.15,0.1)}]:$(\hat{t}'\mathit{,}\,\hat{\sigma}')$}] (t'h) {};


\draw[->, >=stealth] (t) -- (th) node[midway, left] {$r$};

\draw[->, >=stealth] (t') -- (t'h) node[midway, right] {$r$};


\path[black, dashed, bend left] (t) edge (t');
\draw (8.5,0.35) node[label=$S$] () {};

\path[black, dashed, bend left] (th) edge (t'h);
\draw (8.5,1.35) node[label=$S$] () {};
%

\end{tikzpicture}

\caption{
Proof step to show that $S$ satisfies [Forth], with $\hat{\sigma}=\sigma p$ and $\hat{\sigma}'=\sigma'p$ (if $\hat{t}'$ does not contain nominals) and $\hat{\sigma}'=p$ (if $\hat{t}'$ contains nominals).}\label{fig:forth}

\end{figure}

Observe that the models $\Jmc_{i}$, $i = 1,2$, are at most exponential in the size of $\Omc, C_{1}, C_{2}$. Moreover, we have $(T_{1}(d_{1}),(T_{2}(d_{1}))\in \mathcal{B}$ and so $(\textnormal{tp}_{\Xi}(\Imc_{1},d_{1}),T_{1}(d_{1})) \in C_{1}^{{\Jmc_{1}}}$, 
$(\textnormal{tp}_{\Xi}(\Imc_{2},d_{2}),T_{2}(d_{1}))\in C_{2}^{{\Jmc_{2}}}$, and
\[
((\textnormal{tp}_{\Xi}(\Imc_{1},d_{1}),T_{1}(d_{1})),(\textnormal{tp}_{\Xi}(\Imc_{2},d_{2}),T_{2}(d_{1}))\in S,
\]
as required.

We next consider the case with inverse roles, but without nominals.    
In this case, we let $\mathcal{B}$ be some minimal set of mosaics defined by $\Imc_{1},\Imc_{2}$ containing $(T_{1}(d_{1}),T_{2}(d_{1}))$ and such that for every type $t$ realized in $\Imc_{i}$ there exists $(T_{1},T_{2})\in \mathcal{B}$ with $t\in T_{i}$.
We extend the relations $R_{p,q}^{s,i}$ defined previously to inverse roles $s$ in the obious way and 
select for any mosaic $p=(T_{1},T_{2})$ and any $\exists s.C\in t\in T_{i}$ such that there exists a $\Sigma$-role $r$ with $\Omc\models s \sqsubseteq r$ a mosaic $q=(T_{1}',T_{2}')$ such that $(t,t')\in R_{p,q}^{s,i}$ and $C\in t'$ and denote the resulting set by $\mathcal{S}(p)$.

Form again the set $\mathcal{T}$ of sequences 
\[
\sigma = p_{0}\cdots p_{j}=(T_{1}^{0},T_{2}^{0})\cdots (T_{1}^{j},T_{2}^{j})
\]
with $j\leq k$, $p_{0}\in \mathcal{B}$ and $p_{i+1}\in \mathcal{S}(p_{i})$ for $i < j$.
Let
$\text{tail}(\sigma)= p_{j}$ and
$\text{tail}_{i}(\sigma)= T_{i}^{j}$.
We next define the domain of $\Jmc_{1}$ and $\Jmc_{2}$ as
\[
	\Delta^{\Jmc_{i}}  = \{ (t,\sigma) \mid \sigma\in \mathcal{T}, t\in \text{tail}_{i}(\sigma)\}
\]
We define interpretations $\Jmc_{1},\Jmc_{2}$ in the expected way. 
\begin{itemize}
	\item For any concept name $A$, $(t,\sigma) \in A^{\Jmc_{i}}$ iff $A\in t$;
	\item Let $r$ be a role name. Then we let for $\sigma p\in \mathcal{T}$, 
	\begin{itemize}
		\item $((t,\sigma),(t',\sigma p))\in r^{\Jmc_{i}}$ if $(t,t') \in R_{\text{tail}(\sigma),p}^{r,i}$;
		\item $((t',\sigma p),(t,\sigma)) \in r^{\Jmc_{i}}$
		if $(t,t') \in R_{\text{tail}(\sigma),p}^{r^{-},i}$.
	\end{itemize}
	\item We still have to take care of existential restrictions $\exists r.C$ with $r$ a role 	that does not entail any $\Sigma$-role.
	If $\sigma\in \mathcal{T}$, $\exists r.C\in t\in T_{i}$ with $\text{tail}_{i}(\sigma) = T_{i}$
	and $\Omc\not \models r \sqsubseteq s$ for any $\Sigma$-role $s$, we add $((t,\sigma),(t',p))$ to $r^{\Jmc_{i}}$ (and all $s^{\Jmc_{i}}$ with $\Omc\models r\sqsubseteq s$) for some $p=(T_{1}',T_{2}')\in \mathcal{B}$ and $t'\in T_{i}'$ with $C\in t'$ such that there are $e,e'$ realizing $t,t'$ in $\Imc_{i}$ and $(e,e')\in r^{\Imc_{i}}$. 		
\end{itemize}

The fact that $\Jmc_{i} \models \Omc$, for $i \in \{ 1, 2 \}$, is proved similarly to the case with nominals.
%
One can also prove again by induction on the structure of $C$ that for every 
$(t, \sigma) \in \Delta^{\Jmc_{i}}$ and every $C\in \Xi$ of depth
$\leq k - |\sigma|$,
\begin{equation*}
\label{eq:coninv}
(t,\sigma) \in C^{\Jmc_{i}} \text{ iff } C \in t.
\end{equation*}
Next we observe that
the relation
\[
S= \{ ((t,\sigma),(t',\sigma))\in \Delta^{\Jmc_{1}}\times \Delta^{\Jmc_{2}} \mid \sigma \in \mathcal{T}\}
\]
is an $\mathcal{ALCHI}(\Sigma)$-bisimulation.
Indeed, it can be seen, similar to the case with nominals, that $S$ satisfies [AtomC].
We now give a proof of [Forth]. We provide the proof for role names; 
the proof for inverse roles is similar. 
\begin{description}
	\item{[Forth]} Let $((t, \sigma), (t', \sigma)) \in S$ and $((t, \sigma), (\hat{t}, \hat{\sigma})) \in r^{\Jmc_{1}}$. We distinguish two cases.
	Assume first that there exists a mosaic $p$ with $\hat{\sigma} = \sigma p$. Then 
	$(t,\hat{t}) \in R^{r,1}_{\text{tail}(\sigma),p}$.
	Thus, there exist $d, \hat{d}$ realizing $t, \hat{t}$, respectively, such that $(T_{1}(d), T_{2}(d)) = \text{tail}(\sigma)$, $(T_{1}(\hat{d}), T_{2}(\hat{d})) = p$, and $(d,\hat{d}) \in r^{\Imc_{1}}$.
	Since $((t, \sigma), (t', \sigma)) \in S$, there exists $e$ realizing $t'$
such that $\Imc_{1}, d \sim_{\mathcal{ALCHI},\Sigma} \Imc_{2}, e$.
As $d$ and $e$ are bisimilar,
	 we also have some $\hat{e} \in \Delta^{\Imc_{2}}$ such that $(e,\hat{e}) \in r^{\Imc_{2}}$ and $\Imc_{1}, \hat{d} \sim_{\mathcal{ALHI},\Sigma} \Imc_{2}, \hat{e}$, with $\hat{e}$ realizing some $\hat{t}'$. Hence, $(t', \hat{t}') \in R^{r,2}_{\text{tail}(\sigma),p}$, and it follows that $((t',\sigma), (\hat{t}', \sigma p)) \in r^{\Jmc_{2}}$. Moreover, $((\hat{t}, \sigma p), (\hat{t}', \sigma p)) \in S$.
	
	Assume now that $\sigma=\hat{\sigma}p$ for some mosaic $p$.  Then 
	$(\hat{t},t) \in R^{r^{-},1}_{\text{tail}(\hat{\sigma}),p}$.
	Thus, there exist $\hat{d},d$ realizing $\hat{t},t$, respectively, such that $(T_{1}(\hat{d}), T_{2}(\hat{d})) = \text{tail}(\hat{\sigma})$, $(T_{1}(d), T_{2}(d)) = p$, and $(\hat{d},d) \in (r^{-})^{\Imc_{1}}$.
	Since $((t, \sigma), (t', \sigma)) \in S$, there exists $e$ realizing $t'$
	such that $\Imc_{1}, d \sim_{\mathcal{ALCHI},\Sigma} \Imc_{2}, e$.
	As $d$ and $e$ are bisimilar,
	we also have some $\hat{e} \in \Delta^{\Imc_{2}}$ such that $(\hat{e}, e) \in (r^{-})^{\Imc_{2}}$ and $\Imc_{1}, \hat{d} \sim_{\mathcal{ALHI},\Sigma} \Imc_{2}, \hat{e}$, with $\hat{e}$ realizing some $\hat{t}'$. Hence, $(\hat{t}',t') \in R^{r^{-},2}_{\text{tail}(\hat{\sigma}),p}$, and it follows that $((t',\sigma), (\hat{t}', \hat{\sigma})) \in r^{\Jmc_{2}}$. Moreover, $((\hat{t}, \hat{\sigma}), (\hat{t}', \hat{\sigma})) \in S$.
\end{description}
Observe that again the models $\Jmc_{i}$, $i = 1,2$, are of at most exponential size in the size of $\Omc, C_{1}, C_{2}$. We also have $(T_{1}(d_{1}),T_{2}(d_{1}))\in \mathcal{B}$ and so $(\textnormal{tp}_{\Xi}(\Imc_{1},d_{1}),T_{1}(d_{1})) \in C_{1}^{{\Jmc_{1}}}$, 
$(\textnormal{tp}_{\Xi}(\Imc_{2},d_{2}),T_{2}(d_{1}))\in C_{2}^{{\Jmc_{2}}}$, and
	\[
	((\textnormal{tp}_{\Xi}(\Imc_{1},d_{1}),T_{1}(d_{1})),(\textnormal{tp}_{\Xi}(\Imc_{2},d_{2}),T_{2}(d_{1}))\in S,
      \]
	as required.

\section{Lower Bound Proofs without Ontology}\label{sec:lower-without-ontology}

In this section, we first show (the hardness part of) Points~1 and~2 of
Theorem~\ref{thm:main2} by a reduction of the case with ontologies,
and then show the nondeterministic exponential time lower bounds for
Points~3 and~4 of that Theorem. Points~1 and~2 of Theorem~\ref{thm:main2} are a direct consequence of the following lemma.
\begin{lemma}\label{lem:red33}
  Let $\Lmc\in \Dln$ admit the universal role or both inverse roles and nominals. Then the following holds: 
  \begin{enumerate}

    \item if $\Lmc$ admits RIs, then projective $\Lmc$-definition existence can be reduced in
      polynomial time to RI-ontology projective $\Lmc$-definition
      existence;

    \item if $\Lmc$ does not admit RIs, then projective $\Lmc$-definition existence can be reduced in
      polynomial time to ontology-free projective $\Lmc$-definition
      existence.

  \end{enumerate}   	
\end{lemma}
\begin{proof}
	Assume $\Omc$, $C$, $C_{0}$, and $\Sigma$ are given. 
	We may assume that $\Omc$ takes the form $\{\top \sqsubseteq D\}\cup \Omc'$ with $\Omc'$ a set of RIs.
	
	Assume first that $\Lmc$ admits the universal role. Then one can easily show that there exists 	
	an explicit $\Lmc(\Sigma)$-definition of $C_{0}$ under $\Omc$ and $C$ iff there exists an explicit $\Lmc(\Sigma)$-definition of $C_{0}$ under $\Omc'$ and $C \sqcap \forall u.D$.
	
	Now assume that $\Lmc$ admits inverse roles and nominals. We use the spy-point technique to encode the universal role~\cite{DBLP:conf/csl/ArecesBM99}. Introduce a fresh individual $a$ and a fresh role name $r_{0}$ and define
	$U$ as the conjunction of the concepts 
	\[
	\{a\}, \quad \exists r_{0}.\{a\}, \quad \exists r_{0}.(\{b\} \sqcap \exists r_{0}.\{a\}), \quad \forall r_{0}^{-}.\forall s.\exists r_{0}.\{a\},
      \]
	for all $s\in \{r,r^{-}\}$ with $r\in \text{sig}(\Omc, C,
	C_{0})$ and $b\in \text{sig}(\Omc,C,C_{0})$. Observe that if
	$d\in (U \sqcap \forall r_{0}^{-}.F)^{\Imc}$ for some
	interpretation $\Imc$ and concept $F$, then $e\in F^{\Imc}$
	holds for all elements $e$ in $\Delta^{\Imc}$ that can be reached in $\Imc$ from $d$ or any $b^{\Imc}$ with $b\in \text{sig}(\Omc,C,C_{0})$ along roles in $\text{sig}(\Omc,C,C_{0})$. It follows that for any $\mathcal{L}(\Sigma)$-concept $E$, we have
	\[
	\mathcal{O}\models C \sqsubseteq (C_{0} \leftrightarrow E) \quad \mbox{ iff } \quad \mathcal{O}'\models (C \sqcap U \sqcap \forall r_{0}^{-}.D) \sqsubseteq (C_{0} \leftrightarrow E).
      \]
	Hence there exists 	
	an explicit $\Lmc(\Sigma)$-definition of $C_{0}$ under $\Omc$ and $C$ iff there exists an explicit $\Lmc(\Sigma)$-definition of $C_{0}$ under $\Omc'$ and $C \sqcap U \sqcap \forall r_{0}^{-}.D$. 	
\end{proof}
We show the lower bound for Theorem~\ref{thm:main2}, Points~3 and~4, by proving \NExpTime-hardness for the version of joint consistency formulated in Theorem~\ref{thm:critdef}.
We reduce the exponential torus tiling problem. A \emph{tiling system} is a triple
$P=(T,H,V)$, where $T = \{0,\dots,k\}$ is a finite set of \emph{tile
types} and $H,V \subseteq T \times T$ are the \emph{horizontal}
and \emph{vertical} matching conditions, respectively. 
An \emph{initial condition} for
$P$ takes the form $c = (c_0,\dots,c_{n-1}) \in T^n$. A mapping
$\tau: \{0,\dots,2^{n}-1\} \times \{0,\dots,2^{n}-1\} \to T$ is a
\emph{solution for $P$ and $c$} if 
$\tau(i,0) = c_{i}$ for all $i < n$, and 
for all $i,j < 2^{n}$, the
following conditions hold (where $\oplus_k$ denotes addition modulo $k$):
\begin{itemize}
  \item if $\tau(i,j) = t_1$ and $\tau(i \oplus_{2^{n}} 1,j) =
    t_2$, then $(t_1,t_2) \in H$;

  \item if $\tau(i,j) = t_1$ and $\tau(i,j \oplus_{2^{n}} 1) =
    t_2$, then $(t_1,t_2) \in V$.

    %
\end{itemize}
It is well-known that the problem of deciding whether there is a
solution for given $P$ and $c$ is \NExpTime-hard~\cite[Section 5.2.2]{DL-Textbook}.
For the following constructions, assume a tiling system $P$ and an
initial condition $c$ of length $n$.

\medskip
For the reduction for $\mathcal{ALCO}$, we give concepts
$C,C_0$ and a signature $\Sigma$ such that here exist
$\Imc_{1},d_{1}\sim_{\mathcal{ALCO}, \Sigma} \Imc_{2},d_{2}$ with
$d_{1} \in (C \sqcap C_{0})^{\Imc_{1}}$ and $d_{2}\in (C \sqcap \neg
C_{0})^{\Imc_{2}}$ iff $P$ has a solution given $c$. We start with setting
\[
C_{0} = \exists r^{2n}.\{a\} \sqcap \forall r^{2n}.\{a\}
\]
with $a\not\in\Sigma$ and $r\in \Sigma$. In addition to $r$, $\Sigma$ contains concept names $B_0,\dots,B_{2n-1}$ that serve as bits in the
binary representation of \emph{grid positions} $(i,j)$ with $0\leq i,j\leq 2^{n}-1$, where
bits $B_0,\ldots,B_{n-1}$ represent the horizontal position $i$ and
$B_n,\ldots,B_{2n-1}$ the vertical position $j$, and concept names
$T_{0},\ldots,T_{k}$ representing tile types. We also use the
following concept names that are not in $\Sigma$: another four sets of
concepts names $A_0,\dots,A_{2n-1}$
and $V_0,\dots,V_{2n-1}$ with $V\in \{X,Y,Z\}$ that also serve as 
bits in the binary representation of grid position $(i,j)$ with $0\leq i,j\leq 2^{n}-1$, and
concept names $R_{0},\ldots,R_{2n}$, $M$, $M_{1}$, and $M_{2}$. We now define the concept $C$ 
as a conjunction of several concepts. The first conjunct is  
\[
\neg C_{0} \sqcap \exists r^{2n}.\top \rightarrow R_{0}.
\]
Intuitively, $R_{0}$ generates a binary $r$-tree of depth $2n$ with $R_{i}$ true at level $i$ for $0\leq i\leq 2^{2n}$ and each leaf represents a grid position $(i,j)$ using the concept names $A_i$. To achieve this let $C$ contain the following conjuncts for generating the binary tree:
\begin{align*}
  \bigsqcap_{0\leq i <2n}\forall r^i.&\big(R_{i} \rightarrow (\exists r.(A_{i} \sqcap R_{i+1})
  \sqcap \exists r.(\neg A_{i} \sqcap R_{i+1}))\big)\\
  \bigsqcap_{1\leq i< 2n}\bigsqcap_{0\leq j<i}\forall r^{i}.&\big((A_{j} \rightarrow \forall
  r.A_{j}) \sqcap (\neg {A}_{j} \rightarrow \forall r.\neg
  {A}_{j})\big)
\end{align*}
As usual, $\forall r^i$ abbreviates a sequence of $i$ times
$\forall r$.
%
 
We next express using additional conjuncts of $C$ that any leaf $d$
representing $(i,j)$ using $A_i$ has the following properties~(A) -- (C):
\begin{enumerate}[label=(\Alph*)]

  \item $d$ has an $r$-successor representing $(i,j)$ using
    $B_i$ with a tile type $T_{(i,j)}$ true in it;
    moreover, no $r$-successor of $d$ representing $(i,j)$ satisfies a
    tile type different from $T_{(i,j)}$. This is achieved using the
    marker $M$ which holds in exactly those $r$-successors of $d$ that
    represent $(i,j)$ using $B_i$. The latter
    condition is expressed using the counter
    $X_{i}$ which represents $(i,j)$ on all
    $r$-successors of $d$. In detail, we add the following conjuncts
    to $C$: 
    \begin{align*}
      \forall r^{2n}.&\exists r.M\\
      %
      \forall r^{2n}.&\big(\bigsqcap_{i<2n}(A_{i} \rightarrow
      \forall r.X_{i}) \sqcap (\neg{A}_{i} \rightarrow \forall
      r.\neg{X}_{i})\big)\\
      \forall r^{2n+1}.&\big(M\leftrightarrow
      \bigsqcap_{i<2n}(X_{i}\leftrightarrow B_{i}) \sqcap
      (\neg{X}_{i} \leftrightarrow \neg{B}_{i})\big)\\
      \forall r^{2n}.&\big(\forall r.(M \rightarrow \bigsqcup_{i\leq k}
      T_{i}) \sqcap \bigsqcap_{i\leq k} \exists r.(M \sqcap T_{i})
      \rightarrow \forall r.(M \rightarrow T_{i})\big) \\
      \forall
      r^{2n+1}.&\bigsqcap_{i\not=j}\neg (T_{i} \sqcap T_{j})
    \end{align*}

  \item $d$ has an $r$-successor representing
    $(i\oplus_{2^{n}}1,j)$ using $B_i$ with a
    tile type $T_{(i,j)}^{\text{right}}$ true in it such that
    $(T_{(i,j)},T_{(i,j)}^{\text{right}})\in H$; moreover, no
    $r$-successor of $d$ representing $(i\oplus_{2^{n}}1,j)$
    satisfies a tile type different from
    $T_{(i,j)}^{\text{right}}$. This is achieved in a similar way as (A)
    using the marker $M_{1}$ which holds in exactly those
    $r$-successors of $d$ that represent $(i\oplus_{2^{n}}1,j)$
    using $B_i$. The latter condition is
    expressed using the counter $Y_{i}$ which
    represents $(i\oplus_{2^{n}}1,j)$ on all $r$-successors of
    $d$. The implementation of these conditions is similar to (A)
    and omitted.

  \item $d$ has an $r$-successor representing $(i,
    j\oplus_{2^{n}}1)$ using $B_{i}$ with a tile type
    $T_{(i,j)}^{\text{up}}$ true in it such that
    $(T_{(i,j)},T_{(i, j)}^{\text{up}})\in V$; moreover, no
    $r$-successor of $d$ representing $(i, j\oplus_{2^{n}}1)$
    satisfies a tile type different from $T_{(i,j)}^{\text{up}}$.
    This is achieved in a similar way as (A) using the marker $M_{2}$
    which holds in exactly those $r$-successors of $d$ that
    represent $(i,j\oplus_{2^{n}}1)$ using $B_i$. The latter
    condition is expressed using the counter
    $Z_{i}$ which represents
    $(i,j\oplus_{2^{n}}1)$ on all $r$-successors of $d$. The
    implementation is again similar to (A) and omitted.

\end{enumerate}
  Finally, we ensure that the initial condition holds, that is $T_{(i,0)}=c_{i}$ for $i<n$.  To this end we add the conjuncts \[\forall r^{2n}.(A=(i,0)
    \rightarrow (\forall r.(M \rightarrow c_{i})))\] for $i<n$, where
    $A=(i,0)$ stands for the representation of $(i,0)$ using
    $A_{i}$; for instance, $A=(0,0)$ stands for
    $\bigsqcap_{0\leq i <2n}\neg{A}_{i}$.

This finishes the definition of $C,C_0$ and we verify next that they are as required. 

\smallskip \noindent \textit{Claim}. There exist
$\Imc_{1},d_{1}\sim_{\mathcal{ALCO}, \Sigma} \Imc_{2},d_{2}$ with
$d_{1} \in (C \sqcap C_{0})^{\Imc_{1}}$ and $d_{2}\in (C \sqcap \neg
C_{0})^{\Imc_{2}}$ iff $P$ has a solution given $c$.

\smallskip \noindent \textit{Proof of the Claim.} Observe that if
$\Imc_{1},d_{1}\sim_{\mathcal{ALCO}, \Sigma} \Imc_{2},d_{2}$ with
$d_{1} \in (C \sqcap C_{0})^{\Imc_{1}}$ and $d_{2}\in (C \sqcap \neg
C_{0})^{\Imc_{2}}$, then there are elements $e_{(i,j)}$, $0\leq i,j\leq
2^{n}-1$ such that 
\[ \Imc_{1},a^{\Imc_{1}}\sim_{\mathcal{ALCO},
\Sigma} \Imc_{2},e_{(i,j)}\]
and $e_{(i,j)}$ has (at least) three
$r$-successors satisfying Conditions~(A) to (C) and the initial condition.  By
$\Sigma$-bisimilarity and since $r\in \Sigma$, all $e_{(i,j)}$ have
$r$-successors satisfying the same concept names in $\Sigma$. Hence,
since the concept names $B_{i}$ and  $T_{i}$ are in $\Sigma$, for
every grid position $(i,j)$ every $e_{(i',j')}$ has an $r$-successor
representing $(i,j)$ using $B_{i}$ and all
$r$-successors representing $(i,j)$ using $B_{i}$
satisfy the same tile type $T_{(i,j)}$. Moreover,
$T_{(i\oplus_{2^{n}}1,j)}= T_{(i,j)}^{\text{right}}$ and
$T_{(i,j\oplus_{2^{n}}1)}= T_{(i,j)}^{\text{up}}$.  It follows that
the mapping $\tau$ defined by setting $\tau(i,j)=T_{(i,j)}$ is a
solution of $P$ given $c$.

Conversely, assume that $P$ and $c$ have a solution $\tau$. The
definition of an interpretation $\Imc$ with elements $d_{1}$ and
$d_{2}$ such that $\Imc,d_{1}\sim_{\mathcal{ALCO}, \Sigma} \Imc,d_{2}$
with $d_{1} \in (C \sqcap C_{0})^{\Imc}$ and $d_{2}\in (C \sqcap \neg
C_{0})^{\Imc}$ is rather straightforward.  An abstract version is
depicted in Figure~\ref{fig:lower}. We omit the counters, and note
that $a^\Imc$ and \emph{all} elements at level $R_{2n}$ have, for all
$0\leq i,j<2^n$, an $r$-successor representing (using concept names
$B_i$) grid position $(i,j)$ which satisfies the concept name $T_{\tau(i,j)}$. We show only the
three special successors from Conditions~(A)--(C). This finishes the
proof of the Claim and thus the reduction for \ALCO.
 
\begin{figure}[th]
\centering
\begin{tikzpicture}
\tikzset{
dot/.style = {draw, fill=black, circle, inner sep=0pt, outer sep=1pt, minimum size=2pt}
}


\draw (-0.4,0) node[label=center:$d_{1}$] (I1) {};


\draw (0,0) node[dot, label=north:$C \sqcap C_{0}$] (F2n) {};

\draw (0,-1) node[dot] (1) {};

\draw (0,-1.45) node {$\vdots$};

\draw (0,-2) node[dot] (i) {};

\draw (0,-3) node[dot, label=south:$a^\Imc$] (a) {};


\draw[->, >=stealth] (F2n) -- (1) node[midway, left] {$r$};
\draw[->, >=stealth] (i) -- (a) node[midway, left] {$r$};


\draw (5.5,0) node[label=center:$d_{2}$] (I2) {};

\draw (5,0) node[dot, label=north:$C \sqcap \lnot C_{0}$, label={[xshift=0cm, yshift=-0.8cm]:$R_{0}$}] (0) {};

\draw (4,-1) node[dot] (1l) {};
\draw (6,-1) node[dot] (1r) {};

\draw (5,-3) node[dot, label=east:$(i\mathit{,}j)$, label=west:$R_{2n}$] (ic) {};


\draw (4,-4) node[dot, label={[align=center]west:$(i\mathit{,}j)$}, label={south:$M$}] (m) {};
\draw (5,-4) node[dot, label={[align=center]south:$M_{1}$\\$(i\oplus_{2^{n}}1\mathit{,}j)$}] (m1) {};
\draw (6,-4) node[dot, label={[align=center]east:$(i\mathit{,}j\oplus_{2^{n}}1)$}, label={south:$M_{2}$}] (m2) {};


\draw[->, >=stealth] (0) -- (1l) node[midway, left] {$r$};
\draw[->, >=stealth] (0) -- (1r) node[midway, right] {$r$};

\draw[->, >=stealth] (ic) -- (m) node[midway, left] {$r$};
\draw[->, >=stealth] (ic) -- (m1) node[midway, right] {$r$};
\draw[->, >=stealth] (ic) -- (m2) node[midway, right] {$r$};

\draw[thick, loosely dotted] (3.5,-1.5) -- (3,-2);
\draw[thick, loosely dotted] (6.5,-1.5) -- (7,-2);
\draw[thick, loosely dotted] (5,-1.5) -- (5,-2.25);

\draw[thick, loosely dotted] (3,-3) -- (3.75,-3);
\draw[thick, loosely dotted] (6.25,-3) -- (7,-3);




\draw[] (-0.5,-2.65) -- (8, -2.65);
\draw[] (-0.5,-3.65) -- (8, -3.65);
\draw[] (-0.5,-2.65) -- (-0.5, -3.65);
\draw[] (8,-2.65) -- (8, -3.65);


\draw (1.75,0) node[label=center:$\sim_{\mathcal{ALCO}, \Sigma}$] (sim0) {};
\draw (1.75,-1) node[label=center:$\sim_{\mathcal{ALCO}, \Sigma}$] (sim1) {};

\draw (1.75,-1.45) node {$\vdots$};

\draw (1.75,-2.15) node[label=center:$\sim_{\mathcal{ALCO}, \Sigma}$] (simi) {};
\draw (1.75,-3.1) node[label=center:$\sim_{\mathcal{ALCO}, \Sigma}$] (simlast) {};

%
%
%

\end{tikzpicture}
\caption{
Interpretation $\Imc$ with elements $d_1\in (C \sqcap C_{0})^\Imc$ and
$d_2\in(C \sqcap \lnot C_{0})^\Imc$ such that
$\Imc,d_1\sim_{\ALCO,\Sigma}\Imc,d_2$.}
\label{fig:lower}
\end{figure}

%
%
%
%

\medskip
We come to the lower bound for $\mathcal{ALCH}$ and $\mathcal{ALCHI}$.
Let 
\[ \Omc=\{r \sqsubseteq r_{1},r\sqsubseteq r_{2},r_{1}\sqsubseteq
v,r_{2}\sqsubseteq v\}, \]
and $\Sigma$ contains $r_{1},r_{2}$ but not $r$ nor $v$. In addition
to $r_{1}$ and $r_{2}$, $\Sigma$ contains exactly the same concept
names as in the $\mathcal{ALCO}$ proof and we also use the same
concept names not in $\Sigma$. We aim to construct concepts $C,C_0$
such that there exist models $\Imc_1,\Imc_2$ of \Omc and $d_{1} \in (C
\sqcap C_0)^{\Imc_{1}}$ and $d_{2}\in (C \sqcap \neg
C_0)^{\Imc_{2}}$ with
$\Imc_{1},d_{1}\sim_{\mathcal{ALCH}, \Sigma} \Imc_{2},d_{2}$  iff $P$
has a solution given $c$.

We set $C_{0}=\exists r^{2n}.\top$. The concept $C$ is again a
conjunction of several concepts; we start in a similar way as for \ALCO with 
\[
\neg C_0 \sqcap \exists v^{2n}.\top \rightarrow R_{0}
\]
The concept name $R_0$ will enforce that 
\begin{itemize}

  \item[$(\ast\ast)$] the end point of any $r_1/r_2$-path of length $2n$
    starting in an element satisfying $R_0$ carries a pair of counter values $(i,j)$
    represented by concept names $A_i$ which describe the path in a
    canonical way.\footnote{Notice the similarity with
      Property~$(\ast)$ from the proof of Lemma~\ref{lem:lowerrolei}.}

\end{itemize}
To achieve this, we include the following conjuncts in $C$:
\begin{align*}
  \bigsqcap_{0\leq i<2n}\forall v^i.& \big(R_{i} \to \forall
  r_{1}.(A_{i}\sqcap R_{i+1}) \sqcap \forall r_{2}.(\neg A_{i}\sqcap
  R_{i+1})\big) \\
  \bigsqcap_{1\leq i< 2n}\bigsqcap_{0\leq j<i}\forall
  v^{i}.&\big((A_{j} \rightarrow \forall v.A_{j}) \sqcap (\neg {A}_{j}
  \rightarrow \forall v.\neg {A}_{j})\big)
\end{align*}
Note that we can use the role name $v$ to address all elements
reachable along $r_{1}/r_{2}$-paths of length $i$ via $\forall v^{i}$.
We continue the definition of $C$ in exactly the same way as for
$\mathcal{ALCO}$ except that we use $\forall v^{2n}$ to reach the end
points of the paths mentioned in~$(\ast\ast)$ and $r_{1}$-successors
of the leaves to encode a solution of the tiling problem. One can then
easily prove the following.

\smallskip
\noindent    
\textit{Claim.} There exist $\Imc_{1},d_{1}\sim_{\mathcal{ALCH}, \Sigma} \Imc_{2},d_{2}$ with $\Imc_{1},\Imc_{2}$ models of $\Omc$, $d_{1} \in (C \sqcap \exists r^{2n}.\top)^{\Imc_{1}}$, and
$d_{2}\in (C \sqcap \neg \exists r^{2n}.\top)^{\Imc_{2}}$ iff $P$ has
a solution given $c$.

\smallskip \noindent \textit{Proof of the Claim.} Observe that if
$\Imc_{1},d_{1}\sim_{\mathcal{ALCH}, \Sigma} \Imc_{2},d_{2}$ with
$\Imc_{1},\Imc_{2}$ models of $\Omc$, $d_{1} \in (C \sqcap \exists
r^{2n}.\top)^{\Imc_{1}}$, and $d_{2}\in (C \sqcap \neg \exists
r^{2n}.\top)^{\Imc_{2}}$, then there exists an element $e$ reachable
from $d_{1}$ along an $r$-path of length $2n$ in $\Imc_{1}$. Since
$d_2\in R_0^{\Imc_2}$ and $e$ is reachable via arbitrary $r_1/r_2$-paths
of length $2n$ from $d_1$, Property~$(\ast\ast)$ implies that there
are elements $e_{(i,j)}$, $0\leq i,j\leq 2^{n}-1$, reachable from
$d_{2}$ along a $v$-path of length $2n$ in $\Imc_{2}$ such that
$\Imc_{1},e \sim_{\mathcal{ALCH}, \Sigma} \Imc_{2},e_{(i,j)}$ and
$e_{i,j}$ represents the pair $(i,j)$ using the concept names $A_i$. The
remaining proof is now essentially the same as for $\ALCO$.

The converse direction is rather straightforward and similar to the
proof for \ALCO. The difference is that the binary tree over role
$r$ in the right
side of interpretation
$\Imc$ depicted in Figure~\ref{fig:lower} is now a binary
tree over roles $r_1$ (left successor) and $r_2$ (right successor). This finishes the
proof of the Claim.

\medskip   To prove the claim above for $\mathcal{ALCHI}$, we adapt the
model construction in a similar way as in the case with ontologies (Section~\ref{sec:lower-role-inclusions}).
More precisely, 
for each element $e$ at level $\ell>0$ in the binary tree below $d_2$, add
$(d,e)\in r_1^\Imc$ and $(d,e)\in r_2^\Imc$, where $d$ is the element
in distance $\ell-1$ from $d_1$. One can then verify that $\Imc$ is as
required, that is, $\Imc,d_1\sim_{\mathcal{ALCHI}} \Imc,d_2$, $d_1\in
(C\sqcap C_0)^\Imc$, and $d_2\in (C\sqcap \neg C_0)^\Imc$.


\section{Computation Problem}\label{sec:computation}

In the previous sections, we have presented algorithms for
\emph{deciding} the existence of interpolants and explicit
definitions, but these algorithms (and their correctness proofs) do
not give immediately rise to a way of \emph{computing} interpolants
and explicit definitions in case they exist. Intuitively, this is due
to the fact that compactness is used in the proof of the
model-theoretic characterization of interpolant and explicit
definition existence in terms of joint consistency modulo
bisimulations which was provided in Theorems~\ref{thm:critint}
and~\ref{thm:critdef}, respectively. In this section, we address the
computation problem for logics in $\Dln$ that do not admit nominals,
by showing that we can actually compute interpolants in case they
exist. We use \emph{DAG representation} for the interpolants; recall
that in DAG representation common sub-formulas are stored only
once, and that thus DAG representation is more succinct than formula
representation. Our approach is inspired by 
a recent note on a type elimination based computation
of interpolants in modal logic~\cite{baldernote} which was originally
provided for the guarded fragment~\cite{DBLP:journals/tocl/BenediktCB16}.
 
\begin{theorem} \label{thm:compute} 
  Let $\Lmc\in \Dln$ not admit nominals, \Omc be an $\Lmc$-ontology, $C_1,C_2$ be \Lmc-concepts, and $\Sigma$
  be a signature. Then, if there is an $\Lmc(\Sigma)$-interpolant for
  $C_1\sqsubseteq C_2$ under \Omc, we can compute the DAG
  representation of an $\Lmc(\Sigma)$-interpolant in time
  $2^{2^{p(n)}}$ where $p$ is a polynomial and
  $n=||\Omc||+||C_1||+||C_2||$.
  %
%
%
%
  %
\end{theorem}
Note that this implies that the DAG representation is also of double
exponential size, and that a formula representation of the interpolant
can be computed in triple exponential time.  Moreover, this also
allows us to compute explicit definitions since, given $\Omc$, $C$,
and $\Sigma$, any $\Lmc(\Sigma)$-interpolant for
$C_{\Sigma}\sqsubseteq C$ under $\Omc\cup \Omc_{\Sigma}$ is an
explicit $\mathcal{L}(\Sigma)$-definition of $C$ under $\Omc$, where
$\Omc_{\Sigma}$ and $C_{\Sigma}$ are obtained from $\Omc$ and $C$ by
replacing all symbols not in $\Sigma$ by fresh symbols. We conjecture
that the triple exponential upper bound on formula size is actually
tight, 
 given the discussion on the explicit definitions that arise in
the hardness proofs in Sections~\ref{sec:shapeofinterpolant}
and~\ref{sec:lower-role-inclusions}. 

Let $\Lmc$, \Omc, $C_1,C_2$, and $\Sigma$ be as in
Theorem~\ref{thm:compute}. By Theorem~\ref{thm:critint}, the existence
of an $\Lmc(\Sigma)$-interpolant for $C_1\sqsubseteq C_2$ under \Omc
is equivalent to joint consistency of $C_1$ and $\neg C_2$ under \Omc
modulo $\Lmc(\Sigma)$-bisimulations. Recall that we have provided
before Lemma~\ref{lem:mosaic-elim-wo-nominals} in
Section~\ref{sec:upperbound} a mosaic elimination procedure for
deciding the latter. In fact, the computation of the
$\Lmc(\Sigma)$-interpolant relies on a finer analysis of that
procedure. We need one more notion to formalize this analyis.

Let $T$ be a set of $\Xi$-types. 
Let $\Imc$ be an
interpretation and, for each $t\in T$, let $d_t$ be a domain element of
$\Imc$. 
We say that $\Imc$ and the elements $d_t$, $t\in T$ \emph{jointly realize
$T$ modulo $\Lmc(\Sigma)$-bisimulations} if, for all $t,t'\in T$,
we have that
$\mn{tp}_{\Xi}(\Imc,d_t)= t$ and $\Imc,d_t\sim_{\Lmc,\Sigma}
\Imc,d_{t'}$.  We 
call
$T$ \emph{jointly realizable under \Omc modulo
$\Lmc(\Sigma)$-bisimulations} if there is a model $\Imc$ of $\Omc$ and 
elements $d_t$ for each $t\in T$ 
that jointly realize $T$ modulo
$\Lmc(\Sigma)$-bisimulations. In contrast to the notion of
joint consistency, we require here a single model $\Imc$ of \Omc.
In what follows, let $\mn{Real}$ denote the set of all sets of types $T$
which are jointly realizable under \Omc modulo
$\Lmc(\Sigma)$-bisimulations. We can effectively determine
$\mn{Real}$ since joint realizability of a set $T$ can be decided in double
exponential time, similar to joint consistency---we refrain from
giving details. 

In the (proof of the) following lemma we show how to compute a concept
differentiating between $T_1$ and $T_2$ when $(T_1,T_2)$ is eliminated
for $T_1,T_2\in \mn{Real}$. In Lemma~\ref{lem:compute-interpolant}
below, we show how to assemble these differentiating concepts to
an interpolant (in case it exists).
\begin{lemma} \label{lem:construction}
  Let $T_1,T_2\in \mn{Real}$. 
  If $(T_1,T_2)$ is eliminated in the mosaic elimination procedure,
  then we can compute an $\Lmc(\Sigma)$-concept $I_{T_1,T_2}$ such
  that
  \begin{enumerate}
    \item for all models $\Imc$ of $\Omc$ and elements $d_t$, for each $t\in
      T_1$, 
      that jointly realize $T_1$ modulo $\Lmc(\Sigma)$-bisimulations,
      $d_t\in I_{T_1,T_2}^\Imc$ for some (equivalently: all) $t\in
      T_1$;

    \item for all models $\Imc$ of 
      $\Omc$ and elements $d_t$, for each $t\in T_2$, that jointly realize
      $T_2$ modulo $\Lmc(\Sigma)$-bisimulations, 
      $d_t\notin I_{T_1,T_2}^\Imc$ for some (equivalently: all) $t\in T_2$.

  \end{enumerate}
  Moreover, a DAG representation of $I_{T_1,T_2}$ can be computed in
  time $2^{2^{p(n)}}$ for some polynomial $p$ and
  $n=||\Omc||+||C_1||+||C_2||$.
\end{lemma}

\begin{proof}
  We compute the $I_{T_{1},T_{2}}$ inductively in the order in which
  the $(T_1,T_2)$ got eliminated in the elimination procedure. We
  distinguish cases why $(T_1,T_2)$ got eliminated. 

  Suppose first that $(T_1,T_2)$ was eliminated because of
  (failing) $\Sigma$-concept name coherence. Since $T_1,T_2$ are both
  jointly realizable, there are the following two cases.
  \begin{enumerate}[label=(\alph*)]

    \item There is a concept name $A\in \Sigma$ such that $A\in t$ for
      all $t\in T_1$, but $A\notin t$, for all $t\in T_2$. Then
      $I_{T_1,T_2} = A.$

    \item There is a concept name $A\in \Sigma$ such that $A\in t$ for
      all $t\in T_2$, but $A\notin t$, for all $t\in T_1$. Then
      $I_{T_1,T_2} = \neg A$.

  \end{enumerate}
  Clearly, in both cases, $I_{T_1,T_2}$ satisfies Points~(1)
  and~(2) of Lemma~\ref{lem:construction}.

  Now, suppose that $(T_1,T_2)$ was eliminated due to
  (failing) existential saturation from $\Smc_i$ during the
  elimination procedure. Since $T_1,T_2$ are both jointly realizable under \Omc, there are the following two cases. 
  \begin{enumerate}[label=(\alph*)]

    \item There exist $t\in T_1$, $\exists r.C\in t$, and a
      $\Sigma$-role $s$ with $\Omc\models r\sqsubseteq s$, such
      that there is no $(T_1',T_2')\in \Smc_i$ such that \textit{(i)} 
      $(T_1,T_2)\rightsquigarrow_s (T_1',T_2')$ and \textit{(ii)} there is
      $t'\in T_1'$ with $C\in t'$ and
      $t\rightsquigarrow_{r,\Omc}t'$. Then, take
      \[
	I_{T_1,T_2} = \exists s.(\bigsqcup_{\substack {T_{1}'\in
	\mn{Real}, \\ T_1\rightsquigarrow_{s}
	  T_{1}',t\rightsquigarrow_{r,\Omc}t',C\in t'\in
	  T_{1}'}}\bigsqcap_{\substack{T'_2\in \mn{Real}, \\
	T_2\rightsquigarrow_{s} T_{2}'}}I_{T_1',T_2'})\]

      \item There exist $t\in T_2$, $\exists r.C\in t$, and a 
	$\Sigma$-role $s$ with $\Omc\models r\sqsubseteq s$, such
	that there is no $(T_1',T_2')\in \Smc$ such that \textit{(i)}
	$(T_1,T_2)\rightsquigarrow_s (T_1',T_2')$ and \textit{(ii)} there
	is $t'\in T_2'$ with $C\in t'$ and
	$t\rightsquigarrow_{r,\Omc}t'$. Then, take
	\[
	  I_{T_1,T_2} = \forall s.(\bigsqcup_{\substack
	    {T_{1}'\in\mn{Real},\\ T_1\rightsquigarrow_{s}
	    T_{1}'}} \bigsqcap_{\substack{T'_{2}\in\mn{Real}, \\
	    T_2\rightsquigarrow_{s} T_{2}',t\rightsquigarrow_{r,\Omc}t',C\in t'\in
	  T_{2}'}}I_{T_1',T_2'})\]

    \end{enumerate}
    We show Points~(1) and~(2) of the lemma for Case~(a); Case~(b) is
    dual. So suppose Case~(a) applies and fix $t\in T_1,\exists
    r.C\in t$, and a $\Sigma$-role $s$ witnessing that.

    \smallskip To show Point~(1) of the lemma, let $\Imc$ be a model of
    $\Omc$ and fix $d_{t_1}$ for each $t_1\in T_1$ such that $\Imc$ and
    the $d_{t_1}$ jointly realize $T_1$ modulo
    $\Lmc(\Sigma)$-bisimulations. It suffices to show that
    $d_t\in I_{T_1,T_2}^\Imc$ for the type $t$ that was fixed in the
    application of Case~(a). Since $d_t$ realizes $t$ and $\exists
    r.C\in t$, there is some $e\in
    C^\Imc$ with $(d_t,e)\in r^\Imc$. Since $\Omc\models
    r\sqsubseteq s$, also $(d_t,e)\in s^\Imc$. Since the $d_{t_1},t_1\in
    T_1$ are mutually $\Lmc(\Sigma)$-bisimilar and $s$ is a
    $\Sigma$-role, we find elements $e_{t_1},t_1\in T_1$ such that:
    \begin{itemize}

      \item $e_{t_1},t_1\in T_1$ 
      are mutually $\Lmc(\Sigma)$-bisimilar,

      \item $(d_{t_1},e_{t_1})\in s^\Imc$, for all $t_1\in T_1$,

      \item $e_{t} = e$. 

    \end{itemize}
    Let
    \[T_1' = \{ \mn{tp}_\Xi(\Imc,e_{t_1})\mid t_1\in T_1\},\] and let
    further $T_2'\in\mn{Real}$ be arbitrary with $T_2\rightsquigarrow_s
    T_2'$. By definition of
    $T_1'$, we have $T_1'\in \mn{Real}$ and $T_1\rightsquigarrow_s T_1'$. Thus, $(T_1',T_2')$ has been
    eliminated before $(T_1,T_2)$: otherwise, Case~(a) would not apply to
    the fixed $t,\exists r.C,s$.
    By induction, we can conclude that $e=e_t\in I_{T_1',T_2'}^\Imc$, and hence $d\in
    I_{T_1,T_2}^\Imc$.

    \smallskip To show Point~(2) of the lemma, let $\Imc$ be a model of
    $\Omc$ and fix $d_{t_2}$ for each $t_2\in T_2$ such that $\Imc$ and
    the $d_{t_2}$ jointly realize $T_2$ modulo
    $\Lmc(\Sigma)$-bisimulations. Suppose, to the contrary of what has
    to be shown, that $d_{\widehat
    t}\in I_{T_1,T_2}^\Imc$ for some $\widehat t\in T_2$. Then, there
    is an $e$ with $(d_{\widehat t},e)\in s^{\Imc}$ and a 
    $T_1'\in\mn{Real}$ with $T_1\rightsquigarrow_s T_1'$ and a type $t'_1\in T_1$ with
    $t\rightsquigarrow_{r,\Omc}t_1'$ and $C\in t_1'$ such
    that 
    \begin{itemize}

      \item[$(\ast)$] $e\in I_{T_1',T}^\Imc$ for all $T\in\mn{Real}$ with
	$T_2\rightsquigarrow_s T$. 

    \end{itemize}
    Since $\Imc$ and the elements $d_{t_2},t_2\in T_2$ jointly realize $T_2$ modulo
    $\Lmc(\Sigma)$-bisimulations and $s$ is a $\Sigma$-role, there are
    elements $e_{t_2},t_2\in T_2$ such that:  
    \begin{itemize}

      \item $e_{t_2},t_2\in T_2$ are mutually $\Lmc(\Sigma)$-bisimilar,

      \item $(d_{t_2},e_{t_2})\in s^\Imc$, for all $t_2\in T_2$,

      \item $e_{\widehat t} = e$.

    \end{itemize}
    Let
    \[T_2' = \{ \mn{tp}_\Xi(\Imc,e_{t_2})\mid t_2\in T_2\}.\] 
    By definition of
    $T_2'$, we have $T_2'\in\mn{Real}$ and $T_2\rightsquigarrow_s T_2'$. Thus, $(T_1',T_2')$ has been eliminated before $(T_1,T_2)$:
    otherwise, Case~(a) would not apply to the fixed $t,\exists r.C,s$.
    By induction, we obtain $e=e_{\widehat t}\notin
    I_{T_1',T_2'}^\Imc$, in contradiction to~$(\ast)$.

    \bigskip

For the analysis of the DAG representation, observe that we can use a
single node for every $I_{T_1,T_2}$. Moreover, $I_{T_1,T_2}$ looks as
follows:
\begin{itemize}

  \item If $(T_1,T_2)$ was eliminated due to failing $\Sigma$-concept
    name coherence, $I_{T_1,T_2}$ is a single concept name $A$ or its
    negation $\neg A$.

  \item Otherwise, it is a node labeled with $\exists s$ (resp., $\forall
    s$), which has a single successor labeled with $\bigsqcup$. This
    successor has then at most double exponentially many successor
    nodes, each labeled with $\bigsqcap$ and each having at most
    double exponentially many successor nodes $I_{T_1,T_2}$.

\end{itemize}
Overall, we obtain double exponentially many nodes in the DAG and
the DAG can be constructed in double exponential time (both in
$p(||\Omc||+||C_1||+||C_2||)$).
\end{proof}

\begin{lemma} \label{lem:compute-interpolant}
  Suppose the result $\Smc^*$ of the mosaic elimination procedure does not
  contain a pair $(T_1,T_2)\in\mn{Real}\times\mn{Real}$ such that
  $C_1\in t_1$ and $\neg C_2\in t_2$ for some types $t_1\in T_1$ and
  $t_2\in T_2$. Then, 
  \[C = \bigsqcup_{\substack{T_1\in\mn{Real}: \\\text{there is $t_1\in
  T_1$ with $C_1\in t_1$}}}\bigsqcap_{\substack{T_2\in\mn{Real}: \\
\text{there is $t_2\in T_2$ with $\neg C_2\in t_2$}}} I_{T_1,T_2}\]
  is an $\Lmc(\Sigma)$-interpolant for $C_1\sqsubseteq C_2$ under
  $\Omc$. Moreover, a DAG representation of $C$ can be computed in
  time $2^{2^{p(n)}}$, for some polynomial $p$ and
  $n=||\Omc||+||C_1||+||C_2||$.
\end{lemma}

\begin{proof}
  We have to show that $\Omc\models C_1\sqsubseteq C$ and 
  $\Omc\models C\sqsubseteq C_2$.

  For $\Omc\models C_1\sqsubseteq C$, let $\Imc$ be a model of $\Omc$
  and suppose $d\in C_1^\Imc$. Let $T_1 = \{\mn{tp}_{\Xi}(\Imc,d)\}$
  consist of the single type of $d$. Clearly, $T_1\in\mn{Real}$. Let
  $T_2\in\mn{Real}$ be arbitrary such that $\neg C_2\in t$, for some
  $t\in T_2$. By assumption of Lemma~\ref{lem:compute-interpolant},
  $(T_1,T_2)$ got eliminated in the elimination procedure. Point~(1) of
  Lemma~\ref{lem:construction} implies $d\in I_{T_1,T_2}^\Imc$. Hence,
  $d\in C^\Imc$.

  For $\Omc\models C\sqsubseteq C_2$, let \Imc be a model of $\Omc$
  and let $d\in (\neg C_2)^\Imc$. Now, let $T_1\in\mn{Real}$ be
  arbitrary such that $C_1\in t$ for some $t\in T_1$, and set $T_2 =
  \{ \mn{tp}_{\Xi}(\Imc,d) \}$.  Clearly, $T_2\in \mn{Real}$. By
  assumption of Lemma~\ref{lem:compute-interpolant}, $(T_1,T_2)$ got
  eliminated in the elimination procedure. Point~(2) of
  Lemma~\ref{lem:construction} implies $d\notin I_{T_1,T_2}^\Imc$.
  Hence, $d\notin C^\Imc$.

  \smallskip
  For the analysis of the DAG representation of $C$, it suffices to
  recall that the DAG representations of the $I_{T_1,T_2}$ provided in
  Lemma~\ref{lem:construction} can be computed in time
  $2^{2^{p(n)}}$, and to observe that $C$ adds only one
  $\bigsqcup$ node and at most double exponentially many
  $\bigsqcap$-nodes. 
%
\end{proof}
To conclude the section, we give some intuition as to why the proof of
Theorem~\ref{thm:compute} cannot be easily adapted to logics from
$\Dln$ that admit nominals. Recall that in any two interpretations
$\Imc_1,\Imc_2$, every nominal $a$ is realized (modulo bisimulation)
in exactly one mosaic. We addressed this by starting the elimination
procedure for all possible choices of mosaics realizing the nominals.
More specifically, in the proof of Lemma~\ref{lem:dexp}, we showed
%
there is an interpolant for
$C_1\sqsubseteq C_2$ under \Omc iff, for all maximal sets \Umc of mosaics
that are good for nominals, the mosaic elimination
procedure started with \Umc leads to an $\Smc^*$ which does not satisfy Condition~2 of
Lemma~\ref{lem:mosaic-elim}, which is akin to the precondition of
Lemma~\ref{lem:compute-interpolant} above. It is, however, unclear how to combine these
different runs of the elimination procedure in proving analogues of Lemmas~\ref{lem:construction}
and~\ref{lem:compute-interpolant}. An alternative approach might be to
derive the interpolants from a suitably constrained proof of
$\Omc\models C\sqsubseteq D$ in an appropriate proof system, see
e.g.~\cite{DBLP:journals/jphil/Rautenberg83}.

\section{Some Consequences for Modal and Hybrid Logics}
\label{sec:modal}
In this section we formulate a few consequences of our results in terms of modal and hybrid logics. We focus on interpolant existence and do not discuss the transfer of results on explicit definition existence as they can be obtained in a similar way. We consider the local consequence relation and formulate results for standard hybrid modal languages without the backward modality but with any combination of nominals,
the $@$-operator, and the universal modal modality. We also briefly discuss the reformulation of description logics with role inclusions into modal logic with inclusion conditions on the accessibility relations. For detailed introductions to (hybrid) modal logics we refer the reader to~\cite{DBLP:journals/jsyml/ArecesBM01,areces200714}.

Let ML$_{@}^{u}$ denote the modal hybrid language constructed using
the rule 
\[ \varphi,\psi \quad := \quad p \mid \top \mid i \mid \neg \varphi
\mid \varphi \wedge \psi \mid \Box \varphi \mid @_{i}\varphi \mid
\Box_{u}\varphi, \] 
where $p$ ranges over a countably infinite set of \emph{propositional
variables}, $i$ ranges over a countably infinite set of
\emph{nominals}, $\Box$ ranges over an infinite set of modal operators
$\Box_{0},\dots$, and $\Box_{u}$ denotes the \emph{universal
modality}.  The fragment of ML$_{@}^{u}$ without the universal
modality is denoted ML$_{@}$, the fragment of ML$_{@}$ without the
operators $@_{i}$ is denoted ML$_{n}$, and the fragment of ML$_{n}$
without nominals is the standard language of polymodal logic and
denoted ML. By ML$_{n}^{u}$ we denote the fragment of ML$_{@}^{u}$
without the operators $@_{i}$ and by ML$^{u}$ the extension of ML with
the universal modality.

The signature $\text{sig}(\varphi)$ of a formula $\varphi$ is the set of propositional variables, nominals, and modal operators (without the universal role) occurring in it.

The language $ML_{@}^{u}$ and its fragments are interpreted in
\emph{Kripke models} $\mathfrak{M}=(W,(R_{i})_{i<\omega},V)$ with $W$
a nonempty set of \emph{worlds}, $R_{i}\subseteq W\times W$
accessibility relations, and $V$ a valuation such that $V(p) \subseteq
W$ for every propositional variable $p$, and $V(i)\subseteq W$ a
singleton for every nominal $i$. Then the truth relation
$\mathfrak{M},w\models \varphi$ between \emph{pointed models}
$\mathfrak{M},w$ with $w\in W$ and formulas $\varphi$ is defined
inductively as follows:
%
\begin{alignat*}{3}
  \mathfrak{M}, w &\models \top, &&\\
  \mathfrak{M}, w &\models p & \qquad \text{iff} \qquad& w \in V(p),\\
  \mathfrak{M}, w &\models i & \qquad \text{iff} \qquad & V(i) = \{ w \},\\
  \mathfrak{M}, w &\models \lnot \psi & \qquad \text{iff} \qquad & \mathfrak{M}, w
  \not \models \psi,\\
  \mathfrak{M}, w &\models \psi \land \chi & \qquad \text{iff} \qquad &\mathfrak{M},
  w  \models \psi \text{ and } \mathfrak{M}, w  \models \chi, \\
  \mathfrak{M}, w &\models \Box_{n} \psi & \qquad \text{iff} \qquad & \mathfrak{M}, v
  \models \psi, \text{ for every }v \in W\text{ such that } \\
  & & & (w,v) \in
  R_{n}, \\
  \mathfrak{M}, w &\models @_{i} \psi & \qquad \text{iff} \qquad & \mathfrak{M}, v
  \models \psi, \text{ for the unique element $v\in V(i)$},\\
  \mathfrak{M}, w &\models \Box_{u} \psi & \qquad \text{iff} \qquad &\mathfrak{M}, v
  \models \psi, \text{ for every } v \in W.
\end{alignat*}
%
%
We set $\mathfrak{M}\models \varphi$ if $\mathfrak{M},w\models \varphi$
for all $w\in W$.  Observe that the $@$-operator can be defined using
the universal modality as $@_{i}\varphi=\Box_{u}(i\rightarrow
\varphi)$ and so ML$_{n}^{u}$ and ML$_{@}^{u}$ have the same
expressive power.
 
There are two natural notions of consequence studied in modal and
hybrid logics, local and global entailment, which also give rise to different
notions of interpolants. We focus here on local entailment and
briefly discuss global entailment at the end of this section.
We say that $\varphi$ \emph{locally entails} $\psi$, in symbols
$\varphi \models_{loc} \psi$, if for all pointed models
$\mathfrak{M},w$, if $\mathfrak{M},w\models \varphi$ then
$\mathfrak{M},w\models \psi$.
We note that deciding $\models_{loc}$ is \PSpace-complete for any of the languages introduced above without the universal modality and \ExpTime-complete for any of the languages introduced above with the universal modality~\cite{areces200714}. 

We formulate the interpolant existence problems for hybrid modal
logics in the expected way. Call a formula $\chi$ an
\emph{interpolant for $\varphi,\psi$} if $\text{sig}(\chi) \subseteq
\text{sig}(\varphi)\cap\text{sig}(\psi)$, $\varphi\models_{loc} \chi$
and $\chi\models_{loc} \psi$.
\begin{definition} Let $\Lmc$ be any of the languages introduced
  above. Then the \emph{interpolant existence problem for $\Lmc$} is
  the problem to decide for any $\varphi,\psi\in \Lmc$ whether there
  exists an interpolant for $\varphi,\psi$ in $\Lmc$.
\end{definition} 
Observe that since ML and ML$^{u}$ enjoy the Craig
interpolation property (if $\varphi\models_{loc}\psi$ then an
interpolant for $\varphi,\psi$ exists~\cite{gabbayInterpolation}), the interpolant
existence problem reduces to checking $\varphi\models_{loc} \psi$ and
is \PSpace-complete for ML and \ExpTime-complete for ML$^{u}$. The
following tight complexity bounds for their extensions with nominals
and the $@$-operator are the main result of this section. 
\begin{theorem}

  \begin{enumerate}
    \item Let $\Lmc\in \{\text{ML}_{n},\text{ML}_{@}\}$. Then the interpolant existence problem for $\Lmc$ is \coNExpTime-complete.

    \item The interpolant existence problem for $\text{ML}_{n}^{u}$ is
      \TwoExpTime-complete.  \end{enumerate}		These results
  also hold if one considers the language with a single modal operator
  only.  \end{theorem}
\begin{proof}
  (1) Let $\cdot^{m}$ be the obvious bijection between
  $\mathcal{ALCO}$-concepts and  ML$_{n}$-formulas and denote by
  $\cdot^{d}$ its inverse. Then $\models C \sqsubseteq D$ iff
  $C^{m}\models_{loc}D^{m}$ for any $\mathcal{ALCO}$-concepts $C,D$.
  Hence the following conditions are equivalent, for all formulas
  $\varphi,\psi\in \text{ML}_{n}$:
  \begin{itemize}

    \item there exists an interpolant for $\varphi,\psi$ in ML$_n$;

    \item there exists an $\mathcal{ALCO}(\Sigma)$-interpolant for
      $\varphi^{d},\psi^{d}$, where
      $\Sigma=\text{sig}(\varphi^{d})\cap \text{sig}(\psi^{d})$.

  \end{itemize}	   
  The \coNExpTime-completeness for interpolant existence for ML$_{n}$
  now follows from Point~3 of Theorem~\ref{thm:main2}. We now come to
  ML$_{@}$. We did not consider the operator $@$ for DLs as it does
  not play a large role in description logic research.\footnote{An
  exception is the investigation of updates for description logic
  knowledge bases where the expressive power of the $@$-operator
  plays a significant role~\cite{DBLP:journals/ai/LiuLMW11}.} Note,
  however, that $\mathcal{ALCO}$ can be extended to the DL
  $\mathcal{ALCO}_{@}$ with $@$ in a straightforward way by setting
  $@_{a}C:=\forall u.(\{a\}\rightarrow C)$. The expressive power of
  $\mathcal{ALCO}_{@}$-concepts is characterized by
  \emph{$\mathcal{ALCO}_{@}(\Sigma)$-bisimulations}, where an
  $\mathcal{ALCO}(\Sigma)$-bisimulation $S$ between interpretations
  $\Imc$ and $\Jmc$ is an $\mathcal{ALCO}_{@}(\Sigma)$-bisimulation
  if $(a^{\Imc},a^{\Jmc})\in S$ for any $a\in \Sigma$.  Then one can
  prove Lemma~\ref{lem:equivalence} also for $\mathcal{ALCO}_{@}$.
  Next one can prove the characterization
  (Theorem~\ref{thm:critint}) for $\mathcal{ALCO}_{@}$ in exactly
  the same way as for $\mathcal{ALCO}$, and finally one can extend
  the \NExpTime-upper bound proof for joint consistency modulo
  $\mathcal{ALCO}(\Sigma)$-bisimulations to joint consistency modulo
  $\mathcal{ALCO}_{@}(\Sigma)$-bisimulations
  (Lemma~\ref{lem:witness}) by observing that for all nominal
  generated mosaics $(T_{1}(d),T_{2}(d))$ we now have that
  $T_{i}(d)\not=\emptyset$ for $i=1,2$. Hence
  $(a^{\Imc},a^{\Jmc})\in S$ for any $a\in \Sigma$, for the
  bisimulation $S$ constructed in the proof of
  Lemma~\ref{lem:witness}. 

  The lower bound proof for Theorem~\ref{thm:main2}, Point~3, provided
  in Section~\ref{sec:lower-without-ontology} still goes through as it
  does not use any nominal in the shared signature and so using $@$
  does not make any difference. Note, moreover, that
  it uses only a single role name $r$ which corresponds to using a single
  modal operator. 

  (2) can be proved in the same way as (1) by observing that there is
  a bijection $\cdot^{m}$ between $\mathcal{ALCO}^{u}$-concepts and
  ML$_{n}^{u}$-formulas, that $\models C \sqsubseteq D$ iff
  $C^{m}\models_{loc}D^{m}$ for any $\mathcal{ALCO}^{u}$-concepts
  $C,D$, and then applying Point~1 of Theorem~\ref{thm:main2}.
  Note that the lower bound holds
  for a single role, see Lemma~\ref{lem:lower-single-role}, which
  again translates to a single modal operator (and the universal modality).
%
%
\end{proof}

Description logics with RIs correspond to modal logics determined by Kripke models satisfying inclusions $R_{i}\subseteq R_{j}$ between accessibility relations $R_{i}$ and $R_{j}$. For any finite set $I$ of pairs $(i,j)$ let $\mathcal{M}_{I}$ denote the class of Kripke models satisfying $R_{i}\subseteq R_{j}$ for all $(i,j)\in I$. Define the consequence relation $\models_{loc}^{I}$ in the usual way by setting $\varphi\models_{loc}^{I} \psi$ if for all pointed models $\mathfrak{M},w$ with $\mathfrak{M}\in \mathcal{M}_{I}$, if $\mathfrak{M},w\models \varphi$ then $\mathfrak{M},w\models \psi$. We then obtain the following complexity result directly from Points~4 and ~2 of Theorem~\ref{thm:main2}, respectively.
\begin{theorem}
	For all finite $I$, the interpolant existence problem for $\models_{loc}^{I}$ 
	in ML is in \coNExpTime. There exists a finite $I$ such that the interpolant existence problem for $\models_{loc}^{I}$	in ML is \coNExpTime-hard.
	
	For all finite $I$, the interpolant existence problem for $\models_{loc}^{I}$ 
	in ML$^{u}$ is in \TwoExpTime. There exists a finite $I$ such that the interpolant existence problem for $\models_{loc}^{I}$	in ML$^{u}$ is \TwoExpTime-hard.
\end{theorem}
We close this section with a brief discussion of interpolant existence for the global consequence relation.
We say that $\varphi$ \emph{globally entails} $\psi$, in symbols
$\varphi \models_{glo} \psi$, if for all models
$\mathfrak{M}$ from $\mathfrak{M}\models \varphi$ it follows that
$\mathfrak{M}\models \psi$. Call a formula $\chi$ a 
\emph{global interpolant for $\varphi,\psi$} if $\text{sig}(\chi) \subseteq
\text{sig}(\varphi)\cap\text{sig}(\psi)$, $\varphi\models_{glo} \chi$
and $\chi\models_{glo} \psi$. The \emph{global interpolant existence problem for $\Lmc$} is the problem to decide for any $\varphi,\psi\in \Lmc$ whether there
exists a global interpolant for $\varphi,\psi$ in $\Lmc$.
It is straightforward to show that global interpolant existence corresponds to CI-interpolant existence in DLs in the same way as interpolant existence for the local consequence relation corresponds to ontology-free interpolant existence in DLs.
We therefore obtain \TwoExpTime-completeness of global interpolant existence for the language ML$_n^{u}$ from Theorem~\ref{thm:ciinterpolants}. We conjecture that the same result holds for global interpolant existence for ML$_{n}$ and ML$_{@}$ but leave the proofs for future work.  

\section{Conclusion}\label{sec:conclusion}
We have investigated the problem of deciding the existence of
interpolants and explicit definitions for description and modal logics
with nominals and role inclusions, and we also presented an algorithm
computing them for logics with role inclusions. There are many
challenging problems left for future work, for instance, an algorithm computing interpolants for logics with nominals and the design and implementation of practical algorithms that could be applied in supervised concept learning and referring expression generation. From a theoretical viewpoint it would be of interest to gain a better understanding of when the existence of interpolants is computationally harder than entailment, for logics that do not enjoy the CIP. Logics to consider include more expressive DLs with nominals such as those also admitting qualified number restrictions and/or transitive roles and extensions of the two-variable fragment of FO with counting and/or further constraints on relations~\cite{DBLP:journals/siglog/KieronskiPT18}. Another class of interest are decidable fragments of first-order modal logics and products of modal logics which both often do not enjoy the CIP~\cite{DBLP:journals/jsyml/Fine79,DBLP:journals/ndjfl/MarxA98}.
Here it would be of interest to consider logics such as the one-variable or monodic fragments of K and S4 for which the complexity of interpolant existence was left open~\cite{DBLP:journals/corr/abs-2303-04598}. Finally, is it possible to prove general transfer results (for example, for families of normal modal logics) stating that decidable entailment implies decidability of interpolant existence? 

\section*{Acknowledgements}
Frank Wolter was supported by EPSRC grant EP/S032207/1.
Ana Ozaki was supported by NFR grant 316022. We thank Agi Kurucz, Michael Zakharyaschev, and two referees for helpful comments.


\appendix

\section{Proofs for Section~\ref{sec:background}}
For the proofs of Theorems~\ref{thm:cipandbdp} and~\ref{thm:main3}, we require a few definitions and observations.
Given an interpretation $\Imc = (\Delta^{\Imc}, \cdot^{\Imc})$ and non-empty set $V\subseteq \Delta^{\Imc}$, we define the \emph{restriction of $\Imc$ to $V$} 
as $\Imc_{|V} = (\Delta^{\Imc_{|V}}, \cdot^{\Imc_{|V}})$, where $\Delta^{\Imc_{|V}} = \Delta^{\Imc} \cap V$,
$B^{\Imc_{|V}} = B^{\Imc} \cap V$, for every concept name $B$,
$r^{\Imc_{|V}} = r^{\Imc} \cap (V \times V)$, for every role name $r$, and
$a^{\Imc_{|V}} = a^{\Imc}$ if $a^{\Imc}\in V$, for every individual name $a$. Note that $\Imc_{|V}$ does not interpret individual names that are not interpreted in $V$.
Hence, in our constructions of restrictions, we always make sure that all relevant indviduals are interpreted in $V$. The \emph{relativization} $C_{|A}$ of a concept $C$ to a concept name $A$ describes, in any interpretation $\Imc$, the extension of $C$ in the restriction of $\Imc$ to $A^{\Imc}$. In detail, define $C_{|A}$ inductively by setting $\top_{|A}= A$, $B_{|A}=B \sqcap A$, $(\neg C)_{|A}=A\sqcap \neg C_{|A}$, $(C\sqcap D)_{|A} = C_{|A} \sqcap D_{|A}$,
$\{a\}_{|A}= \{a\} \sqcap A$, and $(\exists r.C)_{|A}= A \sqcap \exists r.C_{|A}$. Then the following observations can be shown by induction over the construction of $C$.
For any interpretation $\Imc$ with $A^{\Imc}\not=\emptyset$ and $A\not\in \text{sig}(C)$ such that all $a\in \text{sig}(C)$ are interpreted in $A^{\Imc}$, $d\in C_{|A}^{\Imc}$ iff $d\in C^{\Imc_{|A^{\Imc}}}$,
for all $d\in \Delta^{\Imc}$. Given an ontology $\Omc$, we set
$\Omc_{|A} = \{C_{|A}\sqsubseteq D_{|A}\mid C \sqsubseteq D\in \Omc\}$. Then $\Imc\models \Omc_{|A}$ iff $\Imc_{|A^{\Imc}}\models \Omc$ whenever all $a\in \text{sig}(\Omc)$ are interpreted in $A^{\Imc}$ and $A^{\Imc}\not=\emptyset$.

We next introduce a generalization of cartesian products called bisimulation products. Consider pointed models $\Imc_{1},d_{1}$
and $\Imc_{2},d_{2}$ with $\Imc_{1},d_{1}\sim_{\mathcal{ALCO},\Sigma} \Imc_{2},d_{2}$. Take a
bisimulation $S$ witnessing this. Then the \emph{bisimulation product $\Imc$ induced by $S$} is defined as follows: the
domain $\Delta^{\Imc}$ of $\Imc$ is the set of all pairs
$(e_{1},e_{2})\in S$. The concept and role names in $\Sigma$ are
interpreted as in cartesian products: $(e_{1},e_{2})\in B^{\Imc}$ iff $(e_{1},e_{2})\in S$ and $e_{i}\in B^{\Imc_{i}}$ for $i=1,2$, and $((e_{1},e_{2}),(e_{1}',e_{2}'))\in r^{\Imc}$ iff $(e_{1},e_{2}),(e_{1}',e_{2}')\in S$ and $(e_{i},e_{i}')\in r^{\Imc_{i}}$
for $i=1,2$. A nominal $a$ in $\Sigma$
is interpreted as $(a^{\Imc_{1}},a^{\Imc_{2}})$ if $a$ is in the
domain (equivalently, range) of $S$. Note that we have projection
functions $f_{i}: S \rightarrow \Delta^{\Imc_{i}}$ with $f_{i}(e_{1},e_{2})=e_{i}$, for $i=1,2$. We denote by $f_{i}(S)$ the image of $S$ under $f_{i}$ in $\Delta^{\Imc_{i}}$ and set $f_{i}^{-1}(V):=\{ (e_{1},e_{2})\in S \mid e_{i}\in V\}$, for any $V\subseteq \Delta^{\Imc_{i}}$.
The following lemma is straightforward.
\begin{lemma}\label{lem:bisimprod}
	Let $\Lmc\in \{\mathcal{ALCO},\mathcal{ALCIO}\}$. If $S$ is an $\Lmc(\Sigma)$-bisimulation, then $f_{i}$ is an $\Lmc(\Sigma)$-bisimulation 
	between $\mathcal{I}$ and $\mathcal{I}_{i}$, for $i=1,2$.
\end{lemma} 
A subset $V$ of $\Delta^{\Imc}$ is called \emph{closed in $\Imc$} if $e'\in V$ 
whenever $(e,e')\in r^{\Imc}$ and $e\in V$, where $r$ is a role name.
$V$ is called \emph{fully closed in $\Imc$} if it is closed 
in $\Imc$ and, moreover, $e\in V$ whenever $(e,e')\in r^{\Imc}$ and $e'\in V$, where $r$ is a role name. Observe that $\mathcal{ALCO}$-concepts are preserved under relativization to closed subsets in the sense that $d\in C^{\Imc}$ iff $d\in C^{\Imc_{|V}}$ for any closed subset $V$ of $\Delta^{\Imc}$, $d\in V$, and $\mathcal{ALCO}$-concept $C$.
It is easy to see that $\mathcal{ALCIO}$-concepts are not always preserved under relativization to closed subsets. They are, however, preserved under relativization to fully closed subsets. Concepts using the universal role are not always preserved under relativization to fully closed subsets.

For a pointed interpretation $\Imc,d$ denote by $\Delta^{\Imc}_{\downarrow d}$
the smallest subset of $\Delta^{\Imc}$ such that $d\in \Delta^{\Imc}_{\downarrow d}$ and $\Delta^{\Imc}_{\downarrow d}$ is closed in $\Imc$. 
The restriction of $\Imc$ to $\Delta^{\Imc}_{\downarrow d}$ is denoted $\Imc_{\downarrow d}$ and called the \emph{interpretation generated by $d$ in~$\Imc$}. It follows from our observation about closed subsets above that $e\in D^{\Imc}$ iff $e\in D^{\Imc_{\downarrow d}}$ holds for all $e\in \Delta^{\Imc}_{\downarrow d}$ and all 
$\mathcal{ALCO}$-concepts $D$.  We are in a position now to prove Theorem~\ref{thm:cipandbdp}.

\bigskip

\thmcipandbdp*
\begin{proof}
	Point~(1)
	follows from the proofs of our complexity lower bounds for explicit definition existence. More specifically, in the proofs of the lower bounds we present
	concepts that are implicitly definable, but not explicitly definable.
	
	For Point~(2), assume first that $\Lmc\in \Dln\setminus \{\mathcal{ALCO},\mathcal{ALCHO}\}$ does not admit nominals.
	Then the BDP follows essentially from
	Theorem~2.5.4 in~\cite{Ten05}, see also~\cite{TenEtAl06}. Theorem~2.5.4 is formulated in terms of modal logic. Hence observe that modal logics are syntactic variants of descriptions logics (see Section~\ref{sec:modal} for details) and that inverse roles, role inclusions, and the universal role can be introduced as first-order definable conditions on frame classes that are
	preserved under generated subframes and bisimulation products of frames.
	
	Next assume that $\Lmc\in \Dln\setminus \{\mathcal{ALCO},\mathcal{ALCHO}\}$  admits nominals and the universal role. Then the @-operator from hybrid logic can be expressed in $\Lmc$ (see again Section~\ref{sec:modal}) and so the BDP follows from Theorem~6.2.4 in \cite{Ten05}, see also~\cite{TenEtAl06}.
	
	It remains to prove that $\mathcal{ALCOI}$ and $\mathcal{ALCHOI}$ enjoy the BDP. This is done using bisimulation products.
	We also use the characterization of the existence of explicit definitions using bisimulations provided in Theorem~\ref{thm:critdef}. Consider an $\mathcal{ALCHOI}$-ontology $\Omc$,
	let $A$ be a concept name, and let $C$ be an $\mathcal{ALCHOI}$-concept. Let $\Sigma=\text{sig}(\Omc,C)\setminus \{A\}$. Assume
	$A$ is not explicitly $\mathcal{ALCHOI}(\Sigma)$-definable under $\Omc$ and $C$. By
	Theorem~\ref{thm:critdef}, we find  pointed models $\Imc_{1},d_{1}$
	and $\Imc_{2},d_{2}$ such that $\Imc_{i}$ is a model of $\Omc$ and $d_{i}\in C^{\Imc_{i}}$
	for $i=1,2$, $d_{1}\in A^{\Imc_{1}}$, $d_{2}\not\in A^{\Imc_{2}}$, and
	$\Imc_{1},d_{1}\sim_{\mathcal{ALCOI},\Sigma} \Imc_{2},d_{2}$. Take a
	bisimulation $S$ witnessing this.  Let $\Imc$ be defined as the bisimulation product induced by $S$. By Lemma~\ref{lem:bisimprod},
	the projection functions $f_{i}: S \rightarrow \Delta^{\Imc_{i}}$ are
	$\mathcal{ALCOI}(\Sigma)$-bisimulations between $\Imc$ and
	$\Imc_{i}$. Moreover, as we have inverse roles, the image of $S$
	under $f_{i}$ is fully closed in $\Imc_{i}$. We
	now define interpretations $\Jmc_{1}$ and $\Jmc_{2}$ as the
	interpretation $\Imc$ except that $A^{\Jmc_{i}}=
	f_{i}^{-1}(A^{\Imc_{i}})$, for $i=1,2$. Then the $f_{i}$ are
	$\mathcal{ALCOI}(\Sigma\cup \{A\})$-bisimulations between
	$\Jmc_{i}$ and $\Imc_{i}$, for $i=1,2$.  Note, however, that the
	$\Jmc_i$ do not necessarily interpret all nominals, as for a
	nominal $\{a\}$, the element $a^{\Imc_1}$ might be in a different
	connected component than $d_1$ (equivalently: $a^{\Imc_2}$ is in a
	different connected component than $d_2$). To address this, let $\Imc'$ be
	the restriction of $\Imc_{1}$ to $\Delta^{\Imc_{1}}\setminus
	f_{i}(S)$.  Then, obtain $\Jmc_{i}'$ by taking the disjoint
	union of $\Jmc_{i}$ and $\Imc'$, $i=1,2$. It follows from the construction, the 
	preservation properties of $\mathcal{ALCOI}$-bisimulations, and the fact that 
	the universal role is not used in $\Omc$ nor $C$ that the following conditions hold:
	\begin{itemize}
		\item[(a)] $\Jmc_{1}'$ and $\Jmc_{2}'$ are models of $\Omc$, $(d_{1},d_{2})\in C^{\Jmc_{1}'}$, and $(d_{1},d_{2})\in C^{\Jmc_{2}'}$;
		\item[(b)] the $\Sigma$-reducts of $\Jmc_{1}'$ and $\Jmc_{2}'$ coincide;
		\item[(c)] $(d_{1},d_{2})\in A^{\Jmc_{1}'}$ but $(d_{1},d_{2})\not\in A^{\Jmc_{2}'}$. 
	\end{itemize}
	But then $A$ is not implicitly $\Sigma$-definable under $\Omc$ and $C$, as required.
	
	Finally, to show Point~(2), we have to argue that $\mathcal{ALCO}$ and $\mathcal{ALCHO}$ do not enjoy the BDP. A counterexample to the BDP is given in~\cite{TenEtAl06}. It
	illustrates nicely the way in which the addition of inverse roles or the universal role to $\mathcal{ALCHO}$ restores the BDP. As this is relevant in the proof of Point~(3) and also in the analysis of non-projective definitions later in Section~\ref{sec:nonprojdefex}, we give the example here.
	Let 
	\[
	\Omc = \{A\sqsubseteq \{a\}, \{b\} \sqcap B
	\sqsubseteq \exists r.(\{a\}\sqcap A), \{b\} \sqcap \neg B
	\sqsubseteq \exists r.(\{a\} \sqcap \neg A)\}
	\]
	and set $\Sigma=\{a,B,b,r\}=\text{sig}(\Omc)\setminus\{A\}$. Then $\Omc \models A\equiv \{a\} \sqcap \exists r^{-}.(B\sqcap \{b\})$ and so $A$ is explicitly $\mathcal{ALCOI}(\Sigma)$-definable under $\Omc$. Also $\Omc \models A\equiv \{a\} \sqcap \exists u.(B\sqcap \{b\})$, and so $A$ is also explicitly $\mathcal{ALCO}^{u}(\Sigma)$-definable under $\Omc$.
	Note, however, that $A$ is not explicitly $\mathcal{ALCO}(\Sigma)$-definable under $\Omc$.
	Indeed, the models $\Imc, \Jmc$ of $\Omc$ depicted in Figure~\ref{fig:defincompl} (where $a^{\Imc} = a$ in $\Imc$ and $a^{\Jmc} = a$ in $\Jmc$, and similarly for $b$) show that $\Imc, a^{\Imc} \sim_{\mathcal{ALCO},\Sigma} \Jmc, a^{\Jmc}$, with $a^{\Imc} \in A^{\Imc}$ and $a^{\Jmc} \not \in A^{\Jmc}$. By Lemma~\ref{lem:equivalence}, we have $\Imc, a^{\Imc} \equiv_{\mathcal{ALCO},\Sigma} \Jmc, a^{\Jmc}$, and hence there cannot be any $\mathcal{ALCO}(\Sigma)$-concept $C$ such that $\Omc \models A \equiv C$.
	
	\begin{figure}[th]
		\centering
		\begin{tikzpicture}
			\tikzset{
				dot/.style = {draw, fill=black, circle, inner sep=0pt, outer sep=1pt, minimum size=2pt}
			}

			\draw (-2,0.5) node[label=$\Imc$] (I) {};
			
			\draw (4,0.5) node[label=$\Jmc$] (J) {};

			\draw (-0.5,-0.5) node[dot, label=west:$b\mathit{,}\, B$] (b1) {};
			
			\draw (-0.5,0.5) node[dot, label=west:$a\mathit{,}\,A$] (d1) {};

			\draw (2.5,-0.5) node[dot, label=east:$b\mathit{,}\,\lnot B$] (b2) {};

			\draw (2.5,0.5) node[dot, label=east:$a\mathit{,}\,\lnot A$] (d2) {};

			
			\draw[->, >=stealth] (b1) -- (d1) node[midway, left] {$r$};
			
			
			\draw[->, >=stealth] (b2) -- (d2) node[midway, right] {$r$};

			\path[black, dashed, bend left] (d1) edge (d2);
			\draw (1,0.15) node[label=$\sim_{\mathcal{ALCO}, \Sigma}$] () {};
			

			%
			%
			
		\end{tikzpicture}
		
		\caption{Models $\Imc, \Jmc$ of $\Omc$ used to show that $A$ is not explicitly $\mathcal{ALCO}(\Sigma)$-definable under $\Omc$.}
		\label{fig:defincompl}
		
	\end{figure}
	
	It follows that non explicit $\mathcal{ALCO}(\Sigma)$-definability of $A$ is caused by the fact that one cannot reach $b$ from $a$ along a path following the role name $r$.
	One can reach $b$, however, using the universal role or inverse roles.

	For Point~(3), it remains to show that $\mathcal{ALCHO}$ enjoys the BDP for ontologies containing RIs only. The BDP for $\mathcal{ALCO}$ with empty ontology follows immediately. 
	The proof is similar to the proof of Point~(2) above for $\mathcal{ALCHIO}$ with fully closed subsets replaced by point generated interpretations.
	
	Assume $\Omc$ contains RIs only,
	$C$ is an $\mathcal{ALCO}$-concept, and $A$ is a concept name. Let $\Sigma= \text{sig}(\Omc,C) \setminus \{A\}$. 
	Assume $A$ is not
	explicitly {$\mathcal{ALCO}(\Sigma)$-}definable
	under $\Omc$ and $C$. By
	Theorem~\ref{thm:critdef}, we find pointed models $\Imc_{1},d_{1}$
	and $\Imc_{2},d_{2}$ such that $\Imc_{i}$ is a model of $\Omc$ and $d_{i}\in C^{\Imc_{i}}$
	for $i=1,2$, $d_{1}\in A^{\Imc_{1}}$, $d_{2}\not\in A^{\Imc_{2}}$, and
	$\Imc_{1},d_{1}\sim_{\mathcal{ALCHO},\Sigma} \Imc_{2},d_{2}$. Take a
	bisimulation $S$ witnessing this. As we do not have the universal role nor inverse roles,
	we may assume that $S$ is a
	$\mathcal{ALCO}(\Sigma)$-bisimulation between the set $\Delta^{\Imc_{1}}_{\downarrow
		d_{1}}$ generated by $d_{1}$ in $\Imc_{1}$ and the
	set $\Delta^{\Imc_{2}}_{\downarrow d_{2}}$ generated by $d_{2}$
	in $\Imc_{2}$. Let $\Imc$ be defined as the bisimulation product induced by $S$. By Lemma~\ref{lem:bisimprod},
	the projection functions $f_{i}: S \rightarrow \Delta^{\Imc_{i}}$ are
	$\mathcal{ALCO}(\Sigma)$-bisimulations between $\Imc$ and
	$\Imc_{i}$. Let $\Jmc_{1}$ and $\Jmc_{2}$ be again defined as the
	interpretation $\Imc$ except that $A^{\Jmc_{i}}=
	f_{i}^{-1}(A^{\Imc_{i}})$, for $i=1,2$. Then the $f_{i}$ are
	$\mathcal{ALCO}(\Sigma\cup \{A\})$-bisimulations between
	$\Jmc_{i}$ and $\Imc_{i}$, for $i=1,2$. As in the proof above, $\Imc$ does not necessarily
	interpret all nominals in $C$. As $\Omc$ is an ontology using RIs only, this problem can be addressed in a straightforward manner. Define
	interpretations $\Jmc_{i}'$ as the disjoint union of $\Jmc_{i}$ 
	and the singleton interpretation $\Imc'$ with domain $\{d\}$ such that $a^{\Jmc_{i}}=d$ for all $a$ not interpreted in $\Imc$ and $B^{\Imc'}=r^{\Imc'}=\emptyset$ for all concept and role names $B$ and $r$. Then $\Jmc_{1}'$ and $\Jmc_{2}'$ satisfy the conditions (a) to (c) above and show that $A$ is not implicitly $\Sigma$-definable under $\Omc$ and $C$.
\end{proof}

\section{Proof of Theorem~\ref{thm:main3}}
\label{sec:nonprojdefex}
We show Theorem~\ref{thm:main3} which states that for $\Lmc\in \{\mathcal{ALCO},\mathcal{ALCHO}\}$ non-projective \Lmc-definition existence of concept names 
is \ExpTime-complete. 
The lower bound follows from the corresponding lower bound for subsumption and any upper bound for $\mathcal{ALCHO}$ trivially implies the same upper bound for $\mathcal{ALCO}$. We therefore focus on the upper bound for $\mathcal{ALCHO}$
and show the following criterion for (the complement of) non-projective explicit definability of concept names.
\begin{restatable}{lemma}{critone}
	\label{lem:restr}
	Let $\Omc$ be an $\mathcal{ALCHO}$-ontology,
	$C$ an $\mathcal{ALCO}$-concept, and $A$ a concept name. Let $\Sigma= \text{sig}(\Omc,C) \setminus \{A\}$. Then $A$ 
	is not explicitly $\mathcal{ALCHO}(\Sigma)$-definable under $\Omc$ and $C$ iff there are pointed interpretations
	$\Imc_{1},d$ and $\Imc_{2},d$ such that 
	\begin{itemize} 
		\item
		$\Imc_{i}$ is a model of $\Omc$ and $d\in C^{\Imc}$, for $i=1,2$; 
		\item the
		$\Sigma$-reducts of ${\Imc_{1}}_{\downarrow d}$ and  ${\Imc_{2}}_{\downarrow d}$ coincide; 
		\item
		$d\in A^{\Imc_{1}}$ and $d\not\in A^{\Imc_{2}}$.
	\end{itemize} 
\end{restatable} 
\begin{proof} \ Clearly, if the conditions of Lemma~\ref{lem:restr}
	hold, then $A$ is not explicitly $\mathcal{ALCHO}(\Sigma)$-definable under
	$\Omc$ and $C$, by Theorem~\ref{thm:critdef}. Conversely, assume  $A$ is not
	explicitly $\mathcal{ALCHO}(\Sigma)$-definable under $\Omc$ an $C$. By
	Theorem~\ref{thm:critdef}, we find pointed models $\Imc_{1},d_{1}$
	and $\Imc_{2},d_{2}$ such that $\Imc_{i}$ is a model of $\Omc$ and $d_{i}\in C^{\Imc_{i}}$
	for
	$i=1,2$, $d_{1}\in A^{\Imc_{1}}$, $d_{2}\not\in A^{\Imc_{2}}$, and
	$\Imc_{1},d_{1}\sim_{\mathcal{ALCHO},\Sigma} \Imc_{2},d_{2}$. Take a
	bisimulation $S$ witnessing this. As we do not admit the universal role nor inverse roles,
	we may assume that $S$ is a
	bisimulation between the set $\Delta^{\Imc_{1}}_{\downarrow
		d_{1}}$ generated by $d_{1}$ in $\Imc_{1}$ and the
	set $\Delta^{\Imc_{2}}_{\downarrow d_{2}}$ generated by $d_{2}$
	in $\Imc_{2}$. Let $\Imc$ be the bisimulation product induced by $S$. By Lemma~\ref{lem:bisimprod},
	the projection functions $f_{i}: S \rightarrow \Delta^{\Imc_{i}}$ are
	$\mathcal{ALCO}(\Sigma)$-bisimulations between $\Imc$ and
	$\Imc_{i}$. However, as in the proof of
	Theorem~\ref{thm:cipandbdp}, $\Imc$ does not necessarily
	interpret all nominals. We address this in the following.
	Let $\Jmc_{i}$ be the restriction of $\Imc_{i}$ to
	$\Delta^{\Imc_{i}}\setminus \Delta^{\Imc_{i}}_{\downarrow d_{i}}$, for $i=1,2$.  We now define
	interpretations $\Jmc_{1}'$ and $\Jmc_{2}'$ as follows:  $\Jmc_{i}'$
	is the disjoint union of $\Imc$ and $\Jmc_{i}$ extended by
	\begin{itemize} \item adding to the interpretation of $A$ all
		elements in
		$f_{i}^{-1}(A^{\Imc_{i}})$; 
		\item adding $(e,(e_{1},e_{2}))$ with $(e_{1},e_{2})\in S$ to
		the interpretation of a role name $r$ if $e\in \Delta^{\Imc_{i}}\setminus
		\Delta^{\Imc_{i}}_{\downarrow d_{i}}$, $e_{i}\in \Delta^{\Imc_{i}}_{\downarrow d_{i}}$, 
		and $(e,e_{i})\in r^{\Imc_{i}}$.
	\end{itemize} 
	Using the condition that $\Omc$ and $C$ do not use the universal role nor inverse roles,
	one can show that the interpretations $\Imc_{i}:=\Jmc_{i}'$ and $d:=(d_{1},d_{2})$ satisfy the conditions of Lemma~\ref{lem:restr}. 
\end{proof}
We now show that the conditions of Lemma~\ref{lem:restr} can be checked in \ExpTime by providing a polynomial time reduction to checking non $\mathcal{ALCHO}$-subsumption. 
Let $\Omc$ be an $\mathcal{ALCHO}$-ontology, $C$ an $\mathcal{ALCO}$-concept, $A$ a concept
name, and $\Sigma=\text{sig}(\Omc,C)\setminus \{A\}$. 
Take 
\begin{itemize}
	\item a concept name $D$ for the domain $\Delta^{{\Imc_{1}}_{\downarrow d}}$ of the interpretation ${\Imc_{1}}_{\downarrow d}$ generated by $d$;
	\item concept names $D_{i}$ for the domain $\Delta^{\Imc_{i}}$ of $\Imc_{i}$, $i=1,2$;
	\item a copy $A'$ of $A$;
	\item and copies $a'$ of the individual names $a$ in $\Sigma$. 
\end{itemize}
We let $\Omc^{c}$ denote the set of CIs in $\Omc$ and $\Omc^{r}$ denote the set of RIs in $\Omc$. Let $\Omc^{cc}$ be the ontology obtained from $\Omc^{c}$ by replacing $A$ by $A'$ and all
nominals $a$ in $\Omc^{c}$ by $a'$. Let $\Omc^{c}_{|D_{1}}$ be the
relativization of $\Omc^{c}$ to $D_{1}$ and let $\Omc^{cc}_{|D_{2}}$ be the
relativization of $\Omc^{cc}$ to $D_{2}$. Now consider the following ontology encoding Points~1 and 2 of Lemma~\ref{lem:restr}:
\begin{eqnarray*} \Omc' & = & \Omc^{r} \cup \Omc^{c}_{|D_{1}} \cup \Omc^{cc}_{|D_{2}} \cup 
	\{D \sqsubseteq \forall r.D \mid r\in \Sigma\} \cup 
	\{D\sqsubseteq D_{1}, D \sqsubseteq D_{2}\} \cup \\ &   & \{\{a\}
	\sqsubseteq D_{1}\mid a\in \Sigma\}\cup \{ \{a'\}
	\sqsubseteq D_{2} \mid a\in \Sigma\} \cup \\ &
	& \{ D \sqcap
	\{a\} \sqsubseteq \{a'\} \mid a\in \Sigma\} \cup \{ D \sqcap
	\{a'\} \sqsubseteq \{a\} \mid a\in \Sigma\} 
\end{eqnarray*}      
Observe that we have to treat the individual names in $\Sigma$ differently from the concept names in $\Sigma$ as we have to ensure that they are interpreted in $D_{1}$ and $D_{2}$ respectively. As their interpretation might be different outside $D$, we have to introduce copies of the individual names and then state that those that are interpreted in $D$ are actually interpreted in the same way. It is now straightforward to show that the conditions of Lemma~\ref{lem:restr} hold iff $\Omc' \not\models A \sqcap D\sqsubseteq A'$.
%

\bibliographystyle{plain}


\begin{thebibliography}{10}
	
	\bibitem{DBLP:conf/csl/ArecesBM99}
	Carlos Areces, Patrick Blackburn, and Maarten Marx.
	\newblock A road-map on complexity for hybrid logics.
	\newblock In {\em Proceedings of the 8th Annual Conference of the European
		Association for Computer Science Logic, {CSL} 1999}, pages 307--321.
	Springer, 1999.
	
	\bibitem{DBLP:journals/jsyml/ArecesBM01}
	Carlos Areces, Patrick Blackburn, and Maarten Marx.
	\newblock Hybrid logics: Characterization, interpolation and complexity.
	\newblock {\em J. Symb. Log.}, 66(3):977--1010, 2001.
	
	\bibitem{DBLP:conf/inlg/ArecesKS08}
	Carlos Areces, Alexander Koller, and Kristina Striegnitz.
	\newblock Referring expressions as formulas of description logic.
	\newblock In {\em Proceedings of the 5th International Natural Language
		Generation Conference, {INLG} 2008}. The Association for Computer
	Linguistics, 2008.
	
	\bibitem{areces200714}
	Carlos Areces and Balder ten Cate.
	\newblock Hybrid logics.
	\newblock In {\em Handbook of Modal Logic}, volume~3, pages 821--868. Elsevier,
	2007.
	
	\bibitem{DBLP:conf/dlog/ArtaleJMOW20}
	Alessandro Artale, Jean~Christoph Jung, Andrea Mazzullo, Ana Ozaki, and Frank
	Wolter.
	\newblock Living without beth and craig: Explicit definitions and interpolants
	in description logics with nominals (extended abstract).
	\newblock In {\em Proceedings of the 33rd International Workshop on Description
		Logics {(DL} 2020) co-located with the 17th International Conference on
		Principles of Knowledge Representation and Reasoning {(KR} 2020), Online
		Event [Rhodes, Greece], September 12th to 14th, 2020}, 2020.
	
	\bibitem{DBLP:conf/aaai/ArtaleJMOW21}
	Alessandro Artale, Jean~Christoph Jung, Andrea Mazzullo, Ana Ozaki, and Frank
	Wolter.
	\newblock Living without beth and craig: Definitions and interpolants in
	description logics with nominals and role inclusions.
	\newblock In {\em Proceedings of the 35th {AAAI} Conference on Artificial
		Intelligence, {AAAI} 2021}, pages 6193--6201. {AAAI} Press, 2021.
	
	\bibitem{ArtEtAl21}
	Alessandro Artale, Andrea Mazzullo, Ana Ozaki, and Frank Wolter.
	\newblock On free description logics with definite descriptions.
	\newblock In {\em Proceedings of the 18th International Conference on
		Principles of Knowledge Representation and Reasoning, {KR} 2021}, pages
	63--73, 2021.
	
	\bibitem{handbookDL}
	{\relax Franz}~Baader, {\relax Diego}~Calvanese, {\relax Deborah L.
		McGuinness}, {\relax Daniele Nardi}, and {\relax Peter F.}~Patel{-}Schneider,
	editors.
	\newblock {\em The Description Logic Handbook: Theory, Implementation, and
		Applications}.
	\newblock Cambridge University Press, 2003.
	
	\bibitem{DL-Textbook}
	Franz Baader, Ian Horrocks, Carsten Lutz, and Ulrike Sattler.
	\newblock {\em An Introduction to Description Logic}.
	\newblock Cambridge University Press, 2017.
	
	\bibitem{DBLP:conf/mfcs/BaranyBC13}
	Vince B{\'{a}}r{\'{a}}ny, Michael Benedikt, and Balder ten Cate.
	\newblock Rewriting guarded negation queries.
	\newblock In {\em Proceedings of the 38th International Symposium on
		Mathematical Foundations of Computer Science, {MFCS} 2013}, pages 98--110.
	Springer, 2013.
	
	\bibitem{DBLP:journals/jsyml/BaranyBC18}
	Vince B{\'{a}}r{\'{a}}ny, Michael Benedikt, and Balder ten Cate.
	\newblock Some model theory of guarded negation.
	\newblock {\em J. Symb. Log.}, 83(4):1307--1344, 2018.
	
	\bibitem{DBLP:journals/lmcs/BenediktBB19}
	Michael Benedikt, Pierre Bourhis, and Michael {Vanden Boom}.
	\newblock Definability and interpolation within decidable fixpoint logics.
	\newblock {\em Log. Methods Comput. Sci.}, 15(3), 2019.
	
	\bibitem{DBLP:series/synthesis/2016Benedikt}
	Michael Benedikt, Julien Leblay, Balder ten Cate, and Efthymia Tsamoura.
	\newblock {\em Generating Plans from Proofs: The Interpolation-based Approach
		to Query Reformulation}.
	\newblock Synthesis Lectures on Data Management. Morgan {\&} Claypool
	Publishers, 2016.
	
	\bibitem{DBLP:conf/lics/BenediktCB15}
	Michael Benedikt, Balder ten Cate, and Michael {Vanden Boom}.
	\newblock Interpolation with decidable fixpoint logics.
	\newblock In {\em Proceedings of the 30th Annual {ACM/IEEE} Symposium on Logic
		in Computer Science, {LICS} 2015}, pages 378--389. {IEEE} Computer Society,
	2015.
	
	\bibitem{DBLP:journals/tocl/BenediktCB16}
	Michael Benedikt, Balder ten Cate, and Michael {Vanden Boom}.
	\newblock Effective interpolation and preservation in guarded logics.
	\newblock {\em {ACM} Trans. Comput. Log.}, 17(2):8:1--8:46, 2016.
	
	\bibitem{Bet56}
	Evert~Willem Beth.
	\newblock {On Padoa's Method in the Theory of Definition}.
	\newblock {\em J. Symb. Log.}, 21(2):194--195, 1956.
	
	\bibitem{DBLP:conf/kr/BorgidaTW16}
	Alexander Borgida, David Toman, and Grant~E. Weddell.
	\newblock On referring expressions in query answering over first order
	knowledge bases.
	\newblock In {\em Proceedings of the 15th International Conference on
		Principles of Knowledge Representation and Reasoning, {KR} 2016}, pages
	319--328. {AAAI} Press, 2016.
	
	\bibitem{BorEtAl17}
	Alexander Borgida, David Toman, and Grant~E. Weddell.
	\newblock Concerning referring expressions in query answers.
	\newblock In {\em Proceedings of the 26th International Joint Conference on
		Artificial Intelligence, {IJCAI} 2017}, pages 4791--4795. ijcai.org, 2017.
	
	\bibitem{DBLP:journals/ws/BuhmannLW16}
	Lorenz B{\"{u}}hmann, Jens Lehmann, and Patrick Westphal.
	\newblock {DL}-learner - {A} framework for inductive learning on the semantic
	web.
	\newblock {\em J. Web Sem.}, 39:15--24, 2016.
	
	\bibitem{DBLP:conf/www/BuhmannLWB18}
	Lorenz B{\"{u}}hmann, Jens Lehmann, Patrick Westphal, and Simon Bin.
	\newblock {DL-Learner Structured Machine Learning on Semantic Web Data}.
	\newblock In {\em Companion Proceedings of The Web Conference 2018, {WWW}
		2018}, pages 467--471. {ACM}, 2018.
	
	\bibitem{CalEtAl20}
	Diego Calvanese, Silvio Ghilardi, Alessandro Gianola, Marco Montali, and Andrey
	Rivkin.
	\newblock Combined covers and beth definability.
	\newblock In {\em Proceedings of the 10th International Joint Conference on
		Automated Reasoning, Part {I}, {IJCAR} 2020}, pages 181--200. Springer, 2020.
	
	\bibitem{DBLP:journals/jar/CalvaneseGGMR22}
	Diego Calvanese, Silvio Ghilardi, Alessandro Gianola, Marco Montali, and Andrey
	Rivkin.
	\newblock Combination of uniform interpolants via beth definability.
	\newblock {\em J. Autom. Reason.}, 66(3):409--435, 2022.
	
	\bibitem{chandraAlternation1981}
	Ashok~K. Chandra, Dexter~C. Kozen, and Larry~J. Stockmeyer.
	\newblock Alternation.
	\newblock {\em J. ACM}, 28:114--133, 1981.
	
	\bibitem{modeltheory}
	C.C. Chang and H.~Jerome Keisler.
	\newblock {\em Model Theory}.
	\newblock Elsevier, 1998.
	
	\bibitem{DBLP:conf/cade/CimattiGS09}
	Alessandro Cimatti, Alberto Griggio, and Roberto Sebastiani.
	\newblock Interpolant generation for {UTVPI}.
	\newblock In {\em Proceedings of the 22nd International Conference on Automated
		Deduction, {CADE} 2022}, pages 167--182. Springer, 2009.
	
	\bibitem{comer1969}
	S.~D. Comer.
	\newblock Classes without the amalgamation property.
	\newblock {\em Pacific J. Math.}, 28(2):309--318, 1969.
	
	\bibitem{craig_1957}
	William Craig.
	\newblock Three uses of the herbrand-gentzen theorem in relating model theory
	and proof theory.
	\newblock {\em J. Symb. Log.}, 22(3):269–285, 1957.
	
	\bibitem{DBLP:conf/aiml/DAgostinoH96}
	Giovanna D'Agostino and Marco Hollenberg.
	\newblock Uniform interpolation, automata and the modal {\(\mu\)}-calculus.
	\newblock In {\em Proceedings of the 1st Workshop on Advances in Modal Logic,
		AiML-1}, pages 73--84. {CSLI} Publications, 1996.
	
	\bibitem{diaconescu1993logical}
	Razvan Diaconescu.
	\newblock Logical support for modularisation.
	\newblock {\em Logical environments}, 83:130, 1993.
	
	\bibitem{DBLP:conf/ilp/FanizzidE08}
	Nicola Fanizzi, Claudia d'Amato, and Floriana Esposito.
	\newblock {DL-FOIL} concept learning in description logics.
	\newblock In {\em Proceedings of the 18th International Conference on Inductive
		Logic Programming, {ILP} 2008}, pages 107--121. Springer, 2008.
	
	\bibitem{DBLP:conf/ekaw/Fanizzi0dE18}
	Nicola Fanizzi, Giuseppe Rizzo, Claudia d'Amato, and Floriana Esposito.
	\newblock Dlfoil: Class expression learning revisited.
	\newblock In {\em Proceedings of the 21st International Conference on Knowledge
		Engineering and Knowledge Management, {EKAW} 2018}, pages 98--113. Springer,
	2018.
	
	\bibitem{DBLP:journals/jsyml/Fine79}
	Kit Fine.
	\newblock Failures of the interpolation lemma in quantified modal logic.
	\newblock {\em J. Symb. Log.}, 44(2):201--206, 1979.
	
	\bibitem{DBLP:journals/corr/abs-2202-07186}
	Marie Fortin, Boris Konev, and Frank Wolter.
	\newblock Interpolants and explicit definitions in extensions of the
	description logic $\mathcal{EL}$.
	\newblock In {\em Proceedings of the 19th International Conference on
		Principles of Knowledge Representation and Reasoning, {KR} 2022}, 2022.
	
	\bibitem{FraKer19}
	Enrico Franconi and Volha Kerhet.
	\newblock Effective query answering with ontologies and dboxes.
	\newblock In {\em Description Logic, Theory Combination, and All That - Essays
		Dedicated to Franz Baader on the Occasion of His 60th Birthday}, pages
	301--328. Springer, 2019.
	
	\bibitem{FraEtAl13}
	Enrico Franconi, Volha Kerhet, and Nhung Ngo.
	\newblock Exact query reformulation over databases with first-order and
	description logics ontologies.
	\newblock {\em J. Artif. Intell. Res.}, 48:885--922, 2013.
	
	\bibitem{DBLP:conf/ijcai/FunkJLPW19}
	Maurice Funk, Jean~Christoph Jung, Carsten Lutz, Hadrien Pulcini, and Frank
	Wolter.
	\newblock Learning description logic concepts: When can positive and negative
	examples be separated?
	\newblock In {\em Proceedings of the 28h International Joint Conference on
		Artificial Intelligence, {IJCAI} 2019}, pages 1682--1688, 2019.
	
	\bibitem{gabbayInterpolation}
	Dov~M. Gabbay.
	\newblock Craig's interpolation theorem for modal logics.
	\newblock In {\em Conference in Mathematical Logic - London 1970}, pages
	111--127. Springer, 1972.
	
	\bibitem{gabbay2003many}
	Dov~M Gabbay, Agi Kurucz, Frank Wolter, and Michael Zakharyaschev.
	\newblock Many-dimensional modal logics: theory and applications.
	\newblock 2003.
	
	\bibitem{DBLP:conf/cade/GoelKT09}
	Amit Goel, Sava Krstic, and Cesare Tinelli.
	\newblock Ground interpolation for combined theories.
	\newblock In {\em Proceedings of the 22nd International Conference on Automated
		Deduction, {CADE} 2022}, pages 183--198. Springer, 2009.
	
	\bibitem{goranko20075}
	Valentin Goranko and Martin Otto.
	\newblock Model theory of modal logic.
	\newblock In {\em Handbook of Modal Logic}, pages 249--329. Elsevier, 2007.
	
	\bibitem{henkell1}
	K.~Henkell.
	\newblock Pointlike sets: the finest aperiodic cover of a finite semigroup.
	\newblock {\em J. Pure Appl. Algebra}, 55(1-2):85--126, 1988.
	
	\bibitem{henkell2}
	K.~Henkell, J.~Rhodes, , and B.~Steinberg.
	\newblock Aperiodic pointlikes and beyond.
	\newblock {\em Internat. J. Algebra Comput.}, 20(2):287--305, 2010.
	
	\bibitem{DBLP:journals/sLogica/HooglandM02}
	Eva Hoogland and Maarten Marx.
	\newblock Interpolation and definability in guarded fragments.
	\newblock {\em Studia Logica}, 70(3):373--409, 2002.
	
	\bibitem{DBLP:conf/lpar/HooglandMO99}
	Eva Hoogland, Maarten Marx, and Martin Otto.
	\newblock Beth definability for the guarded fragment.
	\newblock In {\em Proceedings of the 6th International Conference on Logic
		Programming and Automated Reasoning, {LPAR 1999}}, pages 273--285. Springer,
	1999.
	
	\bibitem{DBLP:journals/apin/IannonePF07}
	Luigi Iannone, Ignazio Palmisano, and Nicola Fanizzi.
	\newblock An algorithm based on counterfactuals for concept learning in the
	semantic web.
	\newblock {\em Appl. Intell.}, 26(2):139--159, 2007.
	
	\bibitem{iemhoff2019uniform}
	Rosalie Iemhoff.
	\newblock Uniform interpolation and the existence of sequent calculi.
	\newblock {\em Annals of Pure and Applied Logic}, 170(11):102711, 2019.
	
	\bibitem{DBLP:conf/kr/Jimenez-RuizPST16}
	Ernesto Jim{\'{e}}nez{-}Ruiz, Terry~R. Payne, Alessandro Solimando, and
	Valentina A.~M. Tamma.
	\newblock Limiting logical violations in ontology alignnment through
	negotiation.
	\newblock In {\em Proceedings of the 15th International Conference on
		Principles of Knowledge Representation and Reasoning, {KR} 2016}, pages
	217--226, 2016.
	
	\bibitem{kr2020}
	Jean~Christoph Jung, Carsten Lutz, Hadrien Pulcini, and Frank Wolter.
	\newblock Logical separability of incomplete data under ontologies.
	\newblock In {\em Proceedings of the 17th International Conference on
		Principles of Knowledge Representation and Reasoning, {KR} 2020}, pages
	517--528, 2020.
	
	\bibitem{kr2021}
	Jean~Christoph Jung, Carsten Lutz, Hadrien Pulcini, and Frank Wolter.
	\newblock Separating data examples by description logic concepts with
	restricted signatures.
	\newblock In {\em Proceedings of the 18th International Conference on
		Principles of Knowledge Representation and Reasoning, {KR} 2021}, pages
	390--399, 2021.
	
	\bibitem{DBLP:journals/ai/JungLPW22}
	Jean~Christoph Jung, Carsten Lutz, Hadrien Pulcini, and Frank Wolter.
	\newblock Logical separability of labeled data examples under ontologies.
	\newblock {\em Artif. Intell.}, 313:103785, 2022.
	
	\bibitem{ourDL22}
	Jean~Christoph Jung, Andrea Mazzullo, and Frank Wolter.
	\newblock More on interpolants and explicit definitions for description logics
	with nominals and/or role inclusions.
	\newblock In {\em Proceedings of the 35th International Workshop on Description
		Logics, {DL 2022}}, 2022.
	
	\bibitem{DBLP:conf/lics/JungW21}
	Jean~Christoph Jung and Frank Wolter.
	\newblock Living without beth and craig: Definitions and interpolants in the
	guarded and two-variable fragments.
	\newblock In {\em Proceedings of the 36th Annual {ACM/IEEE} Symposium on Logic
		in Computer Science, {LICS} 2021}, pages 1--14. {IEEE}, 2021.
	
	\bibitem{DBLP:journals/siglog/KieronskiPT18}
	Emanuel Kieronski, Ian Pratt{-}Hartmann, and Lidia Tendera.
	\newblock Two-variable logics with counting and semantic constraints.
	\newblock {\em {ACM} {SIGLOG} News}, 5(3):22--43, 2018.
	
	\bibitem{DBLP:conf/kr/KonevLPW10}
	Boris Konev, Carsten Lutz, Denis~K. Ponomaryov, and Frank Wolter.
	\newblock Decomposing description logic ontologies.
	\newblock In {\em Proceedings of the 12th International Conference on
		Principles of Knowledge Representation and Reasoning, {KR} 2010}. {AAAI}
	Press, 2010.
	
	\bibitem{DBLP:series/lncs/KonevLWW09}
	Boris Konev, Carsten Lutz, Dirk Walther, and Frank Wolter.
	\newblock Formal properties of modularisation.
	\newblock In {\em Modular Ontologies: Concepts, Theories and Techniques for
		Knowledge Modularization}, pages 25--66. Springer, 2009.
	
	\bibitem{DBLP:conf/ijcai/KonevWW09}
	Boris Konev, Dirk Walther, and Frank Wolter.
	\newblock Forgetting and uniform interpolation in large-scale description logic
	terminologies.
	\newblock In {\em Proceedings of the 21st International Joint Conference on
		Artificial Intelligence, {IJCAI} 2009}, pages 830--835, 2009.
	
	\bibitem{DBLP:conf/cade/KoopmannS14}
	Patrick Koopmann and Renate~A. Schmidt.
	\newblock Count and forget: Uniform interpolation of
	$\mathcal{SHQ}$-ontologies.
	\newblock In {\em Proceedings of the 7th International Joint Conference on
		Automated Reasoning, {IJCAR} 2014}, pages 434--448. Springer, 2014.
	
	\bibitem{DBLP:conf/aaai/KoopmannS15}
	Patrick Koopmann and Renate~A. Schmidt.
	\newblock Uniform interpolation and forgetting for $\mathcal{ALC}$ ontologies
	with aboxes.
	\newblock In {\em Proceedings of the 29th {AAAI} Conference on Artificial
		Intelligence, {AAAI} 2015}, pages 175--181. {AAAI} Press, 2015.
	
	\bibitem{kowalski2019uniform}
	Tomasz Kowalski and George Metcalfe.
	\newblock Uniform interpolation and coherence.
	\newblock {\em Annals of Pure and Applied Logic}, 170(7):825--841, 2019.
	
	\bibitem{DBLP:journals/coling/KrahmerD12}
	Emiel Krahmer and Kees van Deemter.
	\newblock Computational generation of referring expressions: {A} survey.
	\newblock {\em Computational Linguistics}, 38(1):173--218, 2012.
	
	\bibitem{DBLP:journals/corr/abs-2303-04598}
	Agi Kurucz, Frank Wolter, and Michael Zakharyaschev.
	\newblock Definitions and (uniform) interpolants in first-order modal logic.
	\newblock {\em CoRR}, abs/2303.04598, 2023.
	
	\bibitem{DBLP:conf/ilp/LehmannH09}
	Jens Lehmann and Christoph Haase.
	\newblock Ideal downward refinement in the $\mathcal{EL}$ description logic.
	\newblock In {\em Proceedings of the 19th International Conference on Inductive
		Logic Programming, {ILP} 2009}, pages 73--87. Springer, 2009.
	
	\bibitem{DBLP:journals/ml/LehmannH10}
	Jens Lehmann and Pascal Hitzler.
	\newblock Concept learning in description logics using refinement operators.
	\newblock {\em Machine Learning}, 78:203--250, 2010.
	
	\bibitem{DBLP:conf/dlog/Lisi12}
	Francesca~A. Lisi.
	\newblock A formal characterization of concept learning in description logics.
	\newblock In {\em Proceedings of the 25th International Workshop on Description
		Logics, {DL} 2012}. CEUR-WS.org, 2012.
	
	\bibitem{Lisi15}
	Francesca~A. Lisi and Umberto Straccia.
	\newblock Learning in description logics with fuzzy concrete domains.
	\newblock {\em Fundamenta Informaticae}, 140(3-4):373--391, 2015.
	
	\bibitem{DBLP:journals/ai/LiuLMW11}
	Hongkai Liu, Carsten Lutz, Maja Milicic, and Frank Wolter.
	\newblock Foundations of instance level updates in expressive description
	logics.
	\newblock {\em Artif. Intell.}, 175(18):2170--2197, 2011.
	
	\bibitem{TBoxpaper}
	Carsten Lutz, Robert Piro, and Frank Wolter.
	\newblock Description logic tboxes: Model-theoretic characterizations and
	rewritability.
	\newblock In {\em Proceedings of the 22nd International Joint Conference on
		Artificial Intelligence, {IJCAI} 2011}, pages 983--988. {IJCAI/AAAI}, 2011.
	
	\bibitem{DBLP:conf/kr/LutzSW12}
	Carsten Lutz, Inan{\c{c}} Seylan, and Frank Wolter.
	\newblock An automata-theoretic approach to uniform interpolation and
	approximation in the description logic {EL}.
	\newblock In {\em Proceedings of the 13th International Conference on
		Principles of Knowledge Representation and Reasoning, {KR} 2012}. {AAAI}
	Press, 2012.
	
	\bibitem{DBLP:journals/lmcs/LutzSW19}
	Carsten Lutz, Inan{\c{c}} Seylan, and Frank Wolter.
	\newblock The data complexity of ontology-mediated queries with closed
	predicates.
	\newblock {\em Logical Methods in Computer Science}, 15(3), 2019.
	
	\bibitem{DBLP:conf/ijcai/LutzW11}
	Carsten Lutz and Frank Wolter.
	\newblock Foundations for uniform interpolation and forgetting in expressive
	description logics.
	\newblock In {\em Proceedings of the 22nd International Joint Conference on
		Artificial Intelligence, {IJCAI} 2011}, pages 989--995. {IJCAI/AAAI}, 2011.
	
	\bibitem{GabMaks}
	Larisa Maksimova and Dov Gabbay.
	\newblock {\em Interpolation and Definability in Modal and Intuitionistic
		Logics}.
	\newblock Clarendon Press, 2005.
	
	\bibitem{DBLP:conf/amast/Marx98}
	Maarten Marx.
	\newblock Interpolation in modal logic.
	\newblock In {\em Proceedings of the 7th International Conference on Algebraic
		Methodology and Software Technology, {AMAST} 1998}, pages 154--163. Springer,
	1998.
	
	\bibitem{DBLP:journals/ndjfl/MarxA98}
	Maarten Marx and Carlos Areces.
	\newblock Failure of interpolation in combined modal logics.
	\newblock {\em Notre Dame J. Formal Log.}, 39(2):253--273, 1998.
	
	\bibitem{DBLP:conf/cav/McMillan03}
	Kenneth~L. McMillan.
	\newblock Interpolation and sat-based model checking.
	\newblock In {\em Proceedings of the 15th International Conference on Computer
		Aided Verification, {CAV} 2003}, pages 1--13. Springer, 2003.
	
	\bibitem{DBLP:journals/ai/NikitinaR14}
	Nadeschda Nikitina and Sebastian Rudolph.
	\newblock ({N}on-)succinctness of uniform interpolants of general terminologies
	in the description logic $\mathcal{EL}$.
	\newblock {\em Artif. Intell.}, 215:120--140, 2014.
	
	\bibitem{Pigozzi71}
	D.~Pigozzi.
	\newblock Amalgamation, congruence-extension, and interpolation properties in
	algebras.
	\newblock {\em Algebra Univers.}, (1):269--349, 1971.
	
	\bibitem{DBLP:journals/jsyml/Pitts92}
	Andrew~M. Pitts.
	\newblock On an interpretation of second order quantification in first order
	intuitionistic propositional logic.
	\newblock {\em J. Symb. Log.}, 57(1):33--52, 1992.
	
	\bibitem{DBLP:journals/lmcs/Place18}
	Thomas Place.
	\newblock Separating regular languages with two quantifier alternations.
	\newblock {\em Log. Methods Comput. Sci.}, 14(4), 2018.
	
	\bibitem{DBLP:journals/corr/PlaceZ14}
	Thomas Place and Marc Zeitoun.
	\newblock Separating regular languages with first-order logic.
	\newblock {\em Log. Methods Comput. Sci.}, 12(1), 2016.
	
	\bibitem{DBLP:journals/tocl/PlaceZ20}
	Thomas Place and Marc Zeitoun.
	\newblock Adding successor: {A} transfer theorem for separation and covering.
	\newblock {\em {ACM} Trans. Comput. Log.}, 21(2):9:1--9:45, 2020.
	
	\bibitem{DBLP:journals/jphil/Rautenberg83}
	Wolfgang Rautenberg.
	\newblock Modal tableau calculi and interpolation.
	\newblock {\em J. Philos. Log.}, 12(4):403--423, 1983.
	
	\bibitem{DBLP:journals/fgcs/RizzoFd20}
	Giuseppe Rizzo, Nicola Fanizzi, and Claudia d'Amato.
	\newblock Class expression induction as concept space exploration: From dl-foil
	to dl-focl.
	\newblock {\em Future Gener. Comput. Syst.}, 108:256--272, 2020.
	
	\bibitem{DBLP:conf/ekaw/0001FdE18}
	Giuseppe Rizzo, Nicola Fanizzi, Claudia d'Amato, and Floriana Esposito.
	\newblock A framework for tackling myopia in concept learning on the web of
	data.
	\newblock In {\em Proceedings of the 21st International Conference on Knowledge
		Engineering and Knowledge Management, {EKAW} 2018}, pages 338--354. Springer,
	2018.
	
	\bibitem{DBLP:conf/aaai/SarkerH19}
	Md.~Kamruzzaman Sarker and Pascal Hitzler.
	\newblock Efficient concept induction for description logics.
	\newblock In {\em Proceedings of the 33rd {AAAI} Conference on Artificial
		Intelligence, {AAAI} 2019}, pages 3036--3043. {AAAI} Press, 2019.
	
	\bibitem{DBLP:conf/ijcai/SeylanFB09}
	Inan{\c{c}} Seylan, Enrico Franconi, and Jos de~Bruijn.
	\newblock Effective query rewriting with ontologies over dboxes.
	\newblock In {\em Proceedings of the 21st International Joint Conference on
		Artificial Intelligence, {IJCAI} 2009}, pages 923--925, 2009.
	
	\bibitem{DBLP:journals/jsyml/Cate05}
	Balder ten Cate.
	\newblock Interpolation for extended modal languages.
	\newblock {\em J. Symb. Log.}, 70(1):223--234, 2005.
	
	\bibitem{Ten05}
	Balder ten Cate.
	\newblock {\em Model theory for extended modal languages}.
	\newblock PhD thesis, University of Amsterdam, 2005.
	\newblock ILLC Dissertation Series DS-2005-01.
	
	\bibitem{baldernote}
	Balder ten Cate.
	\newblock Lyndon interpolation for modal logic via type elimination sequences.
	\newblock Technical report, ILLC, Amsterdam, 01 2022.
	
	\bibitem{TenEtAl06}
	Balder ten Cate, Willem Conradie, Maarten Marx, and Yde Venema.
	\newblock Definitorially complete description logics.
	\newblock In {\em Proceedings of the 10th International Conference on
		Principles of Knowledge Representation and Reasoning, {KR} 2006}, pages
	79--89. {AAAI} Press, 2006.
	
	\bibitem{TenEtAl13}
	Balder ten Cate, Enrico Franconi, and Inan{\c{c}} Seylan.
	\newblock Beth definability in expressive description logics.
	\newblock {\em J. Artif. Intell. Res.}, 48:347--414, 2013.
	
	\bibitem{DBLP:series/synthesis/2011Toman}
	David Toman and Grant~E. Weddell.
	\newblock {\em Fundamentals of Physical Design and Query Compilation}.
	\newblock Synthesis Lectures on Data Management. Morgan {\&} Claypool
	Publishers, 2011.
	
	\bibitem{TomanWed20}
	David Toman and Grant~E. Weddell.
	\newblock First order rewritability for ontology mediated querying in
	horn-dlfd.
	\newblock In {\em Proceedings of the 33rd International Workshop on Description
		Logics, {DL} 2020}. CEUR-WS.org, 2020.
	
	\bibitem{TomWed21}
	David Toman and Grant~E. Weddell.
	\newblock {{FO} Rewritability for {OMQ} using Beth Definability and
		Interpolation}.
	\newblock In {\em Proceedings of the 34th International Workshop on Description
		Logics, {DL} 2021}. CEUR-WS.org, 2021.
	
	\bibitem{DBLP:journals/jmlr/TranDGM17}
	An~C. Tran, Jens Dietrich, Hans~W. Guesgen, and Stephen Marsland.
	\newblock Parallel symmetric class expression learning.
	\newblock {\em J. Mach. Learn. Res.}, 18:64:1--64:34, 2017.
	
	\bibitem{van2008many}
	Johan Van~Benthem.
	\newblock The many faces of interpolation.
	\newblock {\em Synthese}, pages 451--460, 2008.
	
	\bibitem{visser1996uniform}
	Albert Visser et~al.
	\newblock Uniform interpolation and layered bisimulation.
	\newblock In {\em G{\"o}del'96: Logical foundations of mathematics, computer
		science and physics---Kurt G{\"o}del's legacy}, pages 139--164. Association
	for Symbolic Logic, 1996.
	
	\bibitem{DBLP:conf/ijcai/ZhaoS16}
	Yizheng Zhao and Renate~A. Schmidt.
	\newblock Forgetting concept and role symbols in $\mathcal{ALCOIH}$-ontologies.
	\newblock In {\em Proceedings of the 25th International Joint Conference on
		Artificial Intelligence, {IJCAI} 2016}, pages 1345--1353. {IJCAI/AAAI} Press,
	2016.
	
\end{thebibliography}

\end{document}